\let\oldvec\vec
\let\vec\oldvec
\newcommand{\transto}[1]{\xrightarrow{\,{#1}\,}}
\def\>{\ensuremath{\rangle}}
\def\<{\ensuremath{\langle}}
\def\ra{\ensuremath{\rightarrow}}
\newcommand {\supp } {{\rm supp}}
\newcommand {\conf}{{\rm conf}}
\newcommand {\E } {{\mathcal{E}}}
\newcommand{\hs}{\mathcal{H}}
\newcommand {\tr} {{\mathrm{tr}}}
\newcommand {\trans} {{tr}}
\title{Quantum Privacy-Preserving Data Analytics}
\author{Shenggang Ying\inst{1}, Mingsheng Ying\inst{1,2,3}, Yuan Feng\inst{1}}
\institute{University of Technology Sydney, Australia
\and Institute of Software, Chinese Academy of Sciences, China
\and Tsinghua University, China
}
\begin{document}

\maketitle

\begin{abstract}
Data analytics (such as association rule mining and decision tree mining) can discover useful statistical knowledge from a big data set. But protecting the privacy of the data provider and the data user in the process of analytics is a serious issue. Usually, the privacy of both parties cannot be fully protected simultaneously by a classical algorithm. In this paper, we present a quantum protocol for data mining that can much better protect privacy than the known classical algorithms: (1) if both the data provider and the data user are honest, the data user can know nothing about the database except the statistical results, and the data provider can get nearly no information about the results mined by the data user;  (2)   if the data user is dishonest and tries to disclose private information of the other, she/he will be detected with a high probability; (3) if the data provider tries to disclose the privacy of the data user, she/he cannot get any useful information since the data user hides his privacy among noises.
\end{abstract}

\section{Introduction}

\ \ \ \ \textbf{Privacy-preserving data analytics}: Data analytics has become an indispensable technology in the big data era. Mining statistical knowledge from a big data set is one of the most important tasks of data analytics. A typical example is association rule mining, which was introduced to find useful links from a large set of transactions in a supermarket \cite{AgrawalIS1993}. An association rule is a probabilistic implication $A\Rightarrow B$, which means event $A$ implies event $B$ with a high probability. Another example is to mine decision trees \cite{Quinlan1986}, which are a core model of classification problems.

Data analytics has numerous applications in the areas like market basket problem, scientific data analysis, web mining, just name a few \cite{AgrawalIS1993,KotsiantisK2006}.
In practical applications, one major issue arises: how to protect the privacy of each individual in a database while mining the statistical knowledge? For instance, the privacy of each patient should not be leaked during mining an association rule or a decision tree for medical diagnosis from a database of patients.
To address this issue, various algorithms for privacy-preserving data mining has been developed in the last twenty years \cite{EstivillB1999,EvfimievskiSAG2004,KarguptaDW2003,RizviH2002}. In these algorithms, however, the privacy of the data provider and the data user cannot be protected simultaneously.

{\vskip 4pt}

\textbf{Quantum computing and crytopgraphy}: Since 1990's, various quantum algorithms have been discovered and proved to be much faster than the known classical algorithms for the same tasks. For example, Grover's quantum search algorithm \cite{Grover1996} can find the target element in a database in $O(\sqrt{N})$ oracle calls. Quantum counting algorithm \cite{BrassardHT1998} has a quadratic speed-up over classical algorithms as well. More recently, several quantum machine learning algorithms \cite{LloydMR2013ML,LloydMR2014PA} have been presented based on quantum random access memory \cite{GiovannettiLM2008qram}, and they can achieve an exponential speed-up over classical algorithms.

Several quantum protocols that can better protect privacy have also been found; for example, the famous quantum key distribution protocol BB84 \cite{BB84}, quantum private queries \cite{GiovannettiLM2008PQ}, and revocable quantum timed-release encryption \cite{Unruh15}.

{\vskip 4pt}

\textbf{Contribution of this paper}: In this paper, we present a quantum protocol for mining statistical knowledge in a database, such as association rules and decision trees. This protocol can protect the privacy of both the data provider and data user provided they are honest. Furthermore, {  the privacy of both data provider and data user is protected: no useful private information will be disclosed}.

The basic idea of our protocol can be described as follows. {  The basic idea for the data provider is to employ tests to detect attacks.} Without any influence on the computational results, the data provider randomly detects attacks from the data user. Meanwhile, {  the basic idea for the data user is to hide his privacy among noises.} The data user introduces noises into her/his private query functions, and these noises in different steps cancel each other if the data provider follows the protocol strictly. The novelty is that the privacy of both parties is preserved by techniques in quantum computing and cryptography, but can hardly be achieved by classical methods.

{\vskip 4pt}

\textbf{Structure of the paper}: For convenience of the reader, preliminaries and notations are introduced in Section \ref{Sec:Preliminaries}. We present our quantum protocol in three steps: we first explain the design idea in Section \ref{Sec:BasicIdea}, an outline of the protocol is then shown in Section \ref{main protocol}, and we examine some details in the execution of the protocol in Section \ref{Sec:ProtocolExecution}.
The correctness of the protocol is proven in Section \ref{Sec:Correctness}. The privacy analysis is given in Section \ref{Sec:PrivacyAlice} for the data provider and in Section \ref{Sec:PrivacyBob} for the data user. The complexity analysis is given in Section \ref{Sec:Complexity}. Some further discussions are presented in Section \ref{Sec:Discussions}. All the proofs of lemmas and theorems are given in the Appendix.

\section{Preliminaries}\label{Sec:Preliminaries}

\subsection{Association Rule Mining}

As pointed out above, we are going to develop a quantum algorithm for association rule mining. {  (The application for decision tree learning is presented in Section \ref{sec:DecisionTree}.)} For convenience of the reader, in this subsection, we briefly review association rule mining; for more details, see \cite{AgrawalIS1993}. Let $S=\{1,2,\cdots, k\}$ be a set of \textbf{items}, where each index $i\in S$ stands for an item; for instance, $1$ may stand for ``Apple'', $2$ for ``Orange''. A \textbf{transaction} or \textbf{itemset} $\trans$ is a set of items, i.e., $\trans\subseteq S$. Moreover, an $m$-\textbf{itemset} is an itemset which has exactly $m$ items. For example, a transaction can be the items a customer buys in one purchase. In order to store a transaction into a computer or a database, a transaction $\trans$ is represented by a string $\pi=\pi_1 \pi_2\cdots \pi_k\in\{0,1\}^k$:
\begin{equation}\label{eq:TransactionString}
    \pi_\iota = \begin{cases}
        1, & \iota \in \trans,\\
        0, & \iota \not\in \trans.
    \end{cases}
\end{equation}
In this paper, we always use strings in $\{0,1\}^k$ to represent transactions or itemsets based on Eq. \eqref{eq:TransactionString}.
Then we can talk about the \textbf{inclusion relation} $\pi\subseteq \tau$ for two strings $\pi,\tau\in\{0,1\}^k$ since they refer to two sets (transactions or itemssets): $\pi\subseteq \tau \Leftrightarrow \pi_\iota\leq \tau_\iota\ {\rm for\ every}\ \iota\in S.$
Moreover, we can define other set-theoretic operations and relations of $\pi$ and $\tau$.

A \textbf{database} $D$ of transactions is a set $D=\<d_0,d_1,\cdots,d_{N-1}\>$, where $d_j\in\{0,1\}^k$ is a transaction and $N$ is the size of $D$, i.e., the number of transactions in $D$. The \textbf{support} $\supp(d)$ of an itemset $d$ is defined to be its frequency in database $D$:
\begin{equation}\label{eq:fD}
    \supp(d) = f^{(d)}(D) =\frac 1N   \sum_{j=0}^{N-1} f^{(d)}(d_j),
\end{equation}
where
\begin{equation}\label{eq:fD1}
    f^{(d)}(d_j) = \begin{cases}
        1, & d\subseteq d_j,\\
        0, & d\not\subseteq d_j.
    \end{cases}
\end{equation}
The superscript $(d)$ of $f^{(d)}(\cdot)$ may be omitted if the itemset $d$ is clear from the context or not explicitly specified.

A \textbf{rule} is a probabilistic implication between two disjoint itemsets in $D$. The support and confidence of a rule $\pi\Rightarrow \tau$ are defined by $$\supp(\pi\Rightarrow \tau)=\supp(\pi\cup\tau),$$  \begin{equation*}
    \conf(\pi\Rightarrow \tau) =\frac{\supp(\pi\cup\tau)}{\supp(\pi)},
\end{equation*} respectively.
Roughly speaking, the support of a rule indicates its importance (frequency) in a database. The confidence of a rule means its correctness probability; more precisely, $\conf(\pi\Rightarrow \tau)$ is the probability that if $\pi$ appears in $d$ ($\pi\subseteq d$), then $\tau$ will as well appear in $d$ ($\tau\subseteq d$).

The task of association rule mining is to find all rules with high support and confidence, namely \textbf{association rule}.
\begin{definition}[Association Rule]
    An association rule $\pi\Rightarrow \tau$ is a relation between two itemsets $\pi,\tau\subseteq S$, which satisfies the following conditions:
    \begin{itemize}
      \item $\pi\cap\tau = \emptyset$,
      \item $\supp(\pi\Rightarrow\tau)>s_{\min}$,
      \item $\conf(\pi\Rightarrow\tau)>c_{\min}$,
    \end{itemize}
    where $s_{\min}$ (resp. $c_{\min}$) is the preset (constant) threshold for supports (resp. confidences).
\end{definition}
The algorithms for association rule mining in the literature \cite{AgrawalIS1993} are usually divided into the following two steps:
\begin{enumerate}
  \item Find all \textbf{frequent itemsets} $\xi$, which are itemsets with high support/frequency, i.e., $\supp(\xi)>s_{\min}$.
  \item For each frequent itemsset $\xi$, find all association rules $\pi\Rightarrow \tau$ with $\pi\cup\tau=\xi$.
\end{enumerate}
Moreover, only the first step is crucial because once it is done, the second becomes very easy by the definition of confidence. 
Note $\xi\subseteq d$ implies $\tau\subseteq d$
for any $\tau\subseteq \xi$. So, the supports of itemsets are non-increasing: $\tau\subseteq\xi$ implies $\supp(\tau)\geq\supp(\xi)$. Thus, each frequent $m$-itemset is a superset of some frequent $(m-1)$-itemset, which leads to the level-wise algorithm \cite{AgrawalIS1993}:
\begin{itemize}
  \item Initialization. Let $F_1 = \{\{i\}:  i = 1, 2,\cdots,k\}$ be the set of all 1-itemsets. Set $F_l=\emptyset$ for all $l>1$, and $G_{j}=\emptyset$ for all $j\geq 1$.
  \item {  Induction on $l$ starting from $l=1$. If $F_l=\emptyset$, output all itemsets in $G_{j}$ for all $j$, and terminate the algorithm. Otherwise, do the following steps for every $\tau\in F_l$:}
    \begin{enumerate}
      \item Compute $\supp(\tau)$.
      \item If $\supp(\tau)>s_{\min}$, add $\tau$ into $G_l$ and all supersets of $\tau$ with $l+1$ items into $F_{l+1}$.
    \end{enumerate}
\end{itemize}

One can see that in the above algorithm, the key step is to compute the support of a given itemset. {  In particular, only this step may cause disclosure of private data.} So, in this paper, we will focus on it. Note that computing the support of a given itemset can be done by quantum counting algorithm \cite{BrassardHT1998}. Our work is to develop a privacy-preserving extension of quantum counting for this task on a centralized database.

\subsection{Quantum Database}

To employ quantum algorithms for mining classical database, we first recall  from \cite{GiovannettiLM2008qram} how a (classical) database can be stored in a quantum computer.
\begin{definition}[Quantum database]\label{def:qdb} Let $D=\<d_0,d_1,...,d_{N-1}\>$ be a database where $d_j \in \{0,1\}^k$ for all $j$. For convenience, we always assume that $N=2^n$. Then the quantum database corresponding to $D$ is a unitary transformation $O_D$ on $n+k$ qubits defined as follows:
        $$O_D|x\>|\tau\> = |x\>|\tau\oplus d_x\>$$
    for every $x\in \{0,1\}^n$ and $\tau\in\{0,1\}^k$.
    Here we identify $x$ with the integer it represents (by binary representation).
    \end{definition}

In the above definition, $x$ is used to denote the address of a data cell, and $d_x$ is the content stored in data cell $x$. The quantum database $O_D$ can be seen as a quantum oracle; for instance, if a data user queries it with a basis state $|i\rangle|0\rangle$, the oracle will return $|i\rangle|d_i\rangle$. This is equivalent to querying the classical database $D$ with address $i$. More interestingly, a data user can also query $O_D$ with a superposition $\sum_i \alpha_i|i\>|0\>$ of addresses, and it will return
\begin{equation}
    O_D\left(\sum_i \alpha_i|i\>|0\>\right) = \sum_i \alpha_i|i\>|d_i\>,
\end{equation}
a (superposed) state which in principle contains \emph{all} transactions of the database. Note also that, however, an attempt to read out a particular transaction (by performing a quantum measurement) will cause collapse of the state into one where all information of other transactions is completely destroyed.

It is worth mentioning that in our protocol the quantum database $O_D$ will be permuted to $U_{D}(y)$ (see Eq. \eqref{eq:UDx}) to improve the data provider's privacy.

\subsection{Privacy-preserving Data Analytics}
Now let us consider the following problem:
\begin{problem}[Privacy-preserving counting]\label{Prob:count}
 Suppose Alice holds a database $D=\<d_0,d_1,...,d_{N-1}\>$ where $d_j \in \{0,1\}^k$ for all $j$. For a function $f : \{0,1\}^k \rightarrow \{0,1\}$, Bob wants to compute
 \begin{equation}\label{eq:fD2}
 f(D) =   \frac{1}{N}\sum_{j=0}^{N-1} f(d_j),
 \end{equation}
and after the computation,
        \begin{enumerate}
            \item \emph{Privacy preserving for Alice}: for each $j$, Bob will not know $d_j$ (even approximatively);
            \item \emph{Privacy preserving for Bob}: Alice will not know the function $f$.
        \end{enumerate}
\end{problem}

Obviously, whenever $f$ is taken to be $f^{(d)}$ defined in Eq. (\ref{eq:fD1}), then $f(D)$ is the support of item set $d$.

The majority of classical algorithms in the literature aim at protecting Alice's privacy. The idea is to publish a distorted database $D'$ so that Bob can compute $f$ over it. Thus, the problem becomes how to modify $D$ to $D'$ with a high accuracy of statistical properties. The suggested solutions include: (1) modify each transaction independently; for instance, the occurrence of each item in a transaction is flipped randomly \cite{EvfimievskiSAG2004}; (2) replace some items by others without changing the number of items in a transaction \cite{RizviH2002}; (3) modify transactions within the entire database; for instance, swap elements between different transactions \cite{EstivillB1999}.

It is easy to see that the function in Eq.(\ref{eq:fD}) can be efficiently computed by quantum counting algorithm \cite{BrassardHT1998} with the corresponding quantum database $O_D$.
However, a simple application of quantum counting is unable to achieve the goal of privacy protection. {  It has to be extended to fit the new task.}

\section{Basic Ideas of the Protocol}\label{Sec:BasicIdea}

The overall aim of this paper is to develop a quantum protocol solving Problem~\ref{Prob:count}. In this section, we introduce the basic ideas employed in the design of our protocol, which is essentially the quantum counting algorithm \cite{BrassardHT1998} with new strategies for privacy preserving. Quantum counting is a combination of controlled Grover iterations modified from Grover search algorithm \cite{Grover1996} and quantum Fourier transform \cite{NielsenC2000}. As quantum Fourier transform is not applied to the original data set, privacy preserving in Grover search is the core of our protocol.

To explain the ideas more clearly, we elaborate in the following a quantum algorithm for \emph{privacy-preserving search}, a problem which can be regarded as a special case of
Problem~\ref{Prob:count} where $f(D)$ returns an index (if there is any) $j$ such that $f(d_j)=1$.


\subsection{Two-Party Grover Search}\label{sec:tpgrover}
 For privacy preserving, we adapt Grover search algorithm~\cite{Grover1996} to the two-party scenario. To simplify the presentation, we omit
the detail of communication between Alice and Bob, and assume implicitly that when Alice (resp. Bob) performs a quantum operation, the corresponding quantum system has been sent to her (resp. him) by Bob (resp. Alice) or prepared by herself (resp. himself). The algorithm goes as follows:

\begin{itemize}
  \item Bob prepares the initial state $|+\>_{q_a}^{\otimes n}|0\>_{q_a}^{\otimes k}=\frac{1}{\sqrt{N}}\sum_{j=0}^{N-1} |j\>_{q_a}|\vec{0}\>_{q_d}$
  where $q_a$ denotes the ($n$-qubit) address system while $q_d$ the ($k$-qubit)  data system, $N=2^n$, and $\vec{0}$ is the $k$-length binary representation of $0$.
  \item Repeat the following steps for $T=\lceil \frac{\pi}{4}\sqrt{N} \rceil$ times:
    \begin{enumerate}
      \item Alice applies the database $O_D$ on systems $q_a$ and $q_d$. Let $|\phi_0\>=\sum_j \alpha_j|j\>|\vec{0}\>$ be the initial state of the current iteration. Then now it becomes
        \begin{equation*}
            |\phi_1\>=O_D|\phi_0\> = \sum_j \alpha_j|j\>|d_j\>.
        \end{equation*}
      \item Bob applies $U_f$, obtaining
        \begin{equation*}
            |\phi_2\> = U_f|\phi_1\> = \sum_j (-1)^{f(d_j)}\alpha_j|j\>|d_j\>,
        \end{equation*}
        where $U_f$ is the oracle defined by:
        \begin{equation}\label{eq:Uf}
            U_f:|j\>|\tau\> \mapsto (-1)^{f(\tau)}|j\>|\tau\>.
        \end{equation}
      \item Alice applies $O_D$ again to disentangle the systems $q_a$ and $q_d$, reaching
        \begin{equation*}
            |\phi_3\> = O_D|\phi_2\> = \sum_j (-1)^{f(d_j)}\alpha_j|j\>|\vec{0}\>.
        \end{equation*}
      \item Bob performs $G$ on system $q_a$ only to update the amplitude, obtaining
        \begin{equation*}
            |\phi'_0\> = G\otimes I_{q_d}|\phi_3\> = \sum_j \alpha'_j|j\>|\vec{0}\>,
        \end{equation*}
        where $G = I - 2|+\>^{\otimes n}\<+|^{\otimes n}$. The state $|\phi'_0\>$ will be the initial state of the next iteration.
    \end{enumerate}
  \item Another iteration of the above loop is executed with $U_f$ in Eq.(\ref{eq:Uf}) replaced by $U'_{f}$, and it is applied on $q_a$, $q_d$ and an auxiliary qubit system $q_{g}$ which has been set to $|0\>$, where
    \begin{equation}\label{eq:Uf2}
        U'_f:|j\>_{q_a}|\tau\>_{q_d}|x\>_{q_g} \mapsto|j\>_{q_a}|\tau\>_{q_d}|x\oplus f(\tau)\>_{q_g}.
    \end{equation}
 \item Bob measures $q_a$ and $q_g$, and reports the measurement outcome.

\end{itemize}

By a similar argument as that given in \cite{Grover1996}, we can show that the above algorithm returns an index $j$ with $f(d_j)=1$ with a high probability, provided that both Alice and Bob follow the protocol honestly.

\subsection{Possible Attacks}

\subsubsection{Bob's attack:} An obvious Bob's attack for the above algorithm is to send state $|j\>|\vec{0}\>$ for a chosen $j$ to Alice before Step 1 in the loop. Then an honest Alice will send back to him $|j\>|d_j\>$, from which he is able to successfully disclose $d_j$. Note that in one run of the algorithm, Bob may cheat $2T$ times. Thus if it is called $M$ times as a procedure in, say, association-rule mining protocol, Bob will obtain complete information of $2TM$ transactions of the database.

\subsubsection{Alice's attack:} Similarly, Alice can cheat by sending some chosen states to Bob to retrieve information about the query function $f$. To see this, note that in both association rule and decision tree mining, $f(\vec{1})=1$ for all legal $f$, where $\vec{1}$ is the $k$-length binary representation of $2^k-1$. Now suppose Alice would like to know the value of $f(\tau)$ for some $\tau\in \{0,1\}^k$. Then she chooses to send Bob the
state $\frac{1}{\sqrt{2}}|0\>^{\otimes (n-1)}(|0\>|\tau\>+|1\>|\vec{1}\>)$ before Step 2 of the loop. Note that
$$    \frac{1}{\sqrt{2}}|0\>^{\otimes (n-1)}(|0\>|\tau\>+|1\>|\vec{1}\>) \transto{U_f} \frac{1}{\sqrt{2}}|0\>^{\otimes (n-1)}[(-1)^{f(\tau)}|0\>|\tau\>-|1\>|\vec{1}\>].$$
Now Alice can obtain $f(\tau)$ by performing a quantum measurement on systems $q_a$ and $q_d$, since the states $|0\>|\tau\>-|1\>|\vec{1}\>$ and $|0\>|\tau\>+|1\>|\vec{1}\>$ are orthogonal.

\subsection{Privacy Preserving in Quantum Search}\label{ideas-0}

This subsection is devoted to an intuitive explanation of the techniques we are going to employ in the protocol presented in Section~\ref{main protocol} to protect the privacy of both Alice and Bob.

\subsubsection{Alice's strategy:} {  The idea for protecting Alice's privacy is to employ tests to detect Bob's attacks. Originally it is hard to distinguish Bob's attacks from honest actions, since Alice does not know Bob's function $f$. Fortunately, if Bob is honest, two same states will still be the same after Bob's action $U_f$. Suppose Alice randomly generates two different strings $\mu,\nu\in\{0,1\}^k$, and $f(\mu)=f(\nu)$ (resp. $f(\mu)\neq f(\nu)$). Alice sends Bob two same states $\frac{1}{\sqrt{2}}|0\>^{\otimes n-1}(|0\>|\mu\>+|1\>|\nu\>)$ to Bob. Then she will receive two same states $\frac{1}{\sqrt{2}}|0\>^{\otimes n-1}(|0\>|\mu\>+|1\>|\nu\>)$ (resp. $\frac{1}{\sqrt{2}}|0\>^{\otimes n-1}(|0\>|\mu\>-|1\>|\nu\>)$) from Bob. After disentangling the data qubits, Alice gets two copies of  $|0\>^{\otimes n-1}|+\>|\vec{0}\>$ (resp. $|0\>^{\otimes n-1}|-\>|\vec{0}\>$). Finally Alice performs measurements on the last address qubits with basis $\{|+\>,|-\>\}$, and gets outcomes $+,+$ (resp. $-,-$).

But if Bob is dishonest and performs measurements on data qubits to read information, each state that Alice receives will be $|0\>^{\otimes n-1}|0\>|\mu\>$ or $|0\>^{\otimes n-1}|1\>|\nu\>$ independently. Finally the measurement outcomes on the last address qubits will be $+,-$ or $-,+$ with probability 0.5. These outcomes can be distinguished from those of honest actions.}

\subsubsection{Bob's strategy:} The idea for protecting Bob's privacy is to add noises which cancel each other when Alice follows the protocol honestly. Recall that the net effect of a single iteration of the loop in the algorithm presented in Section~\ref{sec:tpgrover} is $\bar{G}O_DU_f O_D$ where $\bar{G} = G\otimes I_d$. Then in four consecutive iterations, for instance, if the four calls of oracle $U_f$ at Step 2 are replaced by $U$, $I_{a,d}$, $U$, $I_{a,d}$, respectively, where $U$ is any unitary operator with $U=U^\dag$ and $I_{a,d}$ is the identity operator on $q_a$ and $q_d$, then the effect of the four modified iterations becomes
\begin{equation}\label{law-0}\bar{G}O_DI_{a,d} O_D\bar{G}O_DU O_D\bar{G}O_DI_{a,d} O_D\bar{G}O_DU O_D = I_{a,d}.\end{equation}
More generally, if Bob needs to use $U_f$ for $T$ times, he can insert $T$ operators $I_{a,d}$ and $U_{f'}$ with different $f'\neq f$ between the $T$ occurrences of $U_f$. By Eq.(\ref{law-0}), we see that half of the information Alice gets is noise and not related to $f$, and thus she cannot recover $f$.

\section{Protocol}\label{main protocol}

\subsection{Main Protocol}
\begin{algorithm}
    \SetKwData{Left}{left}\SetKwData{This}{this}\SetKwData{Up}{up}
    \SetKwFunction{Union}{Union}\SetKwFunction{FindCompress}{FindCompress}
    \SetKwInOut{Input}{input}\SetKwInOut{Output}{Output}\SetKwInOut{Parameter}{Parameters}

    \Parameter{Number of address qubits $n$ and number of data qubits $k$ determined by the database $D$, and number of control qubits $t$ determined by Bob's strategy in Section \ref{Sec:BobStrategyInner}.}
    \Output{  two numbers $s_1,s_2\in [0,1]$, which are approximately $f(D)$.}
    \Begin{
    Alice generates $y\in\{0,1\}^n$ uniformly at random, and constructs a modified database $U_D(y)$ from $O_D$; see Eq. \eqref{eq:UDx}\;\label{line:B1}
    Bob prepares two identical states $|+\>_{q_{c1}}^{\otimes t}|+\>_{q_{a1}}^{\otimes n}|\vec{0}\>_{q_{d1}}$ and $|+\>_{q_{c2}}^{\otimes t}|+\>_{q_{a2}}^{\otimes n}|\vec{0}\>_{q_{d2}}$. Here
    $q_{ci}$, $q_{ai}$, and $q_{di}$ denote the control, address, and data qubits, respectively\;
    For $i=0, \ldots, T-1$, where $T=2^t$, do \ref{alg:ProtocolLoop}($i$)\;\label{line:B4}
    Alice generates $r\in(0,1)$ uniformly at random\;
    If $r\leq p$, Alice employs procedure \ref{alg:TestACDG} to test whether Bob is honest. If dishonesty is detected, she terminates the entire protocol; otherwise, she sends a message ``Repeat'' to Bob\;\label{line:BTerminating}
    Alice applies $U_D(y)$ on $q_{c1}$, $q_{a1}$, $q_{d1}$\;\label{line:B8}
    Bob performs $U'_f$ in Eq. \eqref{eq:Uf2} on $q_{a1}$,  $q_{d1}$, $q_{g1}$  where
    $q_{g1}$ is an auxiliary qubit which has been set to $|0\>$\;\label{line:B9}
    Alice and Bob repeat Step \ref{line:B8} to Step \ref{line:B9} for $q_{c2}$, $q_{a2}$, $q_{d2}$. The new blank ancilla qubit is denoted $q_{g2}$\;\label{line:B10}
    If $p<r\leq 2p$, Alice first sends a message ``Repeat'' to Bob, and then employs procedure \ref{alg:TestACDG} to test whether Bob is honest. If dishonesty is detected, Alice terminates the entire protocol\;\label{line:2p}
    Alice applies $U_D(y)$ on  $q_{a1}$, $q_{d1}$, and $q_{a2}$, $q_{d2}$ respectively\;
    Bob performs measurements on $q_{g1}$ and $q_{g2}$ to get outcome $g_1,g_2\in\{0,1\}$\;\label{line:BMg}
    Bob performs quantum Fourier transform, followed by measurements on $q_{c1}$ to get outcome $\theta\in\{0,1,\cdots, T-1\}$\;
    Bob computes $s_{i} = \sin^2(\theta\pi/T)$ if $g_i=1$, or $\cos^2(\theta\pi/T)$ if $g_i=0$, for $i=1,2$.\label{line:BFourier}
    }
    \caption{Main protocol for privacy-preserving quantum counting on a centralized database.}\label{alg:ProtocolInner}
\end{algorithm}

We are now ready to present our main protocol in Algorithm \ref{alg:ProtocolInner}, which computes $f(D)$ for a given function $f$ by applying procedure \ref{alg:ProtocolLoop}. In the protocol,
\begin{itemize}
  \item At Step \ref{line:B1}, $U_D(y)$ is defined as follows:
    \begin{equation}\label{eq:UDx}
        U_D(y) = (X^y\otimes I_d) O_D (X^y\otimes I_d).
    \end{equation}
    where $ X^y = X^{y_1}\otimes X^{y_2}\otimes\cdots\otimes X^{y_n}$, $X^0 = I$, and $X^1 = X$. Note that
    \begin{align*}
        U_D(y) |x\>|\tau\> = |x\>|\tau\oplus d_{x\oplus y}\>.
    \end{align*}
    This modification of $O_D$ permutes the transactions in the database to protect Alice's privacy (Compare it with $O_D$ in Definition~\ref{def:qdb}).

  \item  In the main protocol, Alice employs the test \ref{alg:TestACDG} for Bob's dishonesty with probability $2p$. Furthermore, the test is conducted either before or after Steps \ref{line:B8} to \ref{line:B10}, with equal probability. Consequently, both Step~\ref{line:BTerminating} and Step~\ref{line:2p} are executed with probability $p$. In this paper, we set $p= 0.05$.
  \item The message ``Repeat'' means that a test was or will be used, and Bob should prepare to repeat what he did for the last two copies of states.
\end{itemize}

\subsection{Grover Iteration}\label{Sec:GroverIteration}
In this subsection, we present the procedure \ref{alg:ProtocolLoop} called at Step \ref{line:B4} in Algorithm \ref{alg:ProtocolInner}, which is essentially a controlled Grover iteration. In this procedure,
\begin{itemize}
\item the functions $f_0, f_1, \ldots, f_{T-1}$ are generated by Bob using the strategy presented in Section \ref{Sec:BobStrategyInner}.
  \item for each $i$, $U_{f_i}$ is a unitary operator similarly defined as that in Eq.\eqref{eq:Uf}.
\end{itemize}

\begin{procedure}[t]
    \SetKwData{Left}{left}\SetKwData{This}{this}\SetKwData{Up}{up}
    \SetKwFunction{Union}{Union}\SetKwFunction{FindCompress}{FindCompress}
    \SetKwInOut{Input}{Input}\SetKwInOut{Output}{output}
    \Begin{
    Alice generates $r\in(0,1)$ uniformly at random\;
    If $r\leq p$, Alice employs procedure \ref{alg:TestACD1}($i$) to test whether Bob is honest. If dishonesty is detected, Alice terminates the entire protocol; otherwise, she sends a message ``Repeat'' to Bob\;\label{line:LTerminating1}
    Alice applies $U_D(y)$ and then Bob applies controlled $U_{f_i}$ on $q_{c1}$, $q_{a1}$, $q_{d1}$ with $q_{c1}$ being control qubits\;\label{line:L1}
    Alice and Bob repeat Step \ref{line:L1} for $q_{c2}$, $q_{a2}$, $q_{d2}$\;\label{line:L1r}
    If $p<r\leq 2p$, Alice first sends a message ``Repeat'' to Bob, and then employs procedure \ref{alg:TestACD1}($i$) to test whether Bob is honest. If dishonesty is detected, Alice terminates the entire protocol\;\label{line:LTerminating2}
    Alice applies $U_D(y)$ and then Bob applies controlled $G$ on $q_{c1}$, $q_{a1}$, $q_{d1}$ with $q_{c1}$ being control qubits, where $G$ is defined in Sec~\ref{sec:tpgrover}\;\label{line:L2}
    Alice and Bob repeat Step \ref{line:L2} for $q_{c2}$, $q_{a2}$, $q_{d2}$\;\label{line:L3}
    }
    \caption{GroverIteration($i$)}\label{alg:ProtocolLoop}
\end{procedure}

In the following, we discuss briefly some implementation issues for procedure \ref{alg:ProtocolLoop}.
\subsubsection{Controlled Operators.}
As said before, procedure \ref{alg:ProtocolLoop} is a controlled Grover iteration.
Two controlled unitary operators, controlled $U_{f_i}$ in Step~\ref{line:L1} and controlled $G$ in Step~\ref{line:L2}, need to be implemented.
{  Fortunately, both controlled $U_{f_i}$ and controlled $G$ can be implemented locally by Bob with $O(n+k)$ CNOT gates and Toffoli gates.}

\subsubsection{Control Qubits $q_c$.}
Note that $t=\log T$ qubits $q_c$ are used as the control bits for $U_{f_i}$ and $G$. Observe that for any unitary operator $U$,
\begin{equation}\label{eq:controlG}
    \frac{1}{\sqrt{T}}\sum_{c=0}^{T-1} |c\>U^{c} |\Psi_0\> = \frac{1}{\sqrt{T}}\sum_{c=0}^{T-1} |c\> \otimes (U^{c_0 2^{t-1}}U^{c_1 2^{t-2}}\cdots U^{c_{t-1} } |\Psi_0\>).
\end{equation}
We can use the first qubit of $q_c$ (corresponding to $c_0$) as the control for $2^{t-1}$ times {  (for loop $i=1,\cdots, 2^{t-1}$)}, and the second one for $2^{t-2}$ times {  (for loop $i=2^{t-1}+1,\cdots, 2^{t-1}+2^{t-2}$)}, and so on; see Table~\ref{Table:ControlQubit}. Note that one control qubit is enough {   to implement the controlled operators for each $i$}.
\begin{table}[t]
  \centering
  \begin{tabular}{l|c}
  \hline
  Control qubit & Loop $i$\\\hline
  None & 0 \\
  First qubit of $q_c$ & $1,\cdots, 2^{t-1}$\\
  Second qubit of $q_c$ & $2^{t-1}+1,\cdots, 2^{t-1}+2^{t-2}$\\
  \hspace{4em}\vdots & \vdots\\
  \hline
  \end{tabular}
  \caption{Control qubit for each loop $i$.}\label{Table:ControlQubit}
\end{table}
Therefore, controlled $U_{f_i}$ or $G$ is activated if and only if the state of the corresponding control qubit is $|1\>$. For instance, if $0<i\leq 2^{t-1}$, we have
\begin{equation*}
    |c\>\sum_j\alpha_j|j\>|u\>\xlongrightarrow{\text{~Controlled~} U_{f_i}} |c\>U_{f_i}^{c_0}(\sum_j\alpha_j|j\>|u\>).
\end{equation*}

\subsubsection{State Evolution.}
We now examine the state evolution in \ref{alg:ProtocolLoop}. Suppose the initial state of $q_{c1}$, $q_{a1}$ and $q_{d1}$ is
\begin{equation*}
    |\varphi\> = \frac{1}{\sqrt{T}}\sum_c\sum_j\alpha_{c,j}|j\>|\vec{0}\>.
\end{equation*}
Then its evolution can be summarised in Table \ref{Table:StateGrover}, in which we only illustrate the part on $q_{a1}$ and $q_{d1}$. It is worth noting that  Step \ref{line:LTerminating1} and Step \ref{line:LTerminating2} will never change the state of $q_{c1}$, $q_{a1}$, $q_{d1}$ and $q_{c2}$, $q_{a2}$, $q_{d2}$. This is because during the tests, all these qubits are preserved and only new constructed test states are sent to Bob.
Therefore, the controlled Grover iteration is actually realized in procedure \ref{alg:ProtocolLoop}. Moreover, if Bob uses a trivial strategy in which all $f_i=f$, Eq.\eqref{eq:controlG} is implemented. For nontrivial strategies, see Section \ref{Sec:BobStrategyInner}.
\begin{table}[t]
\center
\begin{tabular}{l|l|l}
\hline
  \multirow{2}{*}{Step} & \multicolumn{2}{c}{State}\\
   & Operators activated & Operators not activated\\
  \hline
  Step 1 & $|c\>\sum_j \alpha_{j}|j\>|\vec{0}\>$ & $|c\>\sum_j \alpha_{j}|j\>|\vec{0}\>$ \\
  Step \ref{line:L1} Alice $U_D(y)$ & $|c\>\sum_j \alpha_{j}|j\>|d_{j\oplus y}\>$ & $|c\>\sum_j \alpha_{j}|j\>|d_{j\oplus y}\>$ \\
  Step \ref{line:L1} Bob $U_f$ & $|c\>\sum_j (-1)^{f(d_{j\oplus y})}\alpha_{j}|j\>|d_{j\oplus y}\>$ & $|c\>\sum_j \alpha_{j}|j\>|d_{j\oplus y}\>$ \\
  Step \ref{line:L2} Alice $U_D(y)$ & $|c\>\sum_j (-1)^{f(d_{j\oplus y})}\alpha_{j}|j\>|\vec{0}\>$ & $|c\>\sum_j \alpha_{j}|j\>|\vec{0}\>$ \\
  Step \ref{line:L2} Bob $G$ & $|c\>\sum_j (-1)^{f(d_{j\oplus y})}\alpha_{j}(G|j\>)|\vec{0}\>$ & $|c\>\sum_j \alpha_{j}|j\>|\vec{0}\>$ \\
  \hline
\end{tabular}
\caption{State evolution for $|c\>\sum_j\alpha_j|j\>|\vec{0}\>$ in procedure \ref{alg:ProtocolLoop}. Here ``Operators activated'' means $|c\>$ and $i$ activate controlled operators.\label{Table:StateGrover}}
\end{table}

\subsection{Alice's Tests}\label{Sec:AliceTests}

\begin{procedure}[t]
    \SetKwData{Left}{left}\SetKwData{This}{this}\SetKwData{Up}{up}
    \SetKwFunction{Union}{Union}\SetKwFunction{FindCompress}{FindCompress}
    \SetKwInOut{Input}{Input}\SetKwInOut{Output}{Output}
    \Output{ ``Dishonesty detected'', if Bob's dishonesty is detected.}
    \Begin{
    Alice generates $\mu<\nu\in\{0,1\}^k$, $c\in\{0,1\}^t$, $m\in\{0,1,\cdots,n-1\}$, $x\in \{0,1\}^n$, and $b\in\{0,1\}$ uniformly at random\;
    Alice prepares $|\Phi\> = |c\>_{q_c}\otimes U_t(m,x,b)|0\>^{\otimes (n+k)}_{q_{a}, q_{d}}$ {  on new control qubits $q_c$, address qubits $q_a$, and data qubits $q_d$}\;\label{line:T1}
    Bob applies $U_{f_i}$ on $q_{a}$ and $q_{d}$\;
    Alice applies $U_t(m,x,b)^\dag$ on $q_{a}$ and $q_{d}$, obtaining the state $$|\Phi_1\> =  |c\>_{q_c}\otimes U_t(m,x,b)^\dag U_{f_i} U_t(m,x,b)|0\>^{\otimes (n+k)};$$\label{line:T9}\\
    \vspace{-1em}
    Alice and Bob repeat Step \ref{line:T1} to Step \ref{line:T9} to get $|\Phi_2\>$\;
    Alice measures all address and data qubits of $|\Phi_1\>$, and $|\Phi_2\>$, according to the basis $\{|0\>,|1\>\}$. Let the outcomes be $v$ and $w$, respectively, both in $\{0,1\}^{n+k}$\;
    {  If $v_0\neq w_0$, or $v_j=1$, or $w_j=1$ for any $j>0$, \Return{``Dishonesty detected''}\;\label{line:TTerminating}}
    }
    \caption{TestBob1($i$)}\label{alg:TestACD1}
\end{procedure}

Finally, we present the two test procedures called in Algorithm~\ref{alg:ProtocolInner} and procedure \ref{alg:ProtocolLoop} to complete the picture of our protocol. Since they are similar, we only show  procedure \ref{alg:TestACD1} in this subsection. The differences between it and \ref{alg:TestACDG} are briefly shown in Figure \ref{Fig:TestDifference}. The detailed descriptions of \ref{alg:TestACDG} are given Appendix \ref{Apd:TestACD2ACDG}.
\begin{figure}[t]
  \centering
  \fbox{
  \parbox{\textwidth}{
  \begin{itemize}
    \item Ancilla qubits $q_{g1}$ and $q_{g2}$ are involved.
    \item A controlled swap test is employed to test whether $|\Phi_1\>$ and $|\Phi_2\>$ are same.
    \item Step \ref{line:TTerminating}: The condition is a bit different.
  \end{itemize}
  }}
  \caption{The difference of \ref{alg:TestACDG} to \ref{alg:TestACD1}. See Appendix \ref{Apd:TestACD2ACDG} for details.}\label{Fig:TestDifference}
\end{figure}
Some details of  procedure \ref{alg:TestACD1} are described as follows:
\begin{itemize}
  \item { $\mu<\nu$ means the binary number represented by $\mu$ is smaller than that represented by $\nu$.}
  \item {  In this test, the state $|c\>$ on control qubits is not checked. It is introduced here, only because originally the states on control qubits are involved during the computation.}
  \item $U_t(m,x,b) = U_{\mathit{SWAP}(0,m)}Z(x)X_0^{b}V(\mu,\nu)(W\otimes I_d)$, {  where $W=H^{\otimes n}$ and $H$ is the Hadamard gate.}
  \item $V(\mu,\nu)$ writes $\mu$ and $\nu$ into $|+\>^{\otimes n}_{q_{a}}|0\>^{\otimes k}_{q_{d}}$. It consists of at most $k$ CNOT gates, where the control qubits are the first address qubit, and the target qubits range over all data qubits, where the first address qubit serves as the control, and all data qubits are the target. To be specific, $$V(\mu,\nu)|0\>|\xi\>|\tau\> = |0\>|\xi\>|\tau\oplus\mu\>,$$ $$V(\mu,\nu)|1\>|\xi\>|\tau\> = |1\>|\xi\>|\tau\oplus\nu\>,$$
      for any $\xi\in\{0,1\}^{n-1}$ and $\tau\in\{0,1\}^k$. For simplicity, $V(\mu,\nu)$ will be abbreviated to $V$ at some places.
  \item $U_{\mathit{SWAP}(0,m)}$ swaps the states of address qubits number 0 and number $m$.
  \item $Z(x)$ consists of a sequence of Pauli $Z$ gates which act on address qubit $j$ if and only if $x_j=1$.
  \item $X_0$ denotes Pauli $X$ gate acting on the first address qubit.
  \item Since all the component operators in $U_t(m,x,b)$ are self-adjoint, so is $U_t(m,x,b)$; that is, $U_t(m,x,b)^\dag = U_t(m,x,b)$.
\end{itemize}

\section{Execution of the Protocol}\label{Sec:ProtocolExecution}

\subsection{Test Rounds}

To better understand our protocol, let us show in this section how it is executed. We first see how dishonest Bob cannot pass tests with a high probability.
\begin{definition}\label{defi:TestRound}
\begin{enumerate}
  \item One execution of procedure \ref{alg:TestACD1} or \ref{alg:TestACDG} is called \textit{a test round}.
  \item Correspondingly, one execution from Step \ref{line:L1} to Step \ref{line:L1r} in procedure \ref{alg:ProtocolLoop} or one execution from Step \ref{line:B8} to Step \ref{line:B10} in Algorithm \ref{alg:ProtocolInner} is called a computational round.
\end{enumerate}
\end{definition}

One possible state evolution in procedure \ref{alg:TestACD1} is given in Table \ref{table:StateTest}, where the post-measurement state is assumed to be $|0\>|+\>^{\otimes n-1}|\mu\>$ or $|0\>|\gamma\>|\mu\>$ for some $\gamma\in\{0,1\}^{n-1}$. We can see that if Bob is honest, he can always passes \ref{alg:TestACD1}.
On the other hand, we present two examples to illustrate how Bob's attack can be detected (A detailed analysis is postponed to Section \ref{Sec:PrivacyAliceAnalysis}).
\begin{table}[t]
\center
\begin{tabular}{c|l|l|l}
\hline
   &  Honest Actions  &  Measurements  on $q_d$ & Measurements  on  $q_a$ and $q_d$\\
  \hline
  Alice $\ra$ Bob & $|\psi_{0,0,0}(\mu,\nu)\>$ & $|\psi_{0,0,0}(\mu,\nu)\>$ & $|\psi_{0,0,0}(\mu,\nu)\>$\\
  Bob $\ra$ Alice &$|\psi_{0,0,0}(\mu,\nu)\>$ & $|0\>|+\>^{\otimes n-1}|\mu\>$ & $|0\>|\gamma\>|\mu\>$\\
  $V(\mu,\nu)$ & $|+\>^{\otimes n}|0\>^{\otimes k}$ & $|0\>|+\>^{\otimes n+k-1}$  & $|0\>|\gamma\>|0\>^{\otimes k}$ \\
  $W\otimes I_d$ & $|0\>^{\otimes n+k}$ & $|+\>|0\>^{\otimes n+k-1}$& $|+\>|\omega\>|0\>^{\otimes k}$ \\
  \hline
\end{tabular}
\caption{One possible situation of state evolution in Procedure \ref{alg:TestACD1}, where $\gamma\in\{0,1\}^{n-1}$, and $\omega\in\{+,-\}^{n-1}$. Moreover it is assumed that $f(\mu)=f(\nu)$ and the test state is $|\psi_{0,0,0}(\mu,\nu)\>$. The control qubits are omitted. Unitary operators $U_{\mathit{SWAP(0,0)}}$, $X_0^0$, $Z(0)$ are omitted as well, as they are all identity operators here.\label{table:StateTest}}
\end{table}
First, we assume that Bob performs measurements only on the data qubits.
\begin{example}\label{exam:Test1}
    Suppose in a test round of procedure \ref{alg:TestACD1}, Alice first sends a test state $|\psi_{0,0,0}(\mu,\nu)\>$ to Bob, where $f(\mu)=f(\nu)$. Bob performs measurements on the data qubits, gets post-measurement state $|0\>|+\>^{\otimes n-1}|\mu\>$, and then sends it back to Alice. Secondly, Alice sends the same test state $|\psi_{0,0,0}(\mu,\nu)\>$ to Bob. Then, as illustrated in Table \ref{table:TestProbability},
    \begin{enumerate}
      \item if Bob does not perform measurements on the second test state, it will be detected  by the condition $v_0\neq w_0$ at Step \ref{line:TTerminating} with probability 0.5.
      \item if Bob performs measurements on the data qubits, then with probability 0.5, the post-measurement state is $|0\>|+\>^{\otimes n-1}|\mu\>$, and with the same probability, it is $|1\>|+\>^{\otimes n-1}|\nu\>$. For each case, it will be detected by the condition $v_0\neq w_0$ at Step \ref{line:TTerminating} with probability 0.5.
    \end{enumerate}
    \end{example}
\begin{table}
    \center
    \begin{tabular}{cc|c}
    \hline
        \multicolumn{2}{c|}{States after Bob's actions} & Probability at Step \ref{line:TTerminating}\\
        State \#1 & State \#2 &  \\\hline
        \multirow{3}{*}{$|0\>|+\>^{\otimes n-1}|\mu\>$} & {$|\psi_{0,0,0}(\mu,\nu)\>$} & 0.5\\
         & $|0\>|+\>^{\otimes n-1}|\mu\>$ & 0.5\\
         & {$|1\>|+\>^{\otimes n-1}|\nu\>$} & 0.5 \\
    \hline
    \end{tabular}
    \caption{Probabilities to detect Bob's attacks in Example \ref{exam:Test1}. ``State \#1'' (resp. ``State \#2'') stands for the first (resp. second) test state.}\label{table:TestProbability}
\end{table}

One situation not mentioned in the above example is that Bob first performs honestly, and then attacks on the second test state. But it is essential the same as the above example.

Another attack that Bob may take is to perform measurements on both the address and data qubits.
\begin{example}\label{exam:StateEvolutionInTest}
    Similarly to Example \ref{exam:Test1}, Bob measures the address and data qubits for the first test state, gets post-measurement state $|0\>|\gamma\>|\mu\>$, and then sends it back to Alice. Then for the second test state, no matter what Bob does, it can be detected by one or both of the two conditions at Step \ref{line:TTerminating} with probability $\approx 1$.
\end{example}

In the above two examples, we assumed that the test state is $|\psi_{0,0,0}(\mu,\nu)\>$ for simplicity. Other test states are similar.
Furthermore, it was assumed that Bob directly sends the post-measurement state to Alice. Indeed, he can construct a new state and send it to Alice. This case can be detected as well; see Section \ref{Sec:PrivacyAliceRecovery}.

\subsection{Testing or Computing}

The design idea of tests is to guarantee that Bob cannot know whether he is dealing with a test state or a computational state. In Algorithm \ref{alg:ProtocolInner} and procedure \ref{alg:ProtocolLoop}, the order of test and computational rounds are decided by the random number $r$. So it is clear that Bob does not know what states he is dealing with. Every time he receives a quantum state, it may be a test state (with probability at least $p$). Figure \ref{Fig:BobView} shows a flowchart that illustrates Bob's view for the $T$ calls of controlled Grover iterations in Algorithm \ref{alg:ProtocolInner}.

\begin{figure}
  \centering
  \includegraphics[width=8cm]{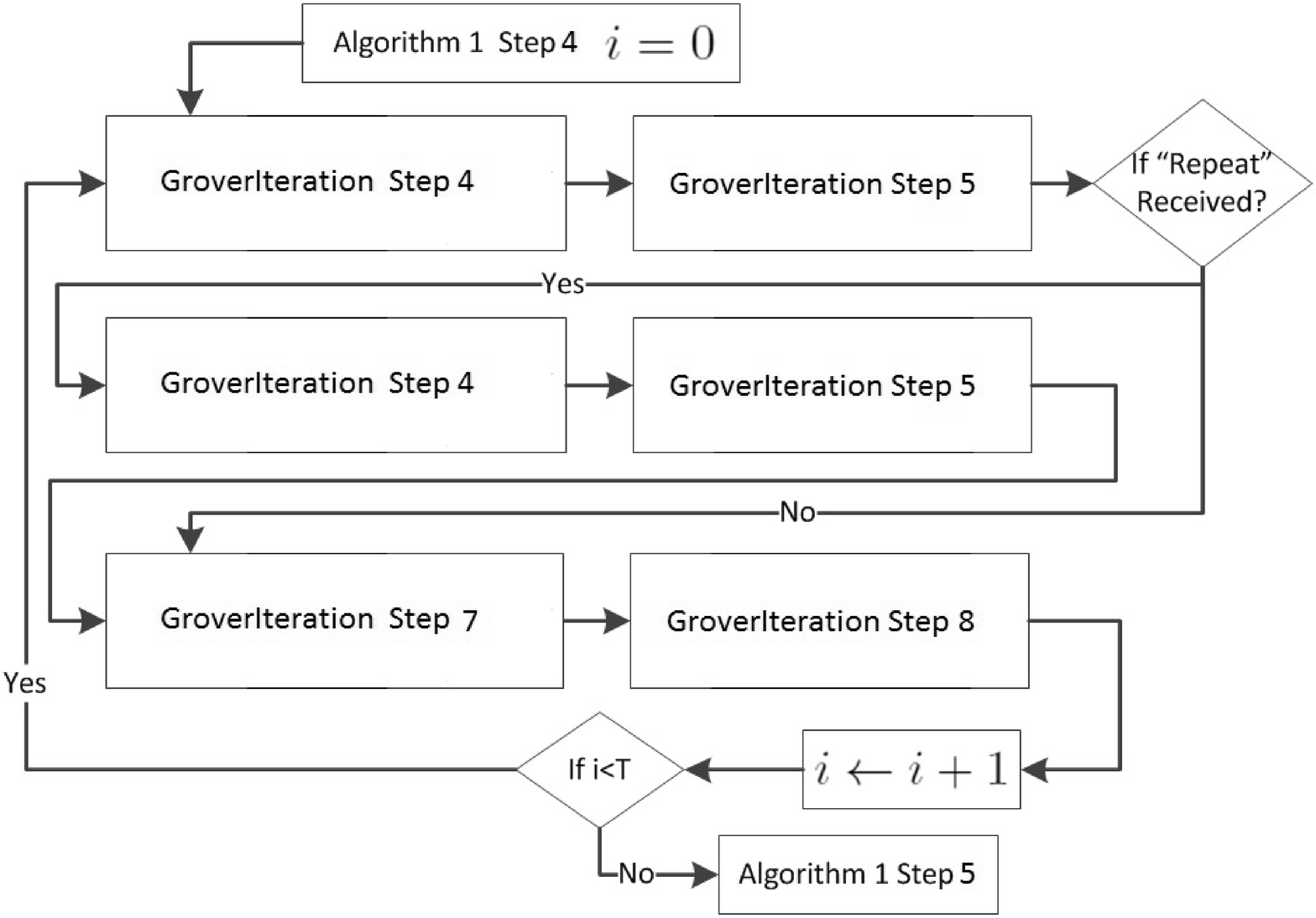}
  \caption{Bob's view for the $T$ calls of controlled Grover iterations in Algorithm \ref{alg:ProtocolInner}. { Indeed, in Alice's view, the first (resp. second) appearance of Steps 4 and 5 may be a test at Step 3 (resp. Step 6) when Alice employs a test in \ref{alg:ProtocolLoop} according to $r\leq p$ (resp. $p<r\leq 2p$). }}\label{Fig:BobView}
\end{figure}

\subsection{Bob's Strategy}\label{Sec:BobStrategyInner}

As said in Subsection \ref{ideas-0}, Bob's strategy is to add noises to $f$ (thus applying $f_i$ instead) in procedure \ref{alg:ProtocolLoop}, which cancel each other if Alice follows the protocol honestly. A detailed analysis will be given in Section \ref{Sec:PrivacyBob}. Here we observe:
\begin{equation}\label{eq:BobStrategy1}
    \bar{G}U_DU_fU_D~\bar{G}U_DI_{a,d}U_D~\bar{G}U_DU_fU_D=\bar{G},
\end{equation}
where $U_D$ is short for $U_D(y)$. This equation indeed represents the unitary operators after three iterations with $f_{i} = f$, $f_{i+1} = h$, and $f_{i+2}=f$, where $h$ is the function corresponding to $I_{a,d}$ and $h(\mu)=0$ for all $\mu\in\{0,1\}^k$. So, a sequence of $f$, $h$, $f$, $h$ or $h$, $f$, $h$, $f$ leads to the identity operator, meaning that Bob does nothing.
One step further from Eq. \eqref{eq:BobStrategy1}, we have:
\begin{equation}\label{eq:BobStrategy1+}
    \prod_{i=1}^j (\bar{G}U_DU_{f_i}U_D)~\bar{G}U_DI_{a,d}U_D~\prod_{i=1}^j (\bar{G}U_DU_{f_{j+1-i}}U_D)=\bar{G},
\end{equation}
for all $j$ and all functions $f_1,\cdots, f_j$ on $D$.
Another useful observation is
\begin{equation}\label{eq:BobStrategy2}
    \bar{G}U_DI_{a,d}U_D~\bar{G}U_DI_{a,d}U_D=I_{a,d}.
\end{equation}
This means two repetitions of $h$ do nothing.

We can construct strategies for privacy preserving based on Eqs. \eqref{eq:BobStrategy1}, \eqref{eq:BobStrategy1+} and \eqref{eq:BobStrategy2}. Let us first see a simple example.
\begin{example}
    Suppose Bob wants to run Algorithm \ref{alg:ProtocolInner} with $T_0=8$ loops. He adds one control qubit to make $T=16$. Then there are four control qubits, denoted by $C_0$, $C_1$, $C_2$, $C_3$. Bob chooses $C_1$ as a confusing qubit, and add noise to the functions corresponding to $C_0$. In detail,
    \begin{itemize}
      \item $C_0$: $h$, $h$, $f$, $f$, $h$, $h$, $f$, $f$,
      \item $C_1$: $f'$, $h$, $f'$, $h$,
      \item $C_2$: $f$, $f$,
      \item $C_3$: $f$,
    \end{itemize}
    where $f'$ are functions different from $f$. At last the Fourier transformation is performed only on $C_0$, $C_2$, $C_3$.
\end{example}

We now formally define the notion of Bob's strategy.
\begin{definition}\label{defi:BobStrategy}
    Suppose in Algorithm \ref{alg:ProtocolInner} there are $t=\log T$ control qubits $C_0$, $\cdots$, $C_{t-1}$. We say a sequence of function $f_0$, $\cdots$, $f_{T-1}$ is a strategy $\mathcal{S}$ for computing $f(D)$, if
    \begin{itemize}
      \item it is trivial, i.e. all $f_i = f$,
      \item or the following conditions are satisfied:
        \begin{itemize}
          \item One control qubit $C_u$ with $u<t-1$ is chosen to be the confusing qubit.
          \item Based on Eq. \eqref{eq:BobStrategy1}, Eq. \eqref{eq:BobStrategy1+} and Eq. \eqref{eq:BobStrategy2}, noises are added between the functions corresponding to $C_0,\cdots,C_{u}$.
          \item For $C_w$ with $w<u$, the effect of its corresponding functions is equivalent to that of $2^{t-w-1}$ repetitions of $f$.
          \item For $C_u$, the effect of its corresponding functions is the identity.
        \end{itemize}
    \end{itemize}
\end{definition}

In conclusion, a strategy realizes Eq. \eqref{eq:controlG} or
\begin{align}
    &\frac{1}{\sqrt{T}}\sum_{c} |c\>G^{c'} |\Psi_0\> =\nonumber\\
    &\frac{1}{\sqrt{T}}\sum_{c} |c\> \otimes (G^{c_0 2^{t-2}+c_1 2^{t-3}+\cdots +c_{u-1}2^{t-u-1}+c_{u+1}2^{t-u-2}+\cdots +c_{t-1} } |\Psi_0\>),\label{eq:controlG2}
\end{align}
{ where $c = \sum c_j 2^{t-j-1}$, and $c' = c_0 2^{t-2}+c_1 2^{t-3}+\cdots +c_{u-1}2^{t-u-1}+c_{u+1}2^{t-u-2}+\cdots +c_{t-1}$.}
If a confusing qubit is added, the final measurement is performed on the control qubits except $C_u$ to get $\theta$.

\section{Correctness for Honest Parties}\label{Sec:Correctness}
Now we start to prove the correctness of Algorithm \ref{alg:ProtocolInner}. In this section, we consider the simplest case where Alice and Bob are both honest.

The discussion in Sections \ref{Sec:GroverIteration} and \ref{Sec:BobStrategyInner} showed that in Algorithm \ref{alg:ProtocolInner}, the $T$ calls of procedure \ref{alg:ProtocolLoop} realize controlled Grover iterations Eq.\eqref{eq:controlG} or Eq.\eqref{eq:controlG2}, depending on which strategy (see Definition \ref{defi:BobStrategy}) Bob employs. So Algorithm \ref{alg:ProtocolInner} executes the  quantum counting algorithm \cite{BrassardHT1998} twice, independently on $q_{c1}$, $q_{a1}$, $q_{d1}$, $q_{g1}$, and $q_{c2}$, $q_{a2}$, $q_{d2}$, $q_{g2}$. {  Then for Problem \ref{Prob:count}, we directly have the following result.
\begin{theorem}(\cite[Theorem 5, Theorem 6]{BrassardHT1998})\label{thm:ResultAccuracy}
    In Algorithm \ref{alg:ProtocolInner}, if Alice and Bob are both honest and Bob employs a trivial strategy, then for $i\in\{1,2\}$,
    \begin{equation}\label{eq:Result}
        |s_i-s|<\frac{2\pi}{T}\sqrt{s}+\frac{\pi^2}{T^2}, \forall i\in\{1,2\},
    \end{equation}
 holds with probability at least $\frac{8}{\pi^2}>0.8$, where $s=f(D)=\frac{1}{N}\sum_j f(d_j)$ is the correct answer.
\end{theorem}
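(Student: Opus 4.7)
The plan is to reduce the theorem to a direct citation of the quantum counting analysis of Brassard, H\o{}yer, and Tapp \cite{BrassardHT1998}, by showing that when Alice and Bob both follow the protocol honestly and Bob employs the trivial strategy (all $f_i = f$), Algorithm~\ref{alg:ProtocolInner} factorises into two independent executions of the standard quantum counting algorithm on $(q_{c1},q_{a1},q_{d1},q_{g1})$ and $(q_{c2},q_{a2},q_{d2},q_{g2})$.

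First I would argue that the tests inserted at Steps~\ref{line:BTerminating} and \ref{line:2p} of Algorithm~\ref{alg:ProtocolInner}, and at Steps~\ref{line:LTerminating1} and \ref{line:LTerminating2} of procedure~\ref{alg:ProtocolLoop}, do not affect the joint state of the computational registers. As noted after Table~\ref{Table:StateGrover}, during a test Alice constructs fresh ancillas $|\Phi\rangle$ and sends only those to Bob, while the registers $q_{ci},q_{ai},q_{di}$ are kept untouched; since Bob is honest, his response on the test ancillas is traced out at Step~\ref{line:TTerminating} and no back-action propagates onto the computational registers.

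Next I would verify the Grover-iteration identity on the computational registers. With the trivial strategy, procedure~\ref{alg:ProtocolLoop} applies $\bar{G}\,U_D(y)\,U_f\,U_D(y)$ on each of the two computational blocks, controlled by the qubit of $q_{ci}$ indicated in Table~\ref{Table:ControlQubit}. Combining this with Eq.~\eqref{eq:controlG} and the analysis in Section~\ref{Sec:GroverIteration}, the $T$ calls of \ref{alg:ProtocolLoop} followed by Steps~\ref{line:B8}--\ref{line:B10} implement, on each block, exactly one controlled iteration of the Grover operator associated with $f \circ d_{\cdot\oplus y}$ powered by the integer encoded in the control register. Permuting the database by $y$ (that is, using $U_D(y)$ in place of $O_D$) does not change the number of marked items, so the fraction of marked addresses is still $s = f(D)$.

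The final quantum Fourier transform and measurement on $q_{ci}$, together with the measurement on $q_{gi}$ that selects the $\sin^{2}$ or $\cos^{2}$ branch as in Step~\ref{line:BFourier}, are precisely the quantum counting read-out of \cite{BrassardHT1998}. The two blocks are never entangled and evolve under identical circuits on identical initial states, so Theorems~5 and 6 of \cite{BrassardHT1998} apply verbatim to each index $i\in\{1,2\}$, yielding
\begin{equation*}
    \Pr\!\left[\,|s_i - s| < \tfrac{2\pi}{T}\sqrt{s} + \tfrac{\pi^{2}}{T^{2}}\right] \;\geq\; \tfrac{8}{\pi^{2}}.
\end{equation*}
The only subtlety worth flagging is the book-keeping that the $g_i$-branching in Step~\ref{line:BFourier} correctly inverts the sign ambiguity in the eigenvalue estimation (since a Grover iteration has two relevant eigenvalues $e^{\pm 2i\arcsin\sqrt{s}}$), which is exactly the role of the auxiliary register in the quantum counting of \cite{BrassardHT1998}; once this correspondence is checked, the theorem follows.
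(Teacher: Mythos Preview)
Your proposal is correct and follows essentially the same approach as the paper: the paper's own argument (the paragraph preceding the theorem) simply observes that the discussion in Sections~\ref{Sec:GroverIteration} and~\ref{Sec:BobStrategyInner} shows the $T$ calls of \ref{alg:ProtocolLoop} realise the controlled Grover iterations of Eq.~\eqref{eq:controlG}, so Algorithm~\ref{alg:ProtocolInner} runs quantum counting twice independently, and then cites \cite{BrassardHT1998} directly. Your write-up is in fact more explicit than the paper's, spelling out why the tests and the permutation by $y$ are harmless, but the logical structure is identical.
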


Therefore, by setting $T>100/\sqrt{s_{\min}}$ for a trivial strategy or $T>200/\sqrt{s_{\min}}$ for a nontrivial strategy, we immediately obtain:
\begin{corollary}\label{cor:ResultAccuracy}
    In Algorithm \ref{alg:ProtocolInner}, if Alice and Bob are both honest, then for $i\in\{1,2\}$,
    \begin{equation*}
        \begin{cases}
            |\frac{s_i}{s}-1|<\frac{2\pi}{100}\sqrt{\frac{s_{\min}}{s}}+\frac{\pi^2}{10^4}\frac{s_{\min}}{s}<0.07, & s\geq s_{\min},\\
            \frac{|s_i-s|}{s_{\min}}<\frac{2\pi}{100}\sqrt{\frac{s}{s_{\min}}}+\frac{\pi^2}{10^4}<0.07, & s< s_{\min},
        \end{cases}
    \end{equation*}
 holds with probability at least $\frac{8}{\pi^2}>0.8$, where $s=f(D)=\frac{1}{N}\sum_j f(d_j)$ is the correct answer, and ${s_{\min}}$ is the preset threshold of supports.
\end{corollary}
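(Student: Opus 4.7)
The plan is to derive the corollary as a direct consequence of Theorem~\ref{thm:ResultAccuracy} by substituting the chosen value of $T$ into the accuracy bound $|s_i - s| < \frac{2\pi}{T}\sqrt{s} + \frac{\pi^2}{T^2}$ and rescaling. First I would observe that the distinction between trivial and nontrivial strategies is absorbed into the choice of $T$: in the nontrivial case, a confusing qubit $C_u$ is not part of the final Fourier measurement (cf.\ Eq.\eqref{eq:controlG2}), so the effective number of Grover iterations contributing to the quantum counting is $T/2$ rather than $T$. Choosing $T>200/\sqrt{s_{\min}}$ in the nontrivial case and $T>100/\sqrt{s_{\min}}$ in the trivial case therefore yields the same effective bound $T_{\mathrm{eff}} > 100/\sqrt{s_{\min}}$ to feed into Theorem~\ref{thm:ResultAccuracy}. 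For brevity I will just write $T$ below with the understanding that it denotes this effective value.

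Next I would split into the two cases of the corollary. For $s \geq s_{\min}$, I would divide the bound of Theorem~\ref{thm:ResultAccuracy} through by $s$ to obtain
\begin{equation*}
\Bigl|\frac{s_i}{s}-1\Bigr| < \frac{2\pi}{T\sqrt{s}} + \frac{\pi^2}{T^2 s}
= \frac{2\pi}{T}\cdot\frac{1}{\sqrt{s_{\min}}}\sqrt{\frac{s_{\min}}{s}} + \frac{\pi^2}{T^2 s_{\min}}\cdot\frac{s_{\min}}{s},
\end{equation*}
and then apply $T \geq 100/\sqrt{s_{\min}}$ to each term. For $s < s_{\min}$, I would instead divide by $s_{\min}$, writing $\sqrt{s}=\sqrt{s_{\min}}\cdot\sqrt{s/s_{\min}}$, and again apply the lower bound on $T$. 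Both manipulations are routine algebra.

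Finally, I would verify the numerical claim: in either case the resulting upper bound is at most $\frac{2\pi}{100} + \frac{\pi^2}{10^4} \approx 0.0628 + 0.001 < 0.07$, using that $\sqrt{s_{\min}/s}\leq 1$ and $s_{\min}/s\leq 1$ in the first case, and $\sqrt{s/s_{\min}}< 1$ in the second. The success probability $8/\pi^2 > 0.8$ is inherited directly from Theorem~\ref{thm:ResultAccuracy}.

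No step is really the main obstacle — this is a bookkeeping exercise. The only subtle point worth being explicit about is the factor of $2$ in $T$ arising from whether a confusing qubit is used, which must be justified by referring back to Definition~\ref{defi:BobStrategy} and Eq.~\eqref{eq:controlG2}; beyond that, the argument is a one-line substitution followed by a numerical estimate.
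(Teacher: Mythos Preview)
Your proposal is correct and matches the paper's approach exactly: the paper simply states that by setting $T>100/\sqrt{s_{\min}}$ for a trivial strategy or $T>200/\sqrt{s_{\min}}$ for a nontrivial strategy, the corollary follows immediately from Theorem~\ref{thm:ResultAccuracy}. Your write-up just fills in the routine algebra (dividing by $s$ or $s_{\min}$ and bounding by $\tfrac{2\pi}{100}+\tfrac{\pi^2}{10^4}<0.07$) and makes explicit the factor-of-two adjustment for the confusing qubit, which the paper leaves implicit.
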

}

This corollary gives the relative error and success probability for Bob. Since usually $s_{\min}$ is set to be a constant, say 0.2, 0.1 or 0.01, the number $T$ of iterations does not depend on the size of the database.

\section{Protecting Alice's Privacy}\label{Sec:PrivacyAlice}
In this section, we continue to prove correctness of the protocol and show how it can protect Alice's privacy. Only procedure \ref{alg:TestACD1} is considered, and the results for procedure \ref{alg:TestACDG} are similar and thus omitted.

Dishonest Bob may employ attacks to read information from  $q_{c1}$, $q_{a1}$, $q_{d1}$, $q_{g1}$, and/or $q_{c2}$, $q_{a2}$, $q_{d2}$, $q_{g2}$. His attacks can be classified according to the number of rounds/iterations  that these attacks cost.

\subsection{One-round Attacks}
\begin{definition}[One-round attack]
    A one-round attack consists of one or more successive steps of the following:
    \begin{enumerate}
      \item Bob sends some qubits to Alice,
      \item Alice applies $U_D$, and sends these qubits to Bob,
      \item Bob's actions,
      \item Bob sends some qubits to Alice.
    \end{enumerate}
\end{definition}

For instance, a one-round attack may consist of Steps 1-2, Steps 2-4, or Steps 1-4.  In this subsection, we list some one-round attacks which leak information.

The following are several typical one-round attacks:
\begin{example}[\textbf{Attack1}]\label{exam:MPattack1}
    Bob sets $q_a$ to be a single address $|i\>$ and $q_d$ to be blank $|\vec{0}\>$. Then he sends $q_c$, $q_a$, $q_d$ to Alice. After Alice applies $U_D$, Bob receives $|i\>|d_i\>$. Finally, he can measure $q_d$ to get $d_i$.
\end{example}

\begin{example}[\textbf{Attack2}]\label{exam:MPattack2}
    After Bob receives $\frac{1}{\sqrt{T}}\sum_c\sum_j\alpha_{c,j}|j\oplus y\>|d_{j\oplus y}\>$, he performs measurements (1) on $q_d$ to get some $d\in D$, or (2) on $q_a$ and $q_d$ to get $d_{j\oplus y}$.
\end{example}

Besides directly reading data from $q_a$ and $q_d$ by measurements, Bob may use some unitary gates (e.g. CNOT) to copy data on additional blank qubits, say $q_e$. Then he can read information from $q_e$ later.
\begin{example}[\textbf{Attack3}]\label{exam:MPattack3}
    After Bob receives $\frac{1}{\sqrt{T}}\sum_c\sum_j\alpha_{c,j}|j\oplus y\>|d_{j\oplus y}\>$, he add ancilla qubits and performs unitary operators to store data, i.e. $$\frac{1}{\sqrt{T}}\sum_c\sum_j\alpha_{c,j}|j\oplus y\>|d_{j\oplus y}\>|e_{j\oplus y}\>.$$
\end{example}

\subsection{Detection of One-round Attacks}\label{sec:MidAttackStrongDetection}
We first show that \ref{alg:TestACD1} can detect Bob's one-round attacks. The main assumption here is:
\begin{itemize}
  \item Whenever Bob tries to attack, he believes that he cannot distinguish the following two situations from each other:
    \begin{enumerate}
      \item He is dealing with a test state.
      \item He is dealing with a computational state.
    \end{enumerate}
\end{itemize}
This assumption is reasonable because no one will cheat if he knows that he is dealing with a test state, which carries no useful information about database $D$.

\subsubsection{Example Attacks}
Now let us see what happens if Bob cheats in a test. As the starting point, we focus on the attacks in Examples \ref{exam:MPattack1}, \ref{exam:MPattack2} and \ref{exam:MPattack3}, since they are simplest and typical ones.

\begin{lemma}[\textbf{Attack1}]\label{lem:Attack1C}
    Suppose, in a test, it is Bob's turn to send back computational qubits to Alice. Bob prepares $|\varphi\>|a\>|d\>$ in Procedure \ref{alg:TestACD1},
    where $|\varphi\>\in\hs_c$, $a\in\{0,1\}^n$ and $d\in\{0,1\}^k$, and sends this state to Alice. Then it can be detected with probability at least $1-\frac{1}{N}$.
\end{lemma}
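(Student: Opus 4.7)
The plan is to compute the probability that Alice's measurement outcome $v$ in the attacked round lies in a certain two-element set of basis states, and show that each such element has probability at most $1/N$. Combined with the detection condition at Step~\ref{line:TTerminating}, this yields the claimed $1 - 1/N$ bound.

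The starting observation is that $U_t(m,x,b)$ acts trivially on the control qubits $q_c$, so after Alice applies her inverse operation at Step~\ref{line:T9} the joint state becomes $|\varphi\>_{q_c} \otimes U_t^\dag|a\>|d\>_{q_a q_d}$. Hence the probability that Alice's basis measurement on $q_a q_d$ yields a given $v \in \{0,1\}^{n+k}$ is $|\<v|U_t^\dag|a,d\>|^2 = |\<a,d|U_t|v\>|^2$.

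The core calculation is the amplitude bound $|\<a,d|U_t|v_0, 0^{n+k-1}\>|^2 \leq 1/N$ for each $v_0 \in \{0,1\}$. I would unfold the definition $U_t(m,x,b) = U_{\mathit{SWAP}(0,m)} Z(x) X_0^b V(\mu,\nu)(W \otimes I_d)$ and track the support on input $|v_0\>|0^{n-1}\>|0^k\>$: the Hadamard layer spreads the address register into a uniform superposition of $N$ basis states with amplitudes $\pm 1/\sqrt{N}$; $V(\mu,\nu)$ then writes either $\mu$ or $\nu$ into the data register according to the first address qubit; and the remaining operators $X_0^b$, $Z(x)$, and $U_{\mathit{SWAP}(0,m)}$ merely permute basis states and attach $\pm 1$ phases. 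Consequently $U_t|v_0, 0^{n+k-1}\>$ is supported on exactly $N$ computational basis states of $q_a q_d$, each with amplitude of magnitude $1/\sqrt{N}$, which gives the bound.

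Finally, I would convert the amplitude bound into a detection probability. In an honest execution of a test round $U_{f_i}$ either acts as a global phase on $U_t|0^{n+k}\>$ (when $f(\mu)=f(\nu)$) or maps it to $\pm\, U_t|1, 0^{n+k-1}\>$ (when $f(\mu)\neq f(\nu)$), so Alice's measurement always yields a basis state of the form $(w_0, 0^{n+k-1})$ for a deterministic $w_0 \in \{0,1\}$. If the other test round is honest, non-detection forces $v = (w_0, 0^{n+k-1})$, which occurs with probability at most $1/N$ by the previous step. If Bob attacks in both rounds, applying the pointwise amplitude bound to each round and summing over the two candidate values of $v_0 = w_0$ gives $\Pr[\text{not detected}] \leq 2/N^2 \leq 1/N$ whenever $N \geq 2$. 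In both cases the detection probability is at least $1 - 1/N$. The main technical obstacle is tracking the basis support through the non-commuting components of $U_t$; however, since only the $1/\sqrt{N}$ amplitude magnitudes matter and these are invariant under the $\pm 1$ phases and basis permutations introduced by $X_0^b$, $Z(x)$, and $U_{\mathit{SWAP}(0,m)}$, this tracking poses no real difficulty.
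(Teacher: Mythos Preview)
Your proposal is correct and follows the same approach as the paper: both reduce to the fact that the Hadamard layer $W=H^{\otimes n}$ makes every computational-basis state of $q_aq_d$ have amplitude magnitude at most $1/\sqrt{N}$ in $U_t|v_0,0^{n+k-1}\>$, and then use the two-copy structure of the test to tighten $2/N$ to $1/N$. The paper organises the calculation as a case split on whether $d\in\{\mu,\nu\}$ and exploits the uniformity of $v_0$ against an \emph{arbitrary} second round, whereas you compute the amplitude directly and split on whether the second round is honest or also attacked; these are cosmetic differences, and your case split can be unified to cover arbitrary second-round behaviour via $\sum_{b}\Pr(v=(b,0^{\ldots}))\Pr(w=(b,0^{\ldots}))\le (1/N)\sum_b\Pr(w=(b,0^{\ldots}))\le 1/N$.
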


\begin{lemma}[\textbf{Attack1}]\label{lem:Attack1M}
    Suppose Bob successfully sends $|\varphi\>|a\>|d\>$ to Alice. If this communication is followed by a test (Procedure \ref{alg:TestACD1}) and he performs measurements on $q_a$ and $q_d$ (resp. only a measurement on $q_d$) [in order to get private information], then it will be detected with probability at least $1-\frac{1}{N}$ (resp. $\frac{1}{2}$).
\end{lemma}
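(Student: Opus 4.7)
The plan is to analyze the state evolution inside a single execution of procedure \ref{alg:TestACD1} when Bob performs the measurement attack described, and to compare the distribution of the check outcomes $(v,w)$ against the honest baseline. Three ingredients are needed: a characterization of the honest outcome, a calculation of the distribution when $q_a$ and $q_d$ are both measured, and a calculation when only $q_d$ is measured.

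Write $|\psi\> = U_t(m,x,b)|0\>^{\otimes(n+k)}$ (the address--data part of the honest test state) and $|\psi'\> = U_t(m,x,b)|1\>|0\>^{\otimes(n+k-1)}$. Unwinding $U_t(m,x,b) = U_{\mathit{SWAP}(0,m)}Z(x)X_0^{b}V(\mu,\nu)(W\otimes I_d)$, I can check that (i) both $|\psi\>$ and $|\psi'\>$ are superpositions of the same $N = 2^n$ computational basis states, each with amplitude of magnitude $1/\sqrt{N}$; (ii) $U_{f_i}|\psi\> = \pm|\psi\>$ when $f(\mu) = f(\nu)$, and $U_{f_i}|\psi\> = \pm|\psi'\>$ when $f(\mu)\neq f(\nu)$. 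Consequently the honest final state $U_t(m,x,b)^\dag U_{f_i} U_t(m,x,b)|0\>^{\otimes(n+k)}$ is $\pm|0\>^{\otimes(n+k)}$ or $\pm|1\>|0\>^{\otimes(n+k-1)}$, so the ``safe'' outcomes are $v = w \in \{0^{n+k},\,10^{n+k-1}\}$; these satisfy $v_0 = w_0$ and $v_j = w_j = 0$ for $j>0$, and honest Bob always passes the check at Step~\ref{line:TTerminating}.

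For the attack where Bob measures $q_a$ and $q_d$, his measurement collapses the test state to a basis state $|a^*\>|d^*\>$ in the support of $|\psi\>$, each occurring with probability $1/N$. Whatever state $|\varphi^*\>$ Bob sends back, the probability that Alice's measurement yields a safe outcome equals $|\<\psi|\varphi^*\>|^2 + |\<\psi'|\varphi^*\>|^2$. In the natural case $|\varphi^*\>=|a^*\>|d^*\>$, each overlap equals exactly $1/N$ (a basis amplitude of $|\psi\>$ or $|\psi'\>$), so the safe-set probability for one iteration is $2/N$; combined with the further constraint $v_0 = w_0$ against the second-iteration output, this drops to $\leq 1/N$, yielding detection probability $\geq 1 - 1/N$. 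For a general strategy where $|\varphi^*\>$ depends on $(a^*,d^*)$, I argue that $|\varphi^*\>$ is fixed before Alice reveals $(\mu,\nu,m,x,b)$; the residual uncertainty in Alice's parameters conditioned on $(a^*,d^*)$ keeps $\mathrm{span}\{|\psi\>,|\psi'\>\}$ hidden from Bob, and an averaging argument over Alice's randomness keeps the same $1/N$ bound.

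For the attack where only $q_d$ is measured, decompose $|\psi\> = \alpha|A_\mu\>|\mu\> + \beta|A_\nu\>|\nu\>$ with $|\alpha|^2 = |\beta|^2 = 1/2$ and $\<A_\mu|A_\nu\> = 0$, and similarly $|\psi'\> = \alpha|A_\mu\>|\mu\> - \beta|A_\nu\>|\nu\>$ (the sign flip comes from item (ii) above). Hence $\mathrm{span}\{|A_\mu\>|\mu\>,|A_\nu\>|\nu\>\} = \mathrm{span}\{|\psi\>,|\psi'\>\}$. Measurement of $q_d$ collapses the state to $|A_\mu\>|\mu\>$ or $|A_\nu\>|\nu\>$ with equal probability, and applying $U_t(m,x,b)^\dag$ sends these into $\mathrm{span}\{|0\>^{\otimes(n+k)},\,|1\>|0\>^{\otimes(n+k-1)}\}$. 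Therefore Alice's measurement always gives $v_j = 0$ for $j > 0$, while $v_0$ is uniform on $\{0,1\}$; the same holds for $w_0$ independently. The check fails iff $v_0 \neq w_0$, which occurs with probability exactly $1/2$.

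The main obstacle is the general-strategy case in the second step: I must rule out that Bob crafts a state $|\varphi^*\>$ in the two-dimensional subspace $\mathrm{span}\{|\psi\>,|\psi'\>\}$ from information gained by his measurement. The saving grace is that a single basis element $(a^*,d^*)$ in the support of $|\psi\>$ leaves the pair $\{\mu,\nu\}$ and the triple $(m,x,b)$ largely undetermined, so averaging $|\<\psi|\varphi^*\>|^2 + |\<\psi'|\varphi^*\>|^2$ over Alice's secret randomness produces the desired $O(1/N)$ bound; careful bookkeeping of the residual entropy is the delicate part of the argument.
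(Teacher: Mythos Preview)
Your core calculations are correct and track the paper's proof closely. For the $q_d$-only case your decomposition $|\psi\> = \alpha|A_\mu\>|\mu\> + \beta|A_\nu\>|\nu\>$, $|\psi'\> = \alpha|A_\mu\>|\mu\> - \beta|A_\nu\>|\nu\>$ is exactly what the paper computes explicitly (it writes out $|\Phi_1\> = |c\>|\pm\>|0\>^{\otimes n+k-1}$ and reads off that $v_0$ is uniform), and your conclusion that $\Pr(v_0\neq w_0)=1/2$ matches. For the $q_a,q_d$ case the paper is more economical: it simply observes that the post-measurement state is a computational basis vector $|c\>|j\>|\mu\>$ or $|c\>|j\>|\nu\>$ and invokes Lemma~\ref{lem:Attack1C} directly, whereas you re-derive the $2/N$-then-$1/N$ bound from scratch. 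Both routes are fine.

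Where you diverge from the paper is in scope. The paragraph you flag as the ``main obstacle''---ruling out that Bob crafts an arbitrary $|\varphi^*\>$ in $\mathrm{span}\{|\psi\>,|\psi'\>\}$ using the information from his measurement---is \emph{not} part of this lemma. Lemma~\ref{lem:Attack1M} (like Lemma~\ref{lem:Attack1C}) treats the specific attack in which Bob measures and returns the post-measurement state; the question of whether Bob can instead fabricate a new state to evade detection is deferred to Theorem~\ref{thm:Recovery} (``Impossibility of Recovery''), which uses the basis-counting Lemmas~\ref{lem:TestStateBasis} and~\ref{lem:TestStateDis} rather than the averaging argument you sketch. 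So your ``delicate bookkeeping'' is unnecessary here, and if you pursued it you would essentially be re-proving Theorem~\ref{thm:Recovery} inside the wrong lemma. Drop that paragraph and the proof is complete.
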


\begin{lemma}[\textbf{Attack2}]\label{lem:Attack2}
    Suppose Bob performs measurements on $q_a$ and $q_d$ (resp. only a measurement on $q_d$) in Procedure \ref{alg:TestACD1}. Then it will be detected with probability at least $1-\frac{1}{N}$ (resp. $\frac{1}{2}$).
\end{lemma}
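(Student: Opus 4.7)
The plan is to track the state of the address and data registers $(q_a, q_d)$ through the three stages of Procedure \ref{alg:TestACD1}---Alice's preparation $U_t(m,x,b)|0\>^{\otimes(n+k)}$, Bob's measurement and return, and Alice's inverse $U_t(m,x,b)^\dag$ followed by a computational-basis measurement---and, for each of the two attack types, bound the probability that the outcomes $(v,w)$ satisfy the acceptance condition $v_0 = w_0$, and $v_j = w_j = 0$ for all $j>0$. A preparatory computation unfolds $U_t(m,x,b)|0\>^{\otimes(n+k)}$ into
\[
|\Phi_0\> = \frac{1}{2^{n/2}}\sum_{\xi\in\{0,1\}^n}(-1)^{x\cdot\xi}\,|\xi'\>|D_{\xi_0\oplus b}\>,
\]
where $\xi' = U_{\mathit{SWAP}(0,m)}\xi$, $D_0=\mu$, $D_1=\nu$; on each address branch $|a\> = |\xi'\>$ the data register is deterministically $|D_{a_m \oplus b}\>$.

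For Bob's measurement on both $q_a$ and $q_d$, the post-measurement state is a product $|a\>|D_{a_m\oplus b}\>$ for a uniformly random $a$. Pushing this through $U_t(m,x,b)^\dag = (W\otimes I_d)V X_0^b Z(x) U_{\mathit{SWAP}(0,m)}$, the swap followed by $X_0^b$ produces $|a''\>$ with $a''_0 = a_m\oplus b$, after which $V$ clears the data register (since its value matches $D_{a''_0}$), and the final $W$ spreads the address to $2^{-n/2}\sum_\eta(-1)^{\eta\cdot a''}|\eta\>$ up to an overall sign. Measuring, each address bit is independently uniform while every data bit is $0$ deterministically, so a single round passes with probability $2^{-(n-1)}$. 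The worst case for Alice is that Bob measures in only one of the two rounds and plays honestly in the other, so that on the honest round $w_j=0$ for all $j>0$ and $w_0$ is fixed; the acceptance probability is then $2^{-(n-1)}\cdot\frac{1}{2} = 2^{-n}$, yielding detection probability at least $1-1/N$.

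For Bob's measurement on $q_d$ only, the outcome is some $d\in\{\mu,\nu\}$ and the address register collapses to a uniform superposition over a coset $\{a: a_m = \beta\}$ of size $2^{n-1}$ (where $\beta = b$ if $d=\mu$ and $\beta = b\oplus 1$ if $d=\nu$), still carrying phases $(-1)^{x\cdot\xi}$. After $U_{\mathit{SWAP}(0,m)}$ re-expresses these as phases in the variable $a'$, the $Z(x)$ in $U_t(m,x,b)^\dag$ multiplies by an additional $(-1)^{x\cdot a'}$; a short bookkeeping check shows that the two contributions cancel bit-by-bit on coordinates $j \ne 0, m$, that the coordinate-$m$ contribution $x_m + x_m$ vanishes mod $2$, and that the pinned coordinate $a'_0 = \beta$ contributes only a global sign. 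Hence the residual phase on the $n-1$ free bits is zero; $X_0^b$ pins $a''_0$ to $\beta \oplus b \in \{0,1\}$, $V$ clears the data, and the final $W$ maps the uniform superposition on the remaining $n-1$ qubits to $|0\>^{n-1}$ while leaving the first qubit in $|+\>$ or $|-\>$. Thus $v_j = 0$ for all $j>0$ deterministically and $v_0$ is uniform; across the two rounds $v_0$ and $w_0$ are independent uniform bits, so acceptance occurs with probability exactly $1/2$.

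The main technical obstacle is this phase-cancellation check in the second case. It is exactly where the design of $U_t(m,x,b)$---the interplay between the random swap $U_{\mathit{SWAP}(0,m)}$ and the random phase mask $Z(x)$ on either side of $V$---pays off: without this cancellation, the residual phase could generically pin the $n-1$ non-first address bits to a deterministic but non-zero pattern, which would still pass Alice's test on positions $j>0$ and leave the $q_d$-only measurement undetectable beyond the first-bit randomness.
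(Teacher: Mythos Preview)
Your proof is correct and follows essentially the same computational route as the paper. The paper's own proof is a one-line reduction to Lemma~\ref{lem:Attack1M} (which in turn invokes Lemma~\ref{lem:Attack1C} for the joint $q_a,q_d$ case); your direct argument mirrors those proofs, tracking the state through preparation, attack, and $U_t^\dag$ to reach the same endpoints: $|{+}\rangle|0\rangle^{\otimes(n-1)}|0\rangle^{\otimes k}$ or $|{-}\rangle|0\rangle^{\otimes(n-1)}|0\rangle^{\otimes k}$ in the $q_d$-only case, and a uniform superposition over address basis states in the joint case.

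One stylistic difference worth noting: in the $q_d$-only case the paper avoids any phase bookkeeping by observing that the measurement projector $I_a\otimes|w\rangle\langle w|$ commutes with the $q_a$-only operators $U_{\mathit{SWAP}(0,m)}$, $Z(x)$, and $X_0^b$. Pulling those operators through the projector, the post-measurement state is literally $U_{\mathit{SWAP}(0,m)}Z(x)X_0^b\,|0\rangle|{+}\rangle^{\otimes(n-1)}|\mu\rangle$ (or the $\nu$ analogue), and the first three factors of $U_t^\dag$ cancel them as involutions. This replaces your coordinate-by-coordinate phase check with a one-line structural observation, and makes clear that the cancellation is automatic rather than a fortuitous feature of the randomness in $(m,x)$.

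On that last point, your closing paragraph has the intuition inverted: if the phases on the free $n-1$ address qubits did \emph{not} cancel, the final $W$ would send them to a nonzero computational basis string, so some $v_j=1$ for $j>0$ and Alice would detect with probability~$1$. The phase cancellation is what \emph{lowers} the detection probability to exactly $\tfrac12$; it is not what enables detection.
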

Lemma \ref{lem:Attack2} is essentially the same as Lemma \ref{lem:Attack1M}, since in both of them the same measurements are performed on a test state.
But the analysis of \textbf{Attack3} is much more complicated.
\begin{lemma}[\textbf{Attack3}]\label{lem:Attack3}
    Suppose in Procedure \ref{alg:TestACD1}, Bob adds $q_g$ to $q_a$ and $q_d$ and uses a unitary operator $E$ to entangle $q_g$ to $q_a$ and $q_d$:
    \begin{equation}
        E|i\>|d\>|{0}\> = |i\>|d\>|\lambda_{i,d}\>,
    \end{equation}
    where $|\lambda_{i,d}\>$ is a pure state of $q_g$. In order to read information, $|\lambda_{i,d}\>$ should vary for $i,d$. Then it will be detected with a positive probability $P_{DET}$ depending on $E$. In particular,
      \begin{enumerate}
      \item if $E|i\>|d\>|{0}\> = |i\>|d\>|i\>|d\>,$ then $P_{DET}\geq 1-\frac{1}{N}$.
      \item if $E|i\>|d\>|{0}\> = |i\>|d\>|d\>$, then $P_{DET}=0.5$.
    \end{enumerate}
\end{lemma}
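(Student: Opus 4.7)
The plan is to trace the joint state on $q_a q_d q_g$ through one test round of Procedure~\ref{alg:TestACD1}, combine with the independent second round, and compute Alice's acceptance probability $\Pr(v=w \in \{0^{n+k}, 10^{n+k-1}\})$. The structural fact that drives both parts is the uniform identity
\[ U_t(m,x,b)\,|0\>^{\otimes(n+k)} = \tfrac{1}{\sqrt{2}}\bigl(|\phi_\mu\>_a|\mu\>_d + |\phi_\nu\>_a|\nu\>_d\bigr), \]
\[ U_t(m,x,b)\,|10^{n+k-1}\> = \tfrac{1}{\sqrt{2}}\bigl(|\phi_\mu\>_a|\mu\>_d - |\phi_\nu\>_a|\nu\>_d\bigr), \]
valid for every choice of $(m,x,b)$, where $|\phi_\mu\>,|\phi_\nu\>$ are orthogonal unit vectors on the address register. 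I would verify this by unrolling $U_t = U_{\mathit{SWAP}(0,m)} Z(x) X_0^b V(\mu,\nu)(W\otimes I_d)$ step by step and tracking how $V(\mu,\nu)$ partitions $\tfrac{1}{\sqrt{N}}\sum_i |i\>|d(i)\>$ into $|\mu\>$- and $|\nu\>$-branches after the bit-flip and swap. Inverting gives the collapse formulas $U_t^\dagger|\phi_\mu\>|\mu\> = \tfrac{1}{\sqrt{2}}(|0\>+|10^{n+k-1}\>)$ and $U_t^\dagger|\phi_\nu\>|\nu\> = \tfrac{1}{\sqrt{2}}(|0\>-|10^{n+k-1}\>)$, which in particular show that these post-$U_t^\dagger$ vectors live entirely in $\mathrm{span}\{|0^{n+k}\>, |10^{n+k-1}\>\}$.

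For part~(1), the attack states $\{|i\>|d(i)\>\}_i$ are pairwise orthogonal, so tracing out $q_g$ the reduced state on $q_a q_d$ is the classical mixture $\rho = \tfrac{1}{N}\sum_i |i\>\<i|\otimes|d(i)\>\<d(i)|$. Alice's acceptance outcome probabilities after $U_t^\dagger$ compute as
\[ P(v=0^{n+k}) = \tfrac{1}{N}\sum_i\bigl|\<i|\<d(i)|U_t|0\>\bigr|^2 = \tfrac{1}{N}\cdot N\cdot\tfrac{1}{N} = \tfrac{1}{N}, \]
and identically $P(v=10^{n+k-1})=\tfrac{1}{N}$, because $U_t|10^{n+k-1}\>$ is supported on the same $N$ basis states with amplitudes $\pm 1/\sqrt{N}$. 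Independence of the two test rounds forces $v=w$ for acceptance, giving $P(\text{pass}) = (1/N)^2 + (1/N)^2 = 2/N^2 \le 1/N$ for $N\ge 2$, hence $P_{DET} \ge 1-1/N$.

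For part~(2), since $|\lambda_{i,d}\>=|d\>$ depends only on $d$, the joint post-$E$ state is $\tfrac{1}{\sqrt{2}}(|\phi_\mu\>|\mu\>|\mu\>_g + |\phi_\nu\>|\nu\>|\nu\>_g)$ and the reduced state on $q_a q_d$ is the equal mixture $\tfrac{1}{2}|\phi_\mu\>\<\phi_\mu|\otimes|\mu\>\<\mu|+\tfrac{1}{2}|\phi_\nu\>\<\phi_\nu|\otimes|\nu\>\<\nu|$. Applying $U_t^\dagger$ and invoking the inverted collapse formulas, each branch projects exactly into $\mathrm{span}\{|0^{n+k}\>, |10^{n+k-1}\>\}$, and a short computation gives $P(v=0^{n+k}) = P(v=10^{n+k-1}) = \tfrac{1}{2}$. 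Thus each round yields $v_0$ uniform on $\{0,1\}$ with all other bits zero; across the two independent rounds $v_0,w_0$ are i.i.d.\ uniform, so $P(v_0=w_0)=\tfrac{1}{2}$ and $P_{DET}=\tfrac{1}{2}$ exactly.

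The main obstacle I expect is establishing the uniform collapse identity cleanly across all $(m,x,b)$: the signs contributed by $Z(x)$, $X_0^b$, and the swap interact in a subtle way, but careful bookkeeping shows that the $|\mu\>$-branch of $U_t|10^{n+k-1}\>$ always carries a $+$ and the $|\nu\>$-branch a $-$, the crucial calculation being that $(-1)^{l_m}\cdot(-1)^b = +1$ when $l_m=b$ and $-1$ when $l_m=1\oplus b$. Once this identity is in hand, both parts reduce to short density-matrix calculations. For an arbitrary $E$ beyond the two itemised cases, positive $P_{DET}$ follows from the same framework, since any nontrivial $(i,d)$-dependence of $|\lambda_{i,d}\>$ introduces entanglement that strictly reduces $P(v=0^{n+k})+P(v=10^{n+k-1})$ below $1$.
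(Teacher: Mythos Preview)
Your treatment of the two itemised cases is correct and matches the paper's argument closely. The paper keeps the full entangled state on $q_aq_dq_g$ and computes $p_0=\|M_v\otimes I_g|\Phi'_1\>\|^2$ directly, whereas you trace out $q_g$ first; the two are equivalent, and your ``uniform collapse identity'' for $U_t|0^{n+k}\>$ and $U_t|10^{n+k-1}\>$ is a clean way to organise the bookkeeping that the paper does operator-by-operator. Your bound $P(\text{pass})=2/N^2\le 1/N$ in part~(1) is in fact slightly sharper than the paper's inequality chain, and your part~(2) reproduces exactly the paper's $p_0=p_1=1/2$ conclusion.

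There is, however, a genuine gap in your handling of the \emph{general} case. You assert that ``any nontrivial $(i,d)$-dependence of $|\lambda_{i,d}\>$ \ldots\ strictly reduces $P(v=0^{n+k})+P(v=10^{n+k-1})$ below $1$'', but your own part~(2) is a counterexample: there $|\lambda_{i,d}\>=|d\>$ depends nontrivially on $d$, yet $p_0+p_1=1$ exactly. More generally, whenever $|\lambda_{i,d}\>$ depends only on $d$ (through any function of $d$), the reduced state on $q_aq_d$ stays inside $\mathrm{span}\{|\phi_\mu\>|\mu\>,|\phi_\nu\>|\nu\>\}$ and $p_0+p_1=1$. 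The paper handles this correctly by splitting into two subcases: if $p_0+p_1<1$ then detection comes from some $v_j=1$ with $j>0$; if $p_0+p_1=1$ one still has $p_0,p_1<1$ (because the reduced state is genuinely mixed once $|\lambda_{i,d}\>$ varies on the chosen test state), and then $\Pr(v_0\neq w_0)=2p_0p_1>0$ gives positive detection. Your framework accommodates this fix easily --- you just need to invoke the second detection mechanism rather than claim $p_0+p_1<1$ outright.
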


\subsubsection{General Attacks}
Now let us consider general attacks. The following theorem identifies all of Bob's actions that enable him to pass Alice's tests.
\begin{theorem}\label{thm:strongPassMeasurement}
    Suppose Bob applies a super-operator $\E = \sum_j E_j\circ E_j^\dag$ on $q_a$, $q_d$ and blank $q_g$ in a round of \ref{alg:TestACD1}. If it always passes the test, $\E$can be written as
    \begin{equation*}
        \E = U\circ U^\dag \otimes \E_g,
    \end{equation*}
    where $U$ is a unitary operator on $q_a$ and $q_d$, and $\E_g$ is a super-operator on $q_g$.
\end{theorem}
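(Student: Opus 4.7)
The plan is to translate ``always passes the test'' into pointwise constraints on the Kraus operators of $\mathcal{E}$, and then to exploit the richness of Alice's random parameters $\sigma=(m,x,b,\mu,\nu)$ to rigidify these into the claimed factored form.

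I would first unpack the pass condition. Let $|\psi_v^\sigma\>:=U_t(m,x,b)|v\>|0\>^{\otimes(n+k-1)}$ for $v\in\{0,1\}$, so that Alice's test vector on $q_a q_d$ is $|\psi_0^\sigma\>$. The requirement $v_j=0=w_j$ for $j>0$ at Step~\ref{line:TTerminating} forces the post-Bob state on $q_a q_d q_g$ to lie in $\mathrm{span}\{|\psi_0^\sigma\>,|\psi_1^\sigma\>\}\otimes\mathcal{H}_g$; the requirement $v_0=w_0$ applied to two independent repetitions of the same $\mathcal{E}$ forces the marginal distribution on address qubit~$0$ to be deterministic, for otherwise the two rounds would disagree with positive probability. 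Hence for each $\sigma$ there exist $v_\sigma\in\{0,1\}$ and a density operator $\rho_g^\sigma$ on $\mathcal{H}_g$ with $\mathcal{E}\bigl(|\psi_0^\sigma\>\<\psi_0^\sigma|\otimes|0\>\<0|_g\bigr)=|\psi_{v_\sigma}^\sigma\>\<\psi_{v_\sigma}^\sigma|\otimes\rho_g^\sigma$. Since the $q_a q_d$-marginal is rank one on a pure input, each Kraus $E_j$ must send $|\psi_0^\sigma\>\otimes|0\>_g$ to a product vector $|\psi_{v_\sigma}^\sigma\>\otimes|\phi_{j,\sigma}\>$ with $\sum_j|\phi_{j,\sigma}\>\<\phi_{j,\sigma}|=\rho_g^\sigma$. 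Defining the linear map $A_j|\psi\>:=E_j(|\psi\>\otimes|0\>_g)$ from $\mathcal{H}_{ad}$ to $\mathcal{H}_{ad}\otimes\mathcal{H}_g$, this is the pointwise constraint
\[
 A_j|\psi_0^\sigma\>=|\psi_{v_\sigma}^\sigma\>\otimes|\phi_{j,\sigma}\>
\]
for every test parameter $\sigma$ and every Kraus index $j$.

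The heart of the argument is to globalise these pointwise constraints. First, tomographic completeness: varying $(\mu,\nu)$ at fixed $(m,x,b)$ sweeps out a rich subspace of $\mathcal{H}_{ad}$, and varying $(m,x,b)$ rotates this subspace so that the full family $\{|\psi_0^\sigma\>\}$ spans $\mathcal{H}_{ad}$. Second, ancilla collapse: differences of test vectors that differ in only one coordinate isolate individual sectors of $\mathcal{H}_{ad}$, e.g.\ $|\psi_0^{0,0,0,\mu,\nu_1}\>-|\psi_0^{0,0,0,\mu,\nu_2}\>=\frac{1}{\sqrt{2}}|1\>|+\>^{\otimes(n-1)}(|\nu_1\>-|\nu_2\>)$ lies purely in the $|1\>_0$-sector. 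Matching coefficients on the two sides of $A_j(|\psi_0^{\sigma_1}\>-|\psi_0^{\sigma_2}\>)=|\psi_{v_{\sigma_1}}^{\sigma_1}\>|\phi_{j,\sigma_1}\>-|\psi_{v_{\sigma_2}}^{\sigma_2}\>|\phi_{j,\sigma_2}\>$ in the complementary sector, and iterating over variations of $\mu$, $x$, $m$ and $b$, one forces $|\phi_{j,\sigma}\>$ to be $\sigma$-independent; write $|\phi_{j,\sigma}\>=c_j|\phi_j\>$. Third, reconstruction of $U$: the assignment $|\psi_0^\sigma\>\mapsto|\psi_{v_\sigma}^\sigma\>$ extends by linearity to a well-defined operator $U$ on $\mathcal{H}_{ad}$ (consistency on linear dependencies follows from the linearity of $A_j$ combined with the ancilla collapse), and the Kraus trace-preservation identity $\sum_j A_j^\dagger A_j=I_{ad}$ reduces, given the ancilla collapse, to $\<\psi_{v_{\sigma_1}}^{\sigma_1}|\psi_{v_{\sigma_2}}^{\sigma_2}\>=\<\psi_0^{\sigma_1}|\psi_0^{\sigma_2}\>$, which says $U$ preserves inner products on a spanning set and is therefore unitary. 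Reassembling, $\mathcal{E}(\rho\otimes|0\>\<0|_g)=\sum_j|c_j|^2\,U\rho U^\dagger\otimes|\phi_j\>\<\phi_j|=U\rho U^\dagger\otimes\mathcal{E}_g(|0\>\<0|_g)$, with $\mathcal{E}_g(|0\>\<0|_g):=\sum_j|c_j|^2|\phi_j\>\<\phi_j|$; extending $\mathcal{E}_g$ trivially to $\mathrm{span}(|0\>_g)^\perp$ yields the claimed $\mathcal{E}=U\circ U^\dagger\otimes\mathcal{E}_g$.

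The main obstacle is the ancilla-collapse step: the combinatorial bookkeeping needed to bootstrap from ``$|\phi_{j,\sigma}\>$ depends only on coordinate $\nu$ at fixed $(\mu,m,x,b)$'' to full $\sigma$-independence is where the real work lies, and it relies crucially on the independence of the random parameters $(m,x,b)$ from $(\mu,\nu)$ in Alice's test design. Once that collapse is in hand, the rest is clean linear algebra.
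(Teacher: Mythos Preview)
Your initial reduction is correct and matches the paper: the ``always passes'' condition forces, for each test parameter $\sigma$, the output $\mathcal{E}(|\psi_0^\sigma\rangle\langle\psi_0^\sigma|\otimes|0\rangle\langle 0|_g)$ to be a product $|\psi_{v_\sigma}^\sigma\rangle\langle\psi_{v_\sigma}^\sigma|\otimes\rho_g^\sigma$, and your observation that a rank-one marginal forces each Kraus operator to send $|\psi_0^\sigma\rangle|0\rangle_g$ to a product $|\psi_{v_\sigma}^\sigma\rangle|\phi_{j,\sigma}\rangle$ is sound. Your reconstruction of $U$ from trace preservation is also clean.

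Where you diverge from the paper is in the ancilla-independence step. The paper does not do any combinatorial matching: after obtaining $\mathcal{E}(\psi_0^\sigma\otimes\theta)=U\psi_0^\sigma U^\dagger\otimes\rho'_\sigma$, it invokes monotonicity of fidelity under CPTP maps. Since $\digamma(\psi_0^{\sigma_1}\otimes\theta,\psi_0^{\sigma_2}\otimes\theta)=\digamma(\psi_0^{\sigma_1},\psi_0^{\sigma_2})$ and the output fidelity factorises as $\digamma(U\psi_0^{\sigma_1}U^\dagger,U\psi_0^{\sigma_2}U^\dagger)\cdot\digamma(\rho'_{\sigma_1},\rho'_{\sigma_2})$, the unitary part already saturates the input fidelity, forcing $\digamma(\rho'_{\sigma_1},\rho'_{\sigma_2})=1$, hence $\rho'_{\sigma_1}=\rho'_{\sigma_2}$. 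This is a one-line argument once the product form is in hand, and it completely sidesteps the bookkeeping you anticipate.

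Your proposed mechanism for ancilla collapse has a gap as written. You argue that $|\psi_0^{\sigma_1}\rangle-|\psi_0^{\sigma_2}\rangle$ lies in the $|1\rangle_0$-sector and then want to ``match coefficients in the complementary sector'' on the image side. But nothing constrains $A_j$ to respect the $|0\rangle_0/|1\rangle_0$ decomposition: the $|0\rangle_0$-component of $A_j(|\psi_0^{\sigma_1}\rangle-|\psi_0^{\sigma_2}\rangle)$ need not vanish, so you cannot read off $|\phi_{j,\sigma_1}\rangle=|\phi_{j,\sigma_2}\rangle$ from the $|0\rangle_0$-component of the right-hand side. A workable version of your strategy would instead use genuine linear relations \emph{among} test vectors (of the kind the paper records in its Lemmas~\ref{lem:TestStateDec} and~\ref{lem:TestStateDec2}) so that the product structure on the image side itself forces proportionality of the ancilla factors; this is doable but is exactly the ``combinatorial bookkeeping'' you flag, and it is substantially heavier than the paper's fidelity argument.
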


Its implication to privacy is the following:

\begin{corollary}\label{cor:AlwaysPassTest}
    If Bob wants to always pass the tests, he cannot read any information from $q_a$ and $q_d$ by one-round attacks.
\end{corollary}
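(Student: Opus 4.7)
The plan is to derive the corollary as an almost immediate consequence of Theorem \ref{thm:strongPassMeasurement}. Any one-round attack amounts to Bob applying a single super-operator $\mathcal{E}$ on $q_a$, $q_d$ together with whatever blank ancilla $q_g$ he has introduced (the ``$\mathcal{E}$'' in Definition of a one-round attack is exactly the composite of all local unitaries, ancilla preparations and measurements Bob performs between receiving the qubits and returning them). By assumption Bob wants to always pass Alice's tests, so by Theorem \ref{thm:strongPassMeasurement} this $\mathcal{E}$ must factor as
\begin{equation*}
    \mathcal{E} = (U \circ U^\dag) \otimes \mathcal{E}_g,
\end{equation*}
with $U$ a unitary operator acting only on $q_a$, $q_d$ and $\mathcal{E}_g$ a super-operator acting only on $q_g$.

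From this tensor decomposition I would draw two conclusions. First, the action on $q_a$, $q_d$ is a fixed unitary $U$ that does not depend on the input state. A unitary, applied without any coupling to an auxiliary register, does not extract classical information; in particular the reduced state on anything Bob keeps after returning $q_a$, $q_d$ to Alice is determined solely by the $\mathcal{E}_g$ branch. Second, because $\mathcal{E}_g$ acts only on $q_g$ and $q_g$ was initialized blank (say $|0\rangle$) by Bob himself, the output of $\mathcal{E}_g$ is a fixed quantum state (independent of the data $D$, of Alice's random $y$, and of the state on $q_a$, $q_d$). Hence no measurement statistics Bob can obtain from $q_g$ depend on the database.

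Combining the two observations gives the corollary: whatever Bob keeps after the one-round attack—classical outcomes from measuring $q_g$, or the quantum state on $q_g$ retained for later use—is statistically independent of the content of $D$, and so Bob cannot read any information from $q_a$, $q_d$. I would phrase this formally by noting that for any input state $\rho$ on $q_a, q_d$, the reduced state on $q_g$ after $\mathcal{E}$ is $\mathcal{E}_g(|0\rangle\langle 0|)$, which is independent of $\rho$ (and thus of $D$).

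The only real content beyond invoking Theorem \ref{thm:strongPassMeasurement} is making sure that ``Bob's actions'' in a one-round attack genuinely compile into a single super-operator of the form considered in that theorem; once this is granted, the rest is a short structural argument. The main obstacle is therefore already handled inside Theorem \ref{thm:strongPassMeasurement}, and what remains here is simply the information-theoretic observation that a data-independent unitary on $q_a$, $q_d$ tensored with a data-independent channel on a blank ancilla cannot leak any information about $D$.
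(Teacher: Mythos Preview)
Your proposal is correct and follows essentially the same approach as the paper: invoke Theorem \ref{thm:strongPassMeasurement} to get the tensor decomposition $\mathcal{E} = (U\circ U^\dag)\otimes \mathcal{E}_g$, and then observe that this structure precludes any information flow from $q_a,q_d$ to Bob. The paper phrases the last step slightly differently---it notes that every Kraus operator of $\mathcal{E}$ must have the form $U\otimes M_j$, so no measurement acts on $q_a,q_d$---whereas you argue via the reduced state on $q_g$ being $\mathcal{E}_g(|0\rangle\langle 0|)$ independently of the input; these are two equivalent ways of reading off the same conclusion from the tensor structure.
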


Note that in procedure \ref{alg:ProtocolLoop}, we only add tests around Step \ref{line:L1}. One question directly arises: what happens if Bob attacks at Step \ref{line:L2}? The following lemma answers this question.
\begin{lemma}\label{lem:RemoveingTest2}
     At Step \ref{line:L2}, if Bob performs measurements to read information or sends a special state for future attacks at Step \ref{line:L1}, then it can be detected by procedure \ref{alg:TestACD1}.
\end{lemma}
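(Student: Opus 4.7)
The plan is to reduce any information-extracting attack at Step \ref{line:L2} to an attack at the subsequent Step \ref{line:L1}, and then invoke Theorem \ref{thm:strongPassMeasurement} together with Corollary \ref{cor:AlwaysPassTest}. The starting observation is that after Alice's $U_D(y)$ at Step \ref{line:L2}, under the inductive assumption that all previous steps were executed honestly, the data register $q_{d1}$ of the state on $q_{c1},q_{a1},q_{d1}$ is disentangled in $|\vec{0}\>$ (compare Table \ref{Table:StateGrover}). Hence no transaction $d_j$ is present in $q_{d1}$ at that moment, and any such information that Bob can hope to extract from the computational registers must be re-introduced by a later call to $U_D(y)$ by Alice, the earliest of which occurs at Step \ref{line:L1} of iteration $i+1$.

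First, consider the case where Bob replaces the post-controlled-$G$ state with a specially prepared state $|\psi\>$ on $q_{c1},q_{a1},q_{d1}$. When Alice then applies $U_D(y)$ at Step \ref{line:L1} of iteration $i+1$, the database contents become entangled with the address register inside $U_D(y)|\psi\>$. For Bob to convert this entanglement into classical information about some $d_j$, he must perform a non-unitary operation on this state precisely when he is supposed to apply controlled $U_{f_{i+1}}$. By the indistinguishability of test and computational rounds exploited in Section \ref{sec:MidAttackStrongDetection}, he must apply the same super-operator to any test state that he handles in a round of \ref{alg:TestACD1}$(i+1)$. Such a test is conducted with probability $2p$ (Step \ref{line:LTerminating1} or Step \ref{line:LTerminating2} inside \ref{alg:ProtocolLoop}), and by Theorem \ref{thm:strongPassMeasurement} together with Corollary \ref{cor:AlwaysPassTest}, any super-operator not of the form $U\circ U^{\dag}\otimes \E_g$ is detected with positive probability. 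Hence this attack is caught.

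For the case in which Bob performs a measurement at Step \ref{line:L2}, the post-measurement collapsed state is some $|\psi\>$ that he subsequently sends to Alice; this immediately reduces to the special-state case above. A direct computational-basis measurement on $q_{d1}$ only yields $\vec{0}$ and so reveals no transaction, whereas measurements on $q_{a1}$ or $q_{c1}$ that attempt to extract the phase information $(-1)^{f(d_{j\oplus y})}$ depend on Alice's secret permutation $y$ and can at best give aggregate statistics of $f$ over $D$, which Bob is anyway entitled to learn.

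The main obstacle is to formalize the claim that extracting information about any $d_j$ from $U_D(y)|\psi\>$ at Step \ref{line:L1} of iteration $i+1$ necessarily requires a super-operator outside the admissible family $U\circ U^{\dag}\otimes \E_g$. In particular, one must rule out that Bob defers the read-out by coherently copying the data into an ancilla register as in Example \ref{exam:MPattack3}; such a coherent storage corresponds exactly to an $E$ with $|\lambda_{i,d}\>$ depending nontrivially on $d$, which by Lemma \ref{lem:Attack3} falls outside the admissible family and is detected by \ref{alg:TestACD1} with probability bounded away from zero.
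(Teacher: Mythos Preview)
Your argument is correct and rests on the same key observation as the paper: at Step~\ref{line:L2} the data register is in $|\vec{0}\>$ (honest execution), so extracting any $d_j$ forces Bob into dishonesty at some Step~\ref{line:L1}, where \ref{alg:TestACD1} applies. The paper organises the same content differently: it tabulates three cases according to \emph{where} Bob sends the cheating state and \emph{where} he later measures (send at L1/measure at L2; send at L2/measure at L2; send at L2/measure at L1), and dispatches each by either invoking the L1 detection lemmas or by noting that an even number of $U_D(y)$ calls leaves $q_d$ in $|\vec{0}\>$. Your forward-reduction phrasing is more compact; the paper's enumeration is more explicit in covering the possibility that Bob was already dishonest at the \emph{preceding} Step~\ref{line:L1} (its Case~1), which you absorb into the clause ``under the inductive assumption that all previous steps were executed honestly'' without saying what happens when that assumption fails---namely, that the earlier L1 deviation is itself caught by \ref{alg:TestACD1}. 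Stating that one sentence would close the only presentational gap.
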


\subsection{Impossibility of Recovery}\label{Sec:PrivacyAliceRecovery}
In this subsection, we further show that Bob cannot distinguish the test states. Therefore, he cannot recover his measurement even if he finds that he is dealing with a test state.

The impossibility of distinguishability is based on the following observation:
\begin{lemma}\label{lem:TestStateBasis}
    The test set $\{|\psi_{m,x,b}(\mu,\nu)\>\}$ can be decomposed into the union of $n(2^k-1)$ disjoint bases:
    \begin{equation*}
        \{|\psi_{m,x,b}(\mu,\nu)\>\} = B_1\cup\cdots\cup B_{n(2^k-1)},
    \end{equation*}
    where $B_i\cap B_j=\emptyset$, and $B_i$ is a orthogonal basis of the Hilbert space of $q_a$ and $q_d$, for all  $i\neq j$.
\end{lemma}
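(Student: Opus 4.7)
My plan is to construct, for each unordered pair $\{\mu,\nu\}\subset\{0,1\}^k$, a $2^{n+1}$-dimensional subspace
\begin{equation*}
H_{\mu,\nu}\ :=\ (\mathbb{C}^2)^{\otimes n}_{q_a}\otimes\spans\{|\mu\>,|\nu\>\}_{q_d}
\end{equation*}
on which the $2^{n+1}$ test states indexed by $(x,b)\in\{0,1\}^{n+1}$ form an orthonormal basis for every choice of $m\in\{0,\ldots,n-1\}$; then to glue these sub-bases together via a $1$-factorization of $K_{2^k}$ to obtain the claimed $n(2^k-1)$ pairwise disjoint orthonormal bases of the full $(n+k)$-qubit space.

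First I would unwind $|\psi_{m,x,b}(\mu,\nu)\>=U_t(m,x,b)|0\>^{\otimes(n+k)}$ step by step. The factor $W\otimes I_d$ produces $|+\>^{\otimes n}|\vec{0}\>$, and $V(\mu,\nu)$ then yields $\tfrac{1}{\sqrt 2}(|0\>|+\>^{\otimes(n-1)}|\mu\>+|1\>|+\>^{\otimes(n-1)}|\nu\>)$. Because the subsequent operator $Z(x)X_0^b$ factors as $(Z_0^{x_0}X_0^b)\otimes Z(x_{\geq 1})\otimes I_d$, varying $(x_0,b)$ applies the four Paulis $\{I,X,Z,ZX\}$ to the first address qubit---producing four orthonormal vectors in $\mathbb{C}^2\otimes\spans\{|\mu\>,|\nu\>\}$ out of $\tfrac{1}{\sqrt 2}(|0\>|\mu\>+|1\>|\nu\>)$---while varying $x_{\geq 1}$ runs $Z(x_{\geq 1})|+\>^{\otimes(n-1)}$ through the full Hadamard basis of $(\mathbb{C}^2)^{\otimes(n-1)}$. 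Taking tensor products gives $2^{n+1}$ orthonormal vectors spanning $H_{\mu,\nu}$. Finally $U_{\mathit{SWAP}(0,m)}$ is a unitary on the address register that leaves $H_{\mu,\nu}$ invariant (it is trivial on the data factor and permutes the address factor within itself), so for every $m$ the set $\{|\psi_{m,x,b}(\mu,\nu)\>\}_{(x,b)}$ remains an orthonormal basis of $H_{\mu,\nu}$.

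Next I would observe that whenever two pairs are disjoint, $\{\mu,\nu\}\cap\{\mu',\nu'\}=\emptyset$, the data-register subspaces $\spans\{|\mu\>,|\nu\>\}$ and $\spans\{|\mu'\>,|\nu'\>\}$ are orthogonal in $\mathbb{C}^{2^k}$, hence $H_{\mu,\nu}\perp H_{\mu',\nu'}$. Therefore, for any perfect matching $M$ on the vertex set $\{0,1\}^k$ the $2^{k-1}$ subspaces $\{H_{\mu,\nu}:\{\mu,\nu\}\in M\}$ are mutually orthogonal, and their direct sum is the whole $(n+k)$-qubit space $(\mathbb{C}^2)^{\otimes n}\otimes\mathbb{C}^{2^k}$. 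Combined with the first step, for each fixed $m$ the union $\bigcup_{\{\mu,\nu\}\in M}\{|\psi_{m,x,b}(\mu,\nu)\>\}_{(x,b)}$ is an orthonormal basis of the full Hilbert space, of size $2^{k-1}\cdot 2^{n+1}=2^{n+k}$.

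Finally I would invoke the classical combinatorial fact that the complete graph $K_{2^k}$ admits a $1$-factorization, i.e.\ its $\binom{2^k}{2}$ edges partition into exactly $2^k-1$ edge-disjoint perfect matchings $M_1,\ldots,M_{2^k-1}$ (for instance via the round-robin construction). For each $(m,i)\in\{0,\ldots,n-1\}\times\{1,\ldots,2^k-1\}$ I would set
\begin{equation*}
B_{m,i}\ :=\ \bigcup_{\{\mu,\nu\}\in M_i}\bigl\{|\psi_{m,x,b}(\mu,\nu)\>:(x,b)\in\{0,1\}^{n+1}\bigr\};
\end{equation*}
by the second step each $B_{m,i}$ is an orthonormal basis. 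Disjointness between different $B_{m,i}$'s is transparent because every unordered pair $\{\mu,\nu\}$ lies in a unique $M_i$ and distinct parameter tuples $(m,x,b,\mu,\nu)$ determine distinct test states; a cardinality check ($n(2^k-1)$ bases of size $2^{n+k}$ totalling $n(2^k-1)2^{n+k}=n\cdot 2^{n+1}\binom{2^k}{2}$) confirms the $B_{m,i}$ exhaust the entire test set. The main obstacle I anticipate is the combinatorial bookkeeping---in particular, quoting or constructing the $1$-factorization of $K_{2^k}$ and keeping the indexing sufficiently tidy that disjointness is manifest; the quantum/linear-algebraic content reduces to a routine tensor-product orthonormality check once the subspaces $H_{\mu,\nu}$ have been isolated.
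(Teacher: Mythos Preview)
Your proposal is correct and follows essentially the same architecture as the paper's proof: the paper first shows (its Lemma~\ref{lem:weakBasis}) that $\{|\psi_{m,x,b}(\mu,\nu)\>:x,b\}$ is an orthonormal basis of the subspace you call $H_{\mu,\nu}$, then argues that a perfect matching on $\{0,1\}^k$ yields a full basis, and finally decomposes all pairs into $2^k-1$ perfect matchings. The only substantive difference is that the paper builds the $1$-factorization of $K_{2^k}$ explicitly by induction on $k$ (doubling the vertex set and splitting each old matching into two), whereas you invoke the round-robin construction as a known fact; either is fine for this lemma.
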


The above lemma then implies that Bob cannot distinguish all of the test states.

\begin{lemma}\label{lem:TestStateDis}
    Suppose Bob tries to use  measurement $\{M_v\}$ on $q_a$ and $q_d$ to find which specific test state Alice sends. Then the correct probability is \begin{equation}
        \Pr(m,x,b,\mu,\nu|M_v) \leq \frac{1}{n(2^k-1)}.
    \end{equation}
    In other words, if the measurement outcome is $v$, then the probability that the state is $|\psi_{m,x,b}(\mu,\nu)\>$ is at most $\frac{1}{n(2^k-1)}$. More generally,
    \begin{equation}
        \Pr(B|M_v) = \sum_{|\psi_{m,x,b}(\mu,\nu)\>\in B}\Pr(m,x,b,\mu,\nu|M_v) \leq \frac{1}{n(2^k-1)},
    \end{equation}
    where $B$ is an orthogonal basis as in Lemma \ref{lem:TestStateBasis}.
\end{lemma}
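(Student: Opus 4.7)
The plan is to prove this by a direct Bayesian calculation, exploiting the basis decomposition supplied by Lemma~\ref{lem:TestStateBasis}. Write the POVM element for outcome $v$ as $E_v = M_v^\dag M_v$; this is positive semidefinite and the $\{E_v\}$ satisfy $\sum_v E_v = I$. Since Alice picks $(m,x,b,\mu,\nu)$ uniformly at random, the prior probability of each specific test state equals $1/K$, where $K$ is the total number of test states, which I would count as $n \cdot 2^{n+1}\cdot \binom{2^k}{2} = n(2^k-1)\cdot 2^{n+k}$. By Bayes,
\begin{equation*}
   \Pr(m,x,b,\mu,\nu \mid M_v) \;=\; \frac{\langle \psi_{m,x,b}(\mu,\nu)|E_v|\psi_{m,x,b}(\mu,\nu)\rangle}{\sum_{m',x',b',\mu',\nu'}\langle \psi_{m',x',b'}(\mu',\nu')|E_v|\psi_{m',x',b'}(\mu',\nu')\rangle}.
\end{equation*}

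The key step is to evaluate the denominator using Lemma~\ref{lem:TestStateBasis}. Decompose the test set into the $n(2^k-1)$ disjoint orthonormal bases $B_1,\dots,B_{n(2^k-1)}$ of the Hilbert space of $q_a$ and $q_d$. For any orthonormal basis $B_i$ of this space and any operator $A$, one has $\sum_{|\phi\rangle\in B_i}\langle \phi|A|\phi\rangle = \mathrm{tr}(A)$. Applied to $A = E_v$, this gives $\sum_{|\phi\rangle\in B_i}\langle \phi|E_v|\phi\rangle = \mathrm{tr}(E_v)$ for every $i$, so the denominator is exactly $n(2^k-1)\,\mathrm{tr}(E_v)$.

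For the numerator, since $E_v \succeq 0$, every diagonal entry $\langle \psi|E_v|\psi\rangle$ is bounded above by the largest eigenvalue of $E_v$, which is bounded above by $\mathrm{tr}(E_v)$. Substituting back yields $\Pr(m,x,b,\mu,\nu \mid M_v) \leq \mathrm{tr}(E_v)/(n(2^k-1)\,\mathrm{tr}(E_v)) = 1/(n(2^k-1))$, giving the first inequality. For the basis-level statement, the numerator becomes the sum over one basis $B$, which by the trace identity above is exactly $\mathrm{tr}(E_v)$, so $\Pr(B\mid M_v) = 1/(n(2^k-1))$, matching the claimed upper bound (and in fact achieving it with equality).

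There is no serious obstacle: the only mild care needed is to handle the degenerate case $\mathrm{tr}(E_v)=0$, which can be dismissed since then outcome $v$ occurs with probability $0$ under any test state and the conditional probabilities are undefined (or can be set to $0$ by convention). The argument hinges entirely on Lemma~\ref{lem:TestStateBasis}, which does the real work; everything downstream is a one-line trace identity plus positivity of $E_v$.
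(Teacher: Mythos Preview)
Your proposal is correct and follows essentially the same approach as the paper: both apply Bayes' rule with the uniform prior, evaluate the denominator by splitting the test states into the $n(2^k-1)$ orthonormal bases from Lemma~\ref{lem:TestStateBasis} and using the trace identity $\sum_{|\phi\rangle\in B_i}\langle\phi|E_v|\phi\rangle=\tr(E_v)$, and obtain the basis-level statement as an exact equality. The only cosmetic difference is that the paper bounds the single-state posterior by the basis sum containing it, whereas you bound the numerator directly via $\langle\psi|E_v|\psi\rangle\le\tr(E_v)$; both yield the same inequality.
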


Now we can present the main theorem in this subsection.
\begin{theorem}\label{thm:Recovery}
    Suppose Bob uses a measurement $\{M_v\}$ to read information in a test round, and sends a new state $|\psi_{m',x',b'}(\mu',\nu')\>$ (based on the measurement results) back to Alice instead. Then, the expected success probability that he passes the test is at most $\frac{1}{4}+\frac{3}{4n(2^k-1)}$.
\end{theorem}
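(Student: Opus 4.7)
The plan is to condition the pass probability on Bob's measurement outcome $v$ and on Alice's hidden parameters $(m,x,b,\mu,\nu)$, and to exploit the posterior bound from Lemma~\ref{lem:TestStateDis}. First I would write
\[
 P(\mathrm{pass}) = \sum_v \Pr(v) \sum_{m,x,b,\mu,\nu} \Pr(m,x,b,\mu,\nu \mid v)\,P(\mathrm{pass} \mid m,x,b,\mu,\nu, v),
\]
so that Bayes' rule and Lemma~\ref{lem:TestStateDis} give both $\Pr(m,x,b,\mu,\nu \mid v) \le \frac{1}{n(2^k-1)}$ and $\Pr(B_i \mid v) \le \frac{1}{n(2^k-1)}$ for each of the $n(2^k-1)$ orthonormal bases $B_i$ from Lemma~\ref{lem:TestStateBasis}. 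These encode that Bob's measurement leaves him almost uniformly uncertain about which basis contains Alice's secret test state.

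Next I would identify the ``acceptable'' subspace used at Step~\ref{line:TTerminating}. Because $U_t(m,x,b)$ maps $|0\>^{\otimes(n+k)}$ to $|\psi_{m,x,b}(\mu,\nu)\>$ and $|1\>|0\>^{\otimes(n+k-1)}$ to $|\psi_{m,x,1-b}(\mu,\nu)\>$, the set of returned states that survive Alice's form check is exactly the 2-dimensional span of two test states, namely $|\psi\>:=|\psi_{m,x,b}(\mu,\nu)\>$ and its ``partner'' $|\psi^{\mathrm{sw}}\>:=|\psi_{m,x,1-b}(\mu,\nu)\>$ obtained by flipping $b$; write $\Pi_\psi$ for this projector. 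Then the probability that Bob's returned test state $|\chi_v\>$ passes the form check, given $\psi$, is $\mathrm{tr}(\Pi_\psi|\chi_v\>\<\chi_v|) = |\<\psi|\chi_v\>|^2 + |\<\psi^{\mathrm{sw}}|\chi_v\>|^2$. The additional two-round check $v_0=w_0$ forces Bob's first-address bit to equal the value $c\in\{0,1\}$ produced in the honest round (determined by whether $f(\mu)=f(\nu)$), and since Bob knows neither $\mu$ nor $\nu$ this contributes an extra factor of at most $\frac{1}{2}$.

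The technical heart is to bound $F(\chi_v, v) := \sum_\psi \Pr(\psi \mid v)\,\mathrm{tr}(\Pi_\psi|\chi_v\>\<\chi_v|)$. I would split the sum according to the basis $B_{i_0}$ that contains $|\chi_v\>$: within $B_{i_0}$, orthogonality eliminates every term except $\psi \in \{\chi_v,\chi_v^{\mathrm{sw}}\}$, together contributing at most $\frac{2}{n(2^k-1)}$ by Lemma~\ref{lem:TestStateDis}; for $\psi$ lying in a different basis $B_i$, $i\neq i_0$, I would combine the frame identity $\sum_{\psi\in B_i}|\<\psi|\chi_v\>|^2 = 1$ with the per-basis posterior bound $\Pr(B_i\mid v)\le \frac{1}{n(2^k-1)}$, noting that posterior mass on any basis cannot simultaneously correlate with anomalously large overlaps of its states with $|\chi_v\>$ beyond what Lemma~\ref{lem:TestStateDis} permits. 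Combining yields $F(\chi_v, v) \le \frac{1}{2}+\frac{3}{2n(2^k-1)}$, and multiplication by the $\frac{1}{2}$ first-bit factor delivers the claimed $\frac{1}{4}+\frac{3}{4n(2^k-1)}$.

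The main obstacle is exactly this cross-basis overlap estimate: the crude bound $|\<\psi|\chi_v\>|^2 \le 1$ alone gives only the trivial $F\le 2$, so one must exploit the explicit algebraic structure of $U_t(m,x,b) = U_{\mathrm{SWAP}(0,m)}\,Z(x)\,X_0^b\,V(\mu,\nu)\,(W\otimes I_d)$---in particular the Hadamard layer $W$ and the controlled pair-writer $V(\mu,\nu)$---to show that the bases from Lemma~\ref{lem:TestStateBasis} behave like a near-tight frame for individual test states, which prevents the posterior mass from concentrating in a way that exceeds the $\frac{1}{n(2^k-1)}$ per-basis budget. The $\frac{3}{4n(2^k-1)}$ correction reflects the inevitable slackness of this geometric argument, and near-tightness is witnessed by the attack in which Bob measures with the POVM derived from the test-state frame and returns the test state corresponding to his outcome.
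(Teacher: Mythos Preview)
Your high-level decomposition (condition on $v$, use Lemma~\ref{lem:TestStateDis}, split according to whether Bob's returned state lies in the same orthonormal basis $B_{i_0}$ as Alice's secret state) is exactly the skeleton the paper uses. The final arithmetic $1\cdot\frac{1}{K}+\frac14\cdot\frac{K-1}{K}$ with $K=n(2^k-1)$ is also the paper's endgame. The problem is how you arrive at the factor $\tfrac14$ for the cross-basis case.

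Two technical issues combine into a genuine gap. First, the partner state that passes Alice's form check is not $|\psi_{m,x,1-b}(\mu,\nu)\>$: a direct computation of $U_t(m,x,b)^\dag|\psi_{m,x,1-b}(\mu,\nu)\>$ leaves $\mu\oplus\nu\neq\vec 0$ on the data register, which is rejected at Step~\ref{line:TTerminating}. The second state in the acceptable $2$-plane is obtained by flipping a bit of $x$, not $b$. More importantly, your factorization ``(projection onto the $2$-plane)$\times\tfrac12$'' is not valid: when Bob's returned state is exactly one of the two acceptable states, the pass probability is $1$, not $\tfrac12$, so the $\tfrac12$ cannot be applied uniformly. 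Second, the cross-basis overlap bound you need, $F(\chi_v,v)\le \tfrac12+O(1/K)$, does not follow from the ingredients you list. Using $\Pr(\psi\mid v)\le 1/K$ together with the frame identity $\sum_{\psi\in B_i}|\<\psi|\chi_v\>|^2=1$ gives only $\sum_{\psi\in B_i}\Pr(\psi\mid v)\,\tr(\Pi_\psi|\chi_v\>\<\chi_v|)\le 2/K$, and summing over the $K-1$ remaining bases yields $F\le 2$; using instead $\Pr(B_i\mid v)\le 1/K$ and the trivial $\tr(\Pi_\psi|\chi_v\>\<\chi_v|)\le 1$ yields $F\le (K+1)/K$. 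Neither reaches $\tfrac12$, and the ``near-tight frame'' heuristic you invoke does not close this gap without a concrete statement about how the posterior distributes \emph{within} each $B_i$.

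The paper avoids the overlap route entirely. It fixes both Bob's returned state $|\psi_{m',x',b'}(\mu',\nu')\>$ and Alice's true parameters $(m,x,b,\mu,\nu)$, and then \emph{explicitly} applies $U_t(m,x,b)^\dag$ to the returned state and reads off the distribution of Alice's measurement outcome. In the best cross-basis situation (e.g.\ $m=m'$, $x=x'$, $b=b'$, $\mu=\mu'$, $\nu\neq\nu'$) the resulting state is $\tfrac{1}{\sqrt2}\big(|+\>|0\>^{\otimes(n+k-1)}+|-\>|0\>^{\otimes(n-1)}|\nu\oplus\nu'\>\big)$, which produces four equiprobable outcomes; analysing the two copies jointly then gives pass probability at most $\tfrac14$ (indeed $\tfrac18$ if Bob uses the same state twice). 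This direct computation is what replaces your missing $F\le\tfrac12$ step, and it is where Lemmas~\ref{lem:TestStateDec} and~\ref{lem:TestStateDec2} (change-of-basis identities for the $|\psi_{m,x,b}(\mu,\nu)\>$) implicitly enter.
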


The above theorem ensures that Bob cannot recover the test states after attacks. Recall from Theorem \ref{thm:strongPassMeasurement} that if Bob directly sends states back to Alice after his attacks, it will be detected. So, these two theorems together warrant that once Bob wants to read private information from Alice through one-round attacks, it will be detected.

\subsection{Multi-round Attacks}\label{Sec:MultiRoundAttack}
We have discussed one-round attacks in the last subsection. In this subsection, we further consider multi-round attacks.
A multi-round attack will finish in more than one calls of procedure \ref{alg:ProtocolLoop}. More precisely, we have:
\begin{definition}
    A multi-round attack consists of the following steps:
    \begin{enumerate}
      \item Bob sends some qubits to Alice,
      \item Several calls of Procedure \ref{alg:ProtocolLoop} are executed,
      \item Bob performs measurements to read information after receiving qubits from Alice.
    \end{enumerate}
\end{definition}

We will see in Sections \ref{Sec:MultiRoundAttack} and \ref{Sec:PrivacyAliceMRAttack} that multi-round attacks can actually be ignored, since (1) they can hardly leak information, and (2) they are very hard to be detected. But here let us see two typical multi-round attacks:
\begin{example}\label{exam:MRAttack1}
    Bob employs function $f(x) = \delta(x,d)$ as the target function, where $\delta(x,d) = 1$ if and only if $x=d$. Then Bob runs the protocol honestly to find whether $d\in D$.
\end{example}

This function discloses the information whether $d\in D$ and can be treated as an attack, since we only allow target functions to be maps indicating the inclusion relation $\subseteq$. For this kind of attacks, Alice can construct tests to detect it with a certain probability, although this probability is extremely low (see Appendix \ref{Apd:Attackx=d}).

Fortunately, we can ignore this attack because (1) if $\supp(d)$ is high,  then the information $d\in D$ is no longer private information when mining association rules or decision trees, and (2) if $\supp(d)$ is low, the result can be hardly derived from Algorithm \ref{alg:ProtocolInner} (see Section \ref{Sec:PrivacyAliceAnalysis} for more details).

Another attack focuses on more specific information.
\begin{example}\label{exam:MRAttack2}
    Bob employs Oracle
    \begin{equation*}
        U_f|j\>|d_j\> =(-1)^{\delta(j,i)g(d_j)}|j\>|d_j\>,
    \end{equation*}
    as the target function in Algorithm \ref{alg:ProtocolInner}, where $g(x) = 1$ if and only if $x\subseteq d$. One alternative attack is $g(x) = \delta(x,d)$.
\end{example}
As shown in the next lemma, this kind of attacks is very hard to detect.
\begin{lemma}\label{lem:MultiRoundAttackImpossibility}
    Suppose Bob acts as in Example \ref{exam:MRAttack2}, and Alice only employs tests based on state comparison in a single round (the tests are not restricted to those in this paper). It can be detected in a single round with probability at most $O(\frac{1}{N})=O(\frac{1}{2^n})$. Furthermore, Bob can passes all tests in one execution of Algorithm \ref{alg:ProtocolInner} with probability approximately 1.
\end{lemma}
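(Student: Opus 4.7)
The strategy is to bound, round by round, how much information Alice can extract about Bob's hidden index $i$ from any state-comparison test, and then combine the small per-round bounds with a union bound over a number of test rounds that does not grow with $N$.

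First, I would extract the structure of the attack operator. Writing
\begin{equation*}
    U_{\mathrm{atk}} \;=\; I_a \otimes I_d + |i\>\<i| \otimes (D_g - I_d), \qquad D_g := \sum_\tau (-1)^{g(\tau)}|\tau\>\<\tau|,
\end{equation*}
makes explicit that $U_{\mathrm{atk}}$ deviates from the identity (and indeed from any legitimate oracle $U_h = I_a \otimes D_h$) only on the one-dimensional address subspace $\spans\{|i\>\}$. Consequently, for any test state $|\psi\>$ (possibly entangled with an ancilla Alice keeps) and any candidate function $h$, a direct computation yields
\begin{equation*}
    \<\psi| \bigl(U_h^\dag U_{\mathrm{atk}} \otimes I\bigr) |\psi\> \;=\; \<\psi|\bigl(I_a \otimes D_h \otimes I\bigr)|\psi\> + \eta,
\end{equation*}
with $|\eta|\leq 2\alpha_i$, where $\alpha_i := \<\psi|(|i\>\<i| \otimes I)|\psi\>$ is the probability that measuring the address register of $|\psi\>$ returns $i$.

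Second, I would bound the single-round detection probability. In any state-comparison test, Alice's acceptance event is that the returned state stands in some predicted relation to what she sent; since she does not know Bob's honest function, the predicted relation must be one satisfied by every $U_h$. Her acceptance probability is therefore at least $|\<\psi_{\mathrm{exp}} | \psi_{\mathrm{rec}}\>|^2 \geq 1 - 4\alpha_i - O(\alpha_i^2)$ against the best honest $h$, so the per-round detection probability is at most $4\alpha_i + O(\alpha_i^2)$. Since Alice does not know $i$ and $\sum_{i=0}^{N-1}\alpha_i = 1$ for every fixed test state, Bob can choose $i$ adversarially (or equivalently, the expectation over $i$ bounds any fixed Alice strategy), forcing $\alpha_i \leq 1/N$ on average. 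This yields the first claim that single-round detection probability is $O(1/N) = O(1/2^n)$.

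Third, for the second claim I would take a union bound over all test rounds in one execution of Algorithm \ref{alg:ProtocolInner}. By Corollary \ref{cor:ResultAccuracy} the parameter $T$ depends only on $s_{\min}$ and not on $N$, so the total number of test rounds is $O(T) = O(1)$ in $N$; the total detection probability is therefore $O(T/N) \to 0$, and Bob passes all tests with probability approximately $1$. The main obstacle is making the ``state-comparison test'' class rigorous enough to cover every measurement Alice could construct -- swap tests on kept copies, ancilla-assisted tests, and adaptive strategies across rounds. I would handle this by reducing each such test to a POVM on the returned registers together with Alice's private ancilla, noting that for every POVM element $M$ with $0\leq M\leq I$ the bias $|\tr[M(\rho_{\mathrm{atk}} - \rho_{\mathrm{hon}})]|$ is controlled by the trace distance between the pure outputs, which is $O(\sqrt{\alpha_i})$; the quadratic nature of state-comparison acceptance probabilities then upgrades this to the required $O(\alpha_i)$ bound.
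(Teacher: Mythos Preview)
Your single-round analysis is more careful than the paper's. The paper simply takes a test state with uniform address amplitudes, $|\psi\>=\frac{1}{\sqrt{N}}\sum_j |j\>|\mu_j\>$, computes $|\<\psi'|\psi\>|^2=(1-2/N)^2=1-O(1/N)$ directly, and stops there. Your decomposition of $U_{\mathrm{atk}}$ as identity plus a rank-one address perturbation, followed by the $\alpha_i$-bound and the averaging argument over $i$, genuinely covers arbitrary state-comparison tests and justifies the $O(1/N)$ bound for general Alice strategies, which the paper's proof does not really do despite the lemma's wording.

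There is, however, a concrete gap in your second claim. You invoke Corollary~\ref{cor:ResultAccuracy} to argue that $T$ depends only on $s_{\min}$ and is therefore $O(1)$ in $N$. That corollary describes how an \emph{honest} Bob sets $T$ to estimate a support above the threshold $s_{\min}$; it does not constrain the $T$ used by the attacking Bob of Example~\ref{exam:MRAttack2}. In that attack the effective marked set has size $O(1)$, so for the attack to accomplish anything Bob must run on the order of $\sqrt{N}$ Grover iterations. The paper's proof takes this into account: it allows up to $b\sqrt{N}$ tests (since there are at most $\tfrac{\pi}{4}\sqrt{N}$ iterations) and uses the multiplicative bound $(1-a/N)^{b\sqrt{N}}\approx e^{-ab/\sqrt{N}}\approx 1$ rather than a constant-round union bound. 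Your argument is salvageable with the same correction---replacing $T=O(1)$ by $T=O(\sqrt{N})$ still yields total detection probability $O(\sqrt{N}/N)=O(1/\sqrt{N})\to 0$---but the appeal to Corollary~\ref{cor:ResultAccuracy} is misplaced and should be dropped.
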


Fortunately, since the database $O_D$ is modified to be $U_D(y)$, Example \ref{exam:MRAttack2} is reduced to Example \ref{exam:MRAttack1} finally. Indeed, Bob wants finally to get the exact form of $d_j$ by employing attacks in Example \ref{exam:MRAttack2}. But since $d_j$ is changed to be $d_{j\oplus y}$ and Bob does not know $y$, Bob only gets information $d\in D$ for some $d$. This is exactly the case of Example \ref{exam:MRAttack1}.

\subsection{Attacks on $q_c$ and $q_g$}\label{Sec:PrivacyAliceQcg}
In the previous subsections, we only consider attacks on $q_a$ or on $q_d$. In this subsection, we analyse attacks on the two parts jointly.

First, by the following observations, we can see that $q_c$ will not introduce further information leakage:
\begin{itemize}
  \item There is no information about $D$ on $q_c$.
  \item Bob cannot verify whether the current round is a test round by measurements on $q_c$. This is because no matter Bob performs measurements on $q_c$ in a test round or an original round, the outcome distributions are the same, i.e.,
    \begin{equation}\label{eq:controlTestOriginalRound}
        \Pr(c=i|\text{test round})=\frac{1}{T} = \Pr(c=i|\text{original round}).
    \end{equation}
\end{itemize}

Second, for $q_g$, since $q_g$ is entangled to $q_d$, attacks on $q_g$ are the same as the attacks on $q_d$, which we have already analysed.

\subsection{Privacy Analysis}\label{Sec:PrivacyAliceAnalysis}
Now, we are able to analyse the privacy level of the entire protocol. Let us first examine information disclosed by one-round attacks, and then give the privacy analysis for multi-round attacks.

\subsubsection{The Entire Database}\label{Sec:PrivacyAliceEntireD}
In this subsection, we analyse the privacy of the entire database; that is, how much of $D$ will be disclosed if Bob is dishonest? Consider the following:
\begin{example}\label{exam:StoryEntireDatabase}
    Suppose Alice is a data provider, who sells data, and Bob is a costumer, who wants to buy some access to the data from Alice. Alice wants to keep her data private, as she wants to sell it to other costumers. Bob wants to keep his research private, as his research outcome may bring outcomes. So he will not send the function $f$ to Alice.
\end{example}
In this example, Alice tries to preserve the entire database. Then how many transactions will be disclosed in our protocol:

\textbf{Case 1}. Bob is honest: He exactly follows the protocol. Before measurements in Algorithm \ref{alg:ProtocolInner}, due to the quantum counting algorithm \cite{BrassardHT1998}, he holds the state
\begin{equation*}
    \sum_c\sum_j\alpha_{c,j}|c\>|j\>|\vec{0}\>|g_j\> = \frac{1}{\sqrt{T}}\sum_c|c\>(\beta_{c,0}|\varphi_0\>|\vec{0}\>|0\>+\beta_{c,1}|\varphi_1\>|\vec{0}\>|1\>),
\end{equation*}
where $g_j = f(d_{j\oplus y})$, and $\alpha_{c,j}$, $\beta_{c,0}$, $\beta_{c,1}$ are amplitudes. Since honest Bob only performs measurements on the control qubits $q_c$ and ancilla qubit $q_g$ of this state, he only gets the information of $f(D)$. As he knows nothing else, no transactions in $D$ is disclosed.

\textbf{Case 2}. Bob is semi-honest: He may do further computation on the state $|\theta\>|\varphi_j\>|\vec{0}\>|j\>$ with $j=0$ or $j=1$, which he holds after the final computation.
From this state, the information, which Bob can further get by measurements on the address qubits $q_a$, is whether $g_j=f(d_{j\oplus y})$ is 1 or 0 for some $j\in\{0,1\}^n$. Totally he can get this information $g_{j_1}$ and $g_{j_2}$ for two address $j_1$ and $j_2$ randomly generated from measurements in one run of Algorithm \ref{alg:ProtocolInner}, as there are two copies of states in one run. But unfortunately Bob does not know $y$, and Alice changes $y$ in every run of Algorithm \ref{alg:ProtocolInner}. This means the information he gets is useless. In detail, note that $g_j=f(d_{j\oplus y})$ is a random variable dependent on $y$. Since $y$ is chosen uniformly at random,  we have:
\begin{equation*}
    \Pr(g_j=1)=f(D), \forall j.
\end{equation*}
So what Bob can get from $g_j$ is $f(D)$, which he already known from the honest computation. Therefore, Bob disclose no detailed transaction in $D$.

\textbf{Case 3}. Bob is dishonest: He may perform measurements on the state received from Alice at any time. In previous subsections, we already observed that once Bob tries to get information in a test round through measurements, he may have a probability at least 0.5 to be detected. As a consequence, Alice will stop the whole computation. Then the expected number $E_c$ of rounds that Bob can cheat before being detected, may be computed as follows. If Bob's attack happens in the first (resp. second) round of a loop $i$, it will be a test round with probability $p$ (resp. 0.5). So each time Bob tries to get information through one-round attacks, it will be detected with probability at least $0.5p$. Thus, the expected numbers of one-round attacks before being detected is
\begin{equation*}
    E_c \leq \sum_{i\geq 1} i*0.5p*(1-0.5p)^i = \frac{2}{p}-1 = O(1/p).
\end{equation*}
Therefore, dishonest Bob can get at most a constant number of transactions from $D$.

To conclude this section, let us see the advantage of our quantum protocol over a classical method. Usually, a classical data provider will provide a modified database $D'$, generated from $D$ by adding noise into it, to Bob. So, if the quantum protocol is run on $D$, it is not appropriate to compare it with a classical protocol. But the quantum protocol can also run on $D'$, by combining it with a classical one  together (see Section \ref{sec:CombiningCQ}). A comparison of the quantum protocol (combined with a classical one) with a classical protocol is shown in Table \ref{table:PrivacyEntireD}. In this table,  $O(TM)$ is the number of transactions disclosed without protection in the protocol.

\begin{table}
    \center
    \begin{tabular}{c|ccc}
        & Honest Bob & Semi-honest Bob & Dishonest Bob\\\hline
        Quantum Protocol & 0 & $\approx$ 0& $O(\frac{1}{p})$\\
        Quantum Protocol & \multirow{2}{*}{0} & \multirow{2}{*}{$\approx$ 0} & \multirow{2}{*}{$O(TM)$}\\
        without tests & & &\\
        Classical Method & N & N & N \\
    \end{tabular}
    \caption{Number of transactions disclosed in $D$ or $D'$. $p\in (0,1)$ is a constant, $N$ is the size of database, $T$ is the number of iterations in one run of Algorithm \ref{alg:ProtocolInner}, and $M$ is the total number of runs of Algorithm \ref{alg:ProtocolInner}.}\label{table:PrivacyEntireD}
\end{table}

\subsubsection{Multi-round Attack}\label{Sec:PrivacyAliceMRAttack}
In Section \ref{Sec:MultiRoundAttack}, we already mentioned that multi-round attacks can be ignored in our quantum protocols. Now we are ready to give a detailed explanation.

Suppose that Bob employs function $f(x)=\delta(x,d)$ in order to learn whether $d\in D$. He uses this function to run Algorithm \ref{alg:ProtocolInner} several times and get an approximate result $s\approx |\{j:d= d_j\}|/N$. Then there are the following three situations:
\begin{itemize}
  \item $s>s_{\min}$, where constant $s_{\min}$ is the threshold of support. Then this result is not treated to be private, as it cannot be distinguished from that of a \emph{frequent itemset} and thus can be mined by Bob legally. For instance, since $\supp(d) = |\{j:d\subseteq d_j\}|/N\geq |\{j:d= d_j\}|/N\approx s$, Bob can first get $\supp(d)$ legally. Then he computes $\supp(d')$ for possible supersets $d'\supsetneq d$ to approximate $s$.
  \item $s<s_{\min}$, but $s$ is not far from $s_{\min}$. In this case, since parameter $T$ is determined by $s_{\min}$, the results may be not far to $s_{\min}$. For instance, if $T$ is set to be $T>100/\sqrt{s_{\min}}$, then by Corollary \ref{cor:ResultAccuracy} we see that Bob may get result $0.07s_{\min}$ with probability greater than 0.8. So this still cannot be treated to be private, because this result may be probably mined by Bob legally on a \emph{candidate itemset}.
  \item $s\ll s_{\min}$. In this case, Bob has to enlarge $T$ to get $s$ without intolerable errors, for instance, $T>100/\sqrt{s}$. Since Bob does not know $s$ before computation, he has to adjust $T$ again and again \cite{BrassardHT1998} or directly sets a very large $T$. The comparison of costs for this case is given in Table \ref{table:ComparisonMrAdinD}.
\end{itemize}
\begin{table}
    \center
    \begin{tabular}{c|c|cc}
    $s$ & $N_R$ & C-cost & Q-cost\\
    \hline
    \multirow{2}{*}{$s\ll s_{\min}$} & 1 & $O(Nk)$  & $O((C_D+k)/\sqrt{s})$ \\
     & L & $O(Nk)$  & $O(L(C_D+k)/\sqrt{s})$ \\
    \end{tabular}
    \caption{Comparison of cost to check whether $d\in D$. In this table, $s$ is the frequency of transactions $d_j$, satisfying $d_j=d$. 
    $N_R$ is the number of different rules $d\in D$, which Bob wants to check. ``C-cost'' is the cost on classical database $D$ or $D'$, and ``Q-cost'' is the cost of multi-round attacks on quantum database $U_D$ or $U_{D'}$. $C_D$ is the cost to call $U_D$/$U_{D'}$ once. See Section \ref{sec:CombiningCQ} for $U_{D'}$.\label{table:ComparisonMrAdinD}}
\end{table}
In Table \ref{table:ComparisonMrAdinD}, we notice that usually it is cheaper to cheat in a quantum database $U_D$ if $L$ is small, since it is a search problem to check whether $d\in D$. So one method to overcome this weakness is to combine a classical protocol and a quantum one together: roughly speaking, Alice first modifies $D$ to another $D'$, and then runs the protocol on $U_{D'}$ (see Section \ref{sec:CombiningCQ} for more discussions about this point). After combining the classical and quantum protocols together, it is still faster for Bob to cheat on quantum database $U_{D'}$. But the information he gets on $U_{D'}$ is the same as that on $D'$. Then our quantum protocol is at least as good as a classical one for this function.

Before concluding this section, let us briefly consider Alice's strategy for multi-round attacks. In order to read small $s$, Bob requires large $T$. So Alice can set an upper bound for $T$. Then for a rule $d\in D$ with a low frequency, it can hardly be mined correctly with small $T$.

\section{Protecting Bob's Privacy}\label{Sec:PrivacyBob}
In Section \ref{Sec:PrivacyAliceAnalysis}, we showed from the Alice's side how our quantum protocol can protect privacy. In this section, we analyse Bob's privacy in terms of his functions $f$. For this purpose, it is certainly appropriate to assume that Bob itself is honest. Otherwise, if Bob is dishonest, he can protect his privacy by never sending $f$.

We first consider semi-honest Alice, and the analysis for honest Alice is similar. Semi-honest Alice follows the protocol, but she will do further computation based on measurement outcomes on the test states. In each test round, Alice may get the information about whether $f_i(\mu)=f_i(\nu)$ for some randomly generated $\mu,\nu$. Now we see how many pairs $(\mu,\nu)$ are required for this task. For association rule mining, there are totally at most $2^k$ functions (itemsets). For each pair, Alice compares $f_i(\mu)$ and $f_i(\nu)$, and gets one-bit information of $f_i(\mu)?=f_i(\nu)$. So, she has to build a $k$-level binary decision tree to include all possible $2^k$ leaves. Consequently, in general the number of test rounds is at least $k/2$ to recover one $f$ as there are two copies in each round. Since (1) there is at most only one test round in each loop $i$, and (2) the test round appears randomly, Alice can hardly get enough information to recover $f$. Moreover, as Bob adds noises into $f$ (see Section \ref{Sec:BobStrategyInner}), the information Alice gets may be wrong and thus useless. It is worth noting that some privacy leakage might happen in the last loop $i=T-1$. In Definition \ref{defi:BobStrategy}, no noise is added for $i=T-1$, which means $f_{T-1}=f$. So, if a test round appears, then Alice may know $f(\mu)?=f(\nu)$ for some randomly generated $(\mu,\nu)$. This leakage is not serious, since there are $2^{k-1}$ functions (itemsets) satisfying this one-bit property.

For dishonest Alice, the situation is different. Dishonest Alice may set a test at each loop and construct a special policy to choose test states to read information about $f$. The simple strategy in Definition \ref{defi:BobStrategy} is not sufficient to protect Bob's privacy. Fortunately Bob can protect his privacy from Alice's attacks by simply adding a second confusing qubit in Definition \ref{defi:BobStrategy}. Together with other methods, Bob can further improve his privacy level; see Appendix \ref{Apd:BobFurtherMethod} for details.

\begin{remark}
The privacy analysis for the case that Bob does not add noise and tests is postponed to Appendix \ref{Apd:PrivacyBob}.
\end{remark}

\section{Complexity Analysis}\label{Sec:Complexity}

The aim of this section is to analyse the complexity of our protocol. Actually, the cost of Algorithm \ref{alg:ProtocolInner} is easy to settle. Let us only consider association rule mining as an example. Suppose that the threshold of support is $s_{\min}$, and Bob totally run $M$ times of Algorithm \ref{alg:ProtocolInner} (since he wants to compute the supports of different itemsets or achieve a high accuracy by repetitions).
Then the \textit{computational complexity} is simply $$O(MT(C_D+k+n+t))=O(M(C_D+k+n+t)/\sqrt{s_{\min}}),$$ where $C_D$ is the cost of one call of Alice's database. So, if a \emph{quantum database} is available (e.g. as a quantum random access memory \cite{GiovannettiLM2008qram}), then $C_D=O(nk)$, and the total computational is $O(M(nk+t)/\sqrt{s_{\min}})$. Since $t=\log T$, and $T\leq \frac{\pi}{4}\sqrt{N}$ (meaning that the accuracy level is $\frac{1}{N}$), we have $t=O(n)$. So, the total computational complexity is $O(Mnk/\sqrt{s_{\min}})$. On the other hand, if the data is stored in a \emph{classical database}, then Alice has to use certain quantum gates to construct $O_D$, which costs  $O(N(n+k))$, and the total computational complexity is $O(MN(n+k)/\sqrt{s_{\min}})$.
The \textit{communication complexity} can be analysed similarly, and is $O(MT(t+n+k))=O(M(n+k)/\sqrt{s_{\min}})$. The results are illustrated in Table \ref{table:complexity}. Note that in many applications the communication cost may not be important and necessary. For a centralized database, Alice and Bob are at the same location, and we can imagine Alice as a preset database with an access to Bob. In this case, during communicating the problem of privacy including channel noise is not serious or even does not happen at all.
\begin{table}
    \center
    \begin{tabular}{c|cc}
        & T-cost & C-cost\\
        \hline
        Quantum database & $O(Mnk/\sqrt{s_{\min}})$ & $O(M(n+k)/\sqrt{s_{\min}})$\\
        Classical database & $O(MN(n+k)/\sqrt{s_{\min}})$ & $O(M(n+k)/\sqrt{s_{\min}})$\\
    \end{tabular}
    \caption{Cost of the entire quantum protocol, with data stored in a quantum database or a classical database. ``T-cost'' means computational complexity, and ``C-cost'' means communication complexity.\label{table:complexity}}
\end{table}

Now let us compare the complexity of our quantum protocol with that of a classical algorithm. Many different classical algorithms for the same task have been developed in the literature,  and each of them has a different cost and accuracy level. For those classical algorithms that require to input the whole database $D$ or $D'$ to achieve a better result, the computational and communication costs are both $O(Nk)$. Note that usually in practice $M\ll N$; for instance, $N=10^6$, and Bob might only care the most important hundreds of association rules with $M<10^3$. Then the costs of quantum protocol except the lower left entry of Table \ref{table:complexity} are better than those of classical algorithms, as $Nk>M(n+k)/\sqrt{s_{\min}}$.

\section{Discussions}\label{Sec:Discussions}

In this section, we point out several possibilities for further improvements of the protocol.

\subsection{Combining Classical and Quantum Protocols}\label{sec:CombiningCQ}
As mentioned before, combining our quantum protocol with a classical one is a way to further improve the privacy for Alice. In this strategy, there are totally two steps. The first step is to apply a classical approach on $D$ to get $D'$. Most of the classical approaches in the literature are suitable for this step; for example, randomly flipping elements in each transaction \cite{RizviH2002}, replacing elements partly \cite{EvfimievskiSAG2004}, swapping elements among different transactions \cite{EstivillB1999}.
The second step is to store $D'$ into a quantum database $O_{D'}$. Then our quantum protocol can be executed on $O_{D'}$.

The benefit of this method is obvious. Suppose that a classical approach changes $D$ to $D'$. Then in the classical case, Bob knows the entire $D'$. In the quantum case, however, Bob only knows a small part of $D'$, even if he is dishonest (see Section \ref{Sec:PrivacyAliceAnalysis}). So, this combination protects Alice's privacy much better
than solely using a classical approach.

The disadvantage is that additional error may be introduced. A combination of our quantum protocol and a classical one has two places to generate errors: one is from randomization in the classical protocol, the other is from the quantum counting. Thus, the total error may be larger than that of the classical algorithm.

\subsection{Decision Tree Learning}\label{sec:DecisionTree}
Our protocol was presented mainly for association rule mining, but Algorithm \ref{alg:ProtocolInner} can be directly used to mine decision trees. Consider the basic algorithm for decision tree mining proposed in \cite{Quinlan1986}. Here, we show how it can be combined with Algorithm \ref{alg:ProtocolInner} so that privacy can be protected.
\begin{example}
    Suppose that Alice holds a database 
    $$D=\<(d_0,g_0), (d_1,g_1), \cdots, (d_{N-1},g_{N-1})\>,$$ where $d_j\in\{0,1\}^k$, $g_j=f(d_j)\in\{0,1\}$ for a function $f$. Suppose that $K$ is the set of all first $k$ attributes. Bob wants to build a decision tree, with one attribute in $K$ at each node, to decide $f(d)$  for any input $d$, based on the database $D$. The algorithm is shown below:
    \begin{enumerate}
      \item Set $L=0$, and the root $r$ is set to be an empty node.
      \item For each empty node on level $L$, computes its corresponding attribute:
      \begin{itemize}
        \item Suppose $F$ is the set of attributes corresponding to the ancestors of this node.
        \item Denote $A=\{y:y\in F\}$. Bob computes the support $s_0$ of $(A,0)$ and $s_1$ of $(A,1)$ by Algorithm \ref{alg:ProtocolInner}. If $H(A) = -\sum_i s_i\log s_i$ is smaller than a preset threshold $H_{\min}$, the node is set to be value 0 (if $s_0>s_1$) or 1 (if $s_0<s_1$) for $f$. No child is generated. Return.
        \item For each attribute  $x\in K\setminus F$, denote $A_x=\{x\}\cup\{y:y\in F\}$. Alice computes the support $s_0(x)$ of $(A_x,0)$ and $s_1(x)$ of $(A_x,1)$ by Algorithm \ref{alg:ProtocolInner}. Then she computes the entropy $H(x) = -\sum_i s_i(x)\log s_i(x)$.
        \item Bob chooses the attribute $y$ which maximizes $H(y) = \max H(x)$. The corresponding attribute of this node is set to be $y$, and two children are generated. Each is for value 0 or 1 of $y$.
      \end{itemize}
      \item If no child is generated in this level $L$, terminates. Otherwise, $L:=L+1$ and goes to Step 2.
    \end{enumerate}
\end{example}

The privacy and complexity analyses for the above example are similar to what we did for association rule mining in the previous sections.

\subsection{Dishonesties of Both Alice and Bob}
In this paper, we presented a method for Alice to deal with dishonest Bob, and also a method for Bob to deal with dishonest Alice. In particular, we showed:
\begin{itemize}
  \item If Alice and Bob are both honest, our protocol computes the final results and preserves privacy for both parties.
  \item If Alice is honest and Bob is dishonest, our protocol can (1) compute the final results, (2) detect Bob's attack to protect Alice's privacy, and (3) preserve Bob's privacy.
  \item If Alice is dishonest and Bob is honest, our protocol can (1) compute the final results, and (2) preserve privacy for both Alice and Bob {  (by adding a second confusing qubit)}.
\end{itemize}
Then a question naturally arises: what happens when both Alice and Bob are dishonest? In this case, there is no definite conclusion. It depends on what actions are taken. For example, suppose: \begin{itemize}
  \item Alice acts honestly if $i$ is odd. If $i$ is even, she stores the computational state aside for the next loop, and tries to read $f(\mu)$ by sending Bob a state like $\frac{1}{\sqrt{2}}|c\>|0\>^{\otimes n-1}(|0\>|\mu\>+|1\>|\vec{1}\>)$.
  \item Bob acts honestly if $i$ is even. If $i$ is odd, he performs measurements to read $d_j$, and sends to the post-measurement state back to Alice.
\end{itemize}
Then in each loop $i$, either Alice or Bob cheats, and the computation cannot be accomplished. But if Bob's actions are switched, then both Alice and Bob act honestly when $i$ is odd. Furthermore, if Alice stores the computational states properly when $i$ is even, then the computation can be accomplished (but with a larger error).

\subsection{Compatibility with Other Quantum Algorithms}\label{Sec:PureDatabase}
Note that in the protocol, Alice's part is compatible with other quantum algorithms in the sense that if Bob wants to run a quantum algorithm on Alice's database other than quantum counting, he only needs to modify his part of protocol. For example, suppose Bob wants to find whether a given transaction $d$ is in the database by running a quantum walk on a hyper cube \cite{ShenviKW2003}, where each node corresponds to an address $j$ and a transaction $d_j$ (or $d_{j\oplus y}$). In a quantum walk-based search, the operator $G$ is replaced by some other operators. What we need to do are the following modifications on Bob's part:
\begin{itemize}
  \item Change the initial state of the control qubits. In Algorithm \ref{alg:ProtocolInner}, Alice does not check the initial state on control qubits. So for the searching problem, Bob can simply set all control qubits to be $|1\>$.
  \item Remove Step \ref{line:BFourier} in Algorithm \ref{alg:ProtocolInner} and change his actions at Step \ref{line:L2} in procedure \ref{alg:ProtocolLoop}.
\end{itemize}
It is easy to see that privacy of both parties is preserved in the same way as our original protocol.

\subsection{Parameter $p$ in Algorithm \ref{alg:ProtocolInner}}
The parameter $p$ in Algorithm \ref{alg:ProtocolInner} indicates how frequently tests are employed. Obviously, for different models, the best choice of $p$ varies. Note that $p$ only matters in (1) detecting Bob's one-round attacks (dishonest actions), and (2) disclosing Bob's privacy $f$ by comparing $f(\mu)$ and $f(\nu)$. So, $p=0$ or $p\ra 0$ is preferred for honest or semi-honest Bob. If Bob is dishonest, the situation becomes quite different:
\begin{itemize}
  \item Honest Alice: A big $p$ is preferred, as it protects Alice's privacy better than small $p$.
  \item Semi-honest Alice: Either a big or small $p$ is not the best choice, since one party's privacy is likely to be disclosed in both cases. So, a medium $p$ is the most suitable choice.
  \item Dishonest Alice: No best choice exists, because the protocol may not work if Alice always cheats.
\end{itemize}

The preferred choices of $p$ are summarised in Table \ref{table:ModelP}.
\begin{table}
    \center
    \begin{tabular}{c|ccc}
      \diagbox{Alice}{Bob} & Honest & Semi-honest & Dishonest\\
      \hline
      Honest & $p\ra 0$ & $p\ra 0$ & Big $p$\\
      Semi-honest & $p\ra 0$ & $p\ra 0$ & Medium $p$\\
      Dishonest &  $p\ra 0$ & $p\ra 0$ & --\\
    \end{tabular}
    \caption{Preferred choice of $p$ in Algorithm \ref{alg:ProtocolInner} for different models. \label{table:ModelP}}
\end{table} However, Alice and/or Bob cannot know which situation they are facing. So, Table \ref{table:ModelP} is helpless in practice. Generally speaking, Alice prefers a big $p$. But if $p$ is too big, Bob's privacy will be disclosed when Alice is not honest. In Section \ref{Sec:PrivacyAliceAnalysis}, it was shown that if $N$ is big enough, the ratio of information (transactions) disclosed is nearly 0 for any $p\in (0,1)$. So, in practice, $p=0.05$ may work well. Indeed, the most important implication of parameter $p\in (0,1)$ is not to detect Bob's privacy but to tell Bob that once he cheats, he may be caught. This fact may force Bob to be honest.

\newpage
\appendix

\section{Procedure \ref{alg:TestACDG}} \label{Apd:TestACD2ACDG}

In this Appendix, we present the detailed description of procedure \ref{alg:TestACDG} that was only very briefly discussed in Section \ref{main protocol}.
\begin{procedure}
    \SetKwData{Left}{left}\SetKwData{This}{this}\SetKwData{Up}{up}
    \SetKwFunction{Union}{Union}\SetKwFunction{FindCompress}{FindCompress}
    \SetKwInOut{Input}{input}\SetKwInOut{Output}{output}
    \Output{ ``Dishonesty detected'', if Bob's dishonesty is detected.}
    \Begin{
    Alice generates $\mu<\nu\in\{0,1\}^k$, $c\in\{0,1\}^t$, $m\in\{0,1,\cdots,n-1\}$, $x\in \{0,1\}^n$, and $b\in\{0,1\}$ uniformly at random\;
    Alice prepares $|\Phi\> = |c\>_{q_c}\otimes U_t(m,x,b)|0\>^{\otimes (n+k)}_{q_{a}, q_{d}}$ {  on new control qubits $q_c$, address qubits $q_a$, and data qubits $q_d$}\;\label{line:T31}
    Bob adds a qubit $q_{g}$ initialized to be $|0\>$ to the end of these qubits, and Bob applies $U'_{f}$ on $q_{a}$, $q_{d}$, and $q_g$\;
    Alice applies $U_t(m,x,b)^\dag$ on $q_{a}$ and $q_{d}$, obtaining the state $$|\Phi_1\> =  |c\>_{q_c}\otimes U_t(m,x,b)^\dag\otimes I_g (U'_{f} (U_t(m,x,b)|0\>^{\otimes (n+k)}\otimes |0\>));$$\label{line:T39}\\
    \vspace{-1em}
    Alice and Bob repeat Step \ref{line:T31} to Step \ref{line:T39} to get $|\Phi_2\>$\;
    Alice employs a quantum controlled swap test to test whether $|\Phi_1\>$ and $|\Phi_2\>$ are the same. If not, Alice terminates the entire protocol\;\label{line:T3ControlledSwapTest}
    Alice measures all address and data qubits except the first address qubit of $|\Phi_1\>$, and $|\Phi_2\>$, according to the basis $\{|0\>,|1\>\}$. Let the outcomes be $v$ and $w$, respectively, both in $\{0,1\}^{n+k-1}$\;
    {  If  $v_j=1$, or $w_j=1$ for any $j$, \Return{``Dishonesty detected''}\;\label{line:T3Terminating}}
    }
    \caption{TestBob2()}\label{alg:TestACDG}
\end{procedure}

\subsection{Controlled Swap Test}\label{Sec:ControlledSwapTest}
Controlled swap tests are employed at Step \ref{line:T3ControlledSwapTest} in procedure \ref{alg:TestACDG} to check whether Bob performs measurements on $q_g$ or $q_d$. In this subsection, we briefly describe these tests. For details, we refer to \cite{ControlledSwapTest}.

A quantum swap gate consists of three CNOT gates, and swaps the states of two qubits:
\begin{align*}
    SWAP: \sum_{i,j\in\{0,1\}}\alpha_{i,j}|i\>|j\> \ra\sum_{i,j}\alpha_{i,j}|i\>|j\oplus i\>\ra\sum_{i,j}\alpha_{i,j}|j\>|j\oplus i\>\ra \sum_{i,j}\alpha_{i,j}|j\>|i\>.
\end{align*}
Then a controlled swap test on two $n$-qubit states $|\psi\>$ and $|\phi\>$ works as follows:
\begin{enumerate}
  \item Add an ancilla qubit in state $|+\>$ before $|\psi\>$ and $|\phi\>$, and get
    \begin{equation*}
        |\Psi_1\> = |+\>|\psi\>|\phi\>.
    \end{equation*}
  \item Apply a controlled swap operator $U_{CS}$ on $|\Psi_1\>$, where the first qubit is the control qubit, and the other qubits are the target:
    \begin{equation*}
        |\Psi_2\>=U_{CS}|\Psi_1\> = \frac{1}{\sqrt{2}}(|0\>|\psi\>|\phi\>+|1\>|\phi\>|\psi\>),
    \end{equation*}
    where $|\psi\>$ and $|\phi\>$ are swapped if the control qubit is in state $|1\>$.
  \item Apply a Hadamard gate on the control qubit, and get
    \begin{align*}
        |\Psi_3\> &= \frac{1}{\sqrt{2}}(|+\>|\psi\>|\phi\>+|-\>|\phi\>|\psi\>)\\
        &= \frac{1}{{2}}(|0\>(|\psi\>|\phi\>+|\phi\>|\psi\>)+|1\>(|\psi\>|\phi\>-|\phi\>|\psi\>)).
    \end{align*}
    If $|\psi\>=|\phi\>$, we have $|\Psi_3\> = |0\>|\psi\>|\phi\> $.
  \item Measure the control qubit in basis $\{|0\>,|1\>\}$. If outcome is 1, $|\psi\>\neq|\phi\>$ is detected.
\end{enumerate}
At Step 4, the probability to get outcome 1 is
\begin{align*}
    \Pr(1) = \frac{1}{{4}}\||\psi\>|\phi\>-|\phi\>|\psi\>\|^2      = \frac{1}{{4}}(2-2\<\phi|\psi\>\<\psi|\phi\>)
    = \frac{1-|\<\phi|\psi\>|^2}{2}.
\end{align*}
Omitting the control qubit, the post-measurement states are
\begin{equation*}
    \begin{cases}
        |\varphi_0\> = \frac{1}{{2}}(|\psi\>|\phi\>+|\phi\>|\psi\>) & \text{if outcome is}\ 0\\
        |\varphi_1\> = \frac{1}{{2}}(|\psi\>|\phi\>-|\phi\>|\psi\>) & \text{if outcome is}\ 1.
    \end{cases}
\end{equation*}
In conclusion, if $|\psi\>=|\phi\>$, the outcome is always 0, and the state $|\psi\>|\phi\>$ remains unchanged. Otherwise, there is a probability $\frac{1-|\<\phi|\psi\>|^2}{2}$ to have outcome 1.

\section{Proofs of Lemmas and Theorems}

In this Appendix, we provide the proofs of the lemmas and theorems presented in the main text.
\subsection{Proof of Lemma \ref{lem:Attack1C}}
    We first specify the situation:
    \begin{enumerate}
      \item Alice sends $|c\>|\psi_{m,x,b}(\mu,\nu)\>$ to Bob.
      \item After Bob's actions, he sends a state to Alice; see Table \ref{Table:Attack1C}.
      \item Alice checks the test state.
    \end{enumerate}
    \begin{table}
        \center
        \begin{tabular}{l|c|c}
            Test & State if honest & State in this attack\\\hline
            \ref{alg:TestACD1} &  $|c\>|\psi_{m,x',b}(\mu,\nu)\>$ & $|\varphi\>|i\>|d\>$ \\
        \end{tabular}
        \caption{The state Bob sends to Alice. In this table $x'$ depends on whether $f(\mu)=f(\nu)$.\label{Table:Attack1C}}
    \end{table}

    Now we prove that it can be detected by the final measurement (Step \ref{line:TTerminating}).

    \textbf{Case 1} $d=\mu$: Let $|\psi'\>$ denote the state that Bob sends back to Alice. Then $|\psi'\>=|\varphi\>|a\>|\mu\>$. Alice will transfer it to
    \begin{equation*}
        |\phi'\> = V(\mu,\nu)X_0^{b}Z(x)U_{SWAP(0,m)}|\psi'\>,
    \end{equation*}
    and then
    \begin{equation*}
        |\Phi_1\> =  I_c\otimes W\otimes I_d |\phi'\>.
    \end{equation*}

    Now we see what $|\Phi_1\>$ is. The initial state is $|\psi'\>=|\varphi\>|a\>|\mu\>$. Then
    \begin{enumerate}
      \item $U_{SWAP(0,m)}$ maps $a$ to $a'\in\{0,1\}^n$;
      \item $Z(x)$ only possibly adds a global phase $-1$;
      \item $X^b_0$ maps $a'$ to $a''$;
      \item $V(\mu,\nu)$ only works on the data qubits, and then $|\phi'\>$ can be $|a''\>|\vec{0}\>$ or $|a''\>|\mu\oplus\nu\>$, after omitting the global phase.
    \end{enumerate}
    Consequently, $|\phi_1\>$ may be $|w\>|*\>$, where $w\in\{+,-\}^n$. On each address qubit, the measurement outcome will be 0 with probability $\frac{1}{2}$. Therefore, this attack passes the final measurement (Step \ref{line:TTerminating}) with probability at most $\frac{1}{2^{n-1}}=\frac{2}{N}$.

    Since the test consists of two copies, the attack can be further detected. Suppose that Bob sends $|\varphi\>|a\>|\mu\>$ in the first round. By the above discussion, this means that $v_0$ has a fifty-fifty chance to be 0 or 1. For the other round, we assume that $w_0$ comes out as 0 with probability $p$, and 1 with probability $1-p$. Then Bob has probability $\frac{1}{2}(1-p)+\frac{1}{2}p = \frac{1}{2}$ to fail. Thus, the total probability to pass the test is at most $\frac{1}{N}$.

    \textbf{Case 2}. $d=\nu$: Similar.

    \textbf{Case 3}. $d\neq\mu$ and $d\neq\nu$: The probability to pass the final measurement may be lower. This is because there must be at least one $|1\>$ occurring in the data qubits of $|\Phi_1\>$ at last. So, it will be detected with probability 1.

\subsection{Proof of Lemma \ref{lem:Attack1M}}
    In the scenario of this lemma, Bob thinks that he receives $|\varphi\>|a\>|d_{a\oplus y}\>$ from Alice. But indeed, he receives $|c\>|\psi_{m,x,b}(\mu,\nu)\>$. Then the measurements are taken on $|c\>|\psi_{m,x,b}(\mu,\nu)\>$.

    If the measurements are performed jointly on $q_a$ and $q_d$, then the post-measurement state is $|c\>|j\>|\mu\>$ or $|c\>|j\>|\nu\>$, and it can be reduced to Lemma \ref{lem:Attack1C} as Bob will send this state back to Alice. So, it remains to analyse the case when measurements are on only $q_d$.

    Since measurements are only performed on $q_d$, 
    all the measurement operators are local operators on $q_d$ and have the form $M=I^{\otimes n} \otimes |w\>\<w|$, where $w\in\{0,1\}^k$. Thus, each $M$ commutes with $X_0$, $Z(x)$ and $U_{SWAP(0,m)}$, as the latter three operators are local operators on $q_a$. Therefore, the post-measurement state of $|c\>|\psi_{m,x,b}\>$ can be written as
    \begin{equation*}
        |\psi'\>=\begin{cases}
            |c\>U_{SWAP(0,m)}Z(x)X_0^b |0\>|+\>^{\otimes n-1}|\mu\>, & \text{the outcome is~} \mu\\
            |c\>U_{SWAP(0,m)}Z(x)X_0^b |1\>|+\>^{\otimes n-1}|\nu\>, & \text{the outcomes is~} \nu
        \end{cases}.
    \end{equation*}
    Bob sends it back to Alice, and then Alice has
    \begin{align*}
        |\phi'\>=\begin{cases}
           |c\> V(\mu,\nu) |0\>|+\>^{\otimes n-1}|\mu\>=|0\>|+\>^{\otimes n-1}|\vec{0}\>, & \text{the case~} \mu\\
           |c\>V(\mu,\nu) |1\>|+\>^{\otimes n-1}|\nu\>=|1\>|+\>^{\otimes n-1}|\vec{0}\> , & \text{the case~} \nu
        \end{cases},
    \end{align*}
    Furthermore, it holds that
    \begin{equation*}
        |\Phi_1\>=\begin{cases}
            |c\>|+\>|0\>^{\otimes n+k-1}, & \text{the case~} \mu\\
            |c\>|-\>|0\>^{\otimes n+k-1}, & \text{the case~} \nu
        \end{cases}.
    \end{equation*}
    In both cases, the first element $v_0$ of the final outcome result $v$ has a fifty-fifty probability to become 0 or 1. This means that this attack can be detected by \ref{alg:TestACD1} with probability $\frac{1}{2}$ by the condition $v_0\neq w_0$.

\subsection{Proof of Lemma \ref{lem:Attack3}}
    In this proof, we omit $|c\>$, since it indeed does not change through the test. Moreover, we use $|\Psi'\>$, $|\Phi'\>$ and  $|\Phi'_1\>$ on $q_a, q_d, q_g$ to replace $|\psi'\>$, $|\phi'\>$ and  $|\Phi_1\>$, respectively.

    (1) The general case. By the assumption of this lemma, there exist some $i,i'\in\{0,1\}^n$, and $d,d'\{0,1\}^k$ such that $|\lambda_{i,d}\>\neq|\lambda_{i',d'}\>$. We can further assume $i\neq i'$ and $d\neq d'$. (Otherwise,  if $i=i'$ or $d=d'$, we choose $i''\not\in\{i,i'\}$ and $d''\not\in\{d,d''\}$, and then $|\lambda_{i'',d''}\>$ must be different to one of the original two.)

    Suppose $d<d'$. By the construction of test states, we find $\mu=d$, and $\nu=d'$ with probability $\frac{2}{2^k(2^k-1)}$. As $i\neq i'$, there exists $m$ such that $i_m\neq i'_m$. Then for $b=i_m$ and any $x$, we can find both item $|i\>|d\>$ and $|i'\>|d'\>$ in $|\psi_{m,x,i_m}(\mu,\nu)\>$. The total probability to generate this state  is $$\frac{2}{2^k(2^k-1)}\frac{1}{n}\frac{1}{2}=\frac{1}{n2^k(2^k-1)}.$$

    Now $|\psi_{m,x,i_m}(\mu,\nu)\>$ is entangled to be $|\Psi_{m,x,i_m}(\mu,\nu)\>$. Alice transforms it to
    \begin{align*}
        |\Phi'_1(\mu,\nu)\>=(V(\mu,\nu)X_0^{i_m}Z(x)U_{SWAP(0,m)})\otimes I_g|\Psi'_{m,x,i_m}(\mu,\nu)\>.
    \end{align*}
    Since these operators are unitary and only on the address and data qubits, the address and data qubits are still entangled to $q_g$. Then no matter whether measurement operator $M_v=|v\>\<v|$ or  $M_{v'}=|v'\>\<v'|$ is used with $|v\>=|0\>|0\>^{\otimes n+k-1}$ and $|v'\>= |1\>|0\>^{\otimes n+k-1}$, it holds
    \begin{equation*}
    \begin{cases}
        p_0=\|M_v\otimes I_g |\Phi'_1(\mu,\nu)\>\|^2<1,\\
        p_1=\|M_{v'}\otimes I_g |\Phi'_1(\mu,\nu)\>\|^2<1.
    \end{cases}
    \end{equation*}
    Thus, if $p_0+p_1<1$, it will be detected by $v_j=1$ with $j>0$. Otherwise it will be still detected by the condition $v_0\neq w_0$ with a positive probability as $p_0<1$ and $p_1<1$.

    (2) The case $E|i\>|d\>|{0}\> = |i\>|d\>|i\>|d\>$.  We have:
    \begin{align*}
        |\Phi'(\mu,\nu)\>=&(V(\mu,\nu)X_0^{i_m}Z(x)U_{SWAP(0,m)})\otimes I_q\frac{1}{\sqrt{N}}\sum_{i}\beta_i|i\>|\tau_i\>|i\>|\tau_i\>\\
        =& \frac{1}{\sqrt{N}}\sum_{i}\beta_i(V(\mu,\nu)X_0^{i_m}Z(x)U_{SWAP(0,m)}|i\>|\tau_i\>)|i\>|\tau_i\>\\
        =& \frac{1}{\sqrt{N}}\sum_{i}\beta_i|i'\>|\tau'_i\>|i\>|\tau_i\>,
    \end{align*}
    where $\beta_i\in\{1,-1\}$, $\tau_i\in\{\mu,\nu\}$, $V(\mu,\nu)X_0^{i_m}Z(x)U_{SWAP(0,m)}$ is a bijection mapping $i$ to $i'$, and $\tau'_i\in\{0,\mu\oplus\nu\}$. Then
    \begin{align*}
        &|\Phi'_1(\mu,\nu)\>= \frac{1}{\sqrt{N}}\sum_{i}\beta_i(W|i'\>)|\tau'_i\>|i\>|\tau_i\>.
    \end{align*}
    Since each item has a $|i\>$ which belongs to $q_g$ and is orthogonal to each other, probabilities $p_0$ and $p_1$ can be calculated:
    \begin{align*}
        p_0=&\|M_v\otimes I_g |\Phi'_1(\mu,\nu)\>\|^2\\ =& \|\frac{1}{\sqrt{N}}\sum_{i}\beta_i(\<00\cdots 000|W|i'\>)\<00\cdots 00|\tau'_i\>|i\>|\tau_i\>\|^2\\
        =&\frac{1}{N}\sum_{i}\|(\<00\cdots 000|W|i'\>)\<00\cdots 00|\tau'_i\>\|^2\\
        =&\frac{1}{N^2}\sum_{i}\|\<00\cdots 00|\tau'_i\>\|^2\leq \frac{1}{N},
    \end{align*}
   Similarly, we obtain:
    \begin{align*}
        p_1=&\frac{1}{N}\sum_{i}\|(\<10\cdots 000|W|i'\>)\<00\cdots 00|\tau'_i\>\|^2\\
        =&\frac{1}{N^2}\sum_{i}\|\<00\cdots 00|\tau'_i\>\|^2\leq \frac{1}{N}.
    \end{align*}
    Therefore the probability to fail is
    \begin{align*}
        \Pr(deteceted)
        =&1-p_0-p_1+(p_0+p_1)\Pr(v_0\neq w_0)\geq 1-\frac{1}{N}.
    \end{align*}

    (3) For the case $E|i\>|d\>|{0}\> = |i\>|d\>|d\>$, as in the proof of Lemma \ref{lem:Attack1M}, we have
    $$|\Phi'_1(\mu,\nu)\>=\frac{1}{\sqrt{2}}(|0\>|0\>^{\otimes n+k-1}|\mu\>+|1\>|0\>^{\otimes n+k-1}|\nu\>).$$
    Then we have $p_0=p_1=\frac{1}{2}$. It will be detected by $v_0\neq w_0$ with probability $\frac{1}{2}$.

\subsection{Proof of Theorem \ref{thm:strongPassMeasurement}}
We check what will happen, if the state is entangled to $q_g$. Assume the test states are $|\psi_{m,x,b}(\mu,\nu)\>$. For simplicity, we denote it by $|\psi\>$. Suppose Bob employ $\E_1$ and $\E_2$ on the two test states with new ancilla qubits (initialed to be $|\theta_1\>$ and $|\theta_2\>$ respectively). Then we have
\begin{equation*}
    \begin{cases}
        \E_1(\psi\otimes \theta_1) =\rho= \sum_u \lambda_u |\varphi_u\>\<\varphi_u|,\\
        \E_2(\psi\otimes \theta_2) =\sigma= \sum_v \chi_v |\omega_v\>\<\omega_v|,
    \end{cases}
\end{equation*}
where $\lambda_u,\chi_v\in[0,1]$, and $\{|\varphi_u\>\}$, $\{|\omega_v\>\}$ are orthonormal bases. Since density operators can be seen as probabilistic distributions over pure states, the following two facts are equivalent:
\begin{itemize}
  \item $\rho$ and $\sigma$ pass the test with probability 1.
  \item For any $u,v$, $|\varphi_u\>$ and $|\omega_v\>$ pass the test with probability 1.
\end{itemize}
Suppose $|\varphi\>$ stands for any $|\varphi_u\>$ and $|\omega\>$ stands for any $|\omega_v\>$, and
\begin{equation*}
    \begin{cases}
        |\varphi\> = \sum_j \alpha_j|\xi_j\>|j\>,\\
        |\omega\> = \sum_j \beta_j|\gamma_j\>|j\>,
    \end{cases}
\end{equation*}
where $\{|j\>\}$ is an orthonormal basis of the Hilbert space $\hs_g$ on $q_g$, and $|\xi_j\>$, $|\gamma_j\>$ are normalized states ($\sum_j |\alpha_j|^2 = \sum_j |\beta_j|^2=1$). Suppose the same recovery operator for these two states in the test is $R$. Then after recovery, the states become
\begin{equation*}
    \begin{cases}
        |\bar{\varphi}\> = \sum_j \alpha_j(R|\xi_j\>)|j\>,\\
        |\bar{\omega}\> = \sum_j \beta_j(R|\gamma_j\>)|j\>.
    \end{cases}
\end{equation*}
Note that Bob holds the second part of $|\varphi'\>$ and $|\omega'\>$, Alice performs measurements only on the first part. Then in order to pass Step \ref{line:TTerminating}, the measurement outcomes are always fixed, and the above states must be
\begin{equation*}
    \begin{cases}
        |\bar{\varphi}\>= |\tau\>|0\>^{\otimes n+k-1}|\varphi'\>,\\
        |\bar{\omega}\>= |\tau\>|0\>^{\otimes n+k-1}|\omega'\>,
    \end{cases}
\end{equation*}
where $\tau\in\{0,1\}$, and $|\varphi'\>, |\omega'\>$ are states of $q_g$. Since $R$ is a unitary operator, we have
\begin{equation*}
    \begin{cases}
        |\varphi\> = R^\dag\otimes I_g |\bar{\varphi}\> = |\xi\>\otimes|\varphi'\>,\\
        |\omega\> = R^\dag\otimes I_g |\bar{\omega}\> = |\xi\>\otimes|\omega'\>.
    \end{cases}
\end{equation*}
Moreover, since $|\varphi\>$ stands for any $|\varphi_u\>$ and $|\omega\>$ stands for any $|\omega_v\>$, $|\xi\>$ is independent of $u,v$. Then we have
\begin{equation*}
    \begin{cases}
        \E_1(\psi\otimes \theta_1) =\rho= |\xi\>\<\xi|\otimes \rho',\\
        \E_2(\psi\otimes \theta_2) =\sigma= |\xi\>\<\xi|\otimes \sigma'.
    \end{cases}
\end{equation*}
Since all possible test states $|\psi\>$ can be used to construct several orthonormal bases (see Lemma \ref{lem:TestStateBasis}), $\E_1$ can be written as
\begin{equation*}
    \E_1 = U\circ U^\dag \otimes \E_g,
\end{equation*}
where $U$ is a unitary operator on $q_a$ and $q_d$, and $|\xi\> = U|\psi\>$. The conclusion about $\E_2$ can be proved similarly.

Now we prove that $\rho'$ ($\sigma'$) is independent on $\xi$. By the above explicit form of $\E_1$, it seems obvious intuitively. But here we give a strict proof. Suppose $|\psi_1\>$ and $|\psi_2\>$ are two test states for short. Suppose $\E(\psi_i\otimes \theta) =  U|\psi_i\>\<\psi_i| U^\dag\otimes \rho'_i$. Since the fidelity $\digamma(\psi_i\otimes \theta, \psi_i\otimes \theta)$ will increase after super-operator $\E$, we have
\begin{align*}
    \digamma(\psi_1\otimes \theta, \psi_2\otimes \theta) &\leq \digamma(\E(\psi_1\otimes \theta), \E(\psi_2\otimes \theta))\\
    &= \digamma(U|\psi_1\>\<\psi_1| U^\dag\otimes \rho'_1, U|\psi_2\>\<\psi_2| U^\dag\otimes \rho'_2)\\
    & = \digamma(U|\psi_1\>\<\psi_1| U^\dag,U|\psi_2\>\<\psi_2| U^\dag)\digamma(\rho'_1,\rho'_2)\\&\leq \digamma(U|\psi_1\>\<\psi_1| U^\dag,U|\psi_2\>\<\psi_2| U^\dag)\\
    & =\digamma(|\psi_1\>\<\psi_1|,|\psi_2\>\<\psi_2|).
\end{align*}
On the other hand, $$\digamma(\psi_1\otimes \theta, \psi_2\otimes \theta)=\digamma(\psi_1, \psi_2)\digamma(\theta,\theta)=\digamma(\psi_1, \psi_2).$$
Therefore, $\digamma(\rho'_1,\rho'_2)=1$, i.e., $\rho'_1=\rho'_2$. This means $\rho'$ is independent on $|\psi\>$.

\subsection{Proof of Corollary \ref{cor:AlwaysPassTest}}
Suppose Bob employs certain operators and measurements in one test round. Then all of Bob's actions are equal to the measurement with measurement operators $E_j$. Note these measurement operators  form the super-operator $\E=\sum E_j\circ E_j^\dag$ (performing a measurement without reading the outcomes is equivalent to performing a super-operator). By Theorem \ref{thm:strongPassMeasurement}, each measurement operator $E_j$ has the form $E_j= U\otimes M_j$. So, on $q_a$ and $q_d$, no measurements are performed, which means that Bob cannot read any information from $q_a$ and $q_d$.

\subsection{Proof of Lemma \ref{lem:RemoveingTest2}}
    In order to read information about specific $d$ from $D$, Bob needs to perform measurements on a state like $\sum_{c,j}\alpha_{c,j}|c\>|j\>|d_{j\oplus y}\>,$ instead of a state like $\sum_{c,j}\alpha_{c,j}|c\>|j\>|\vec{0}\>.$ But if Bob is honest or he does not construct a new state to send to Alice at some step, the state he gets at Step \ref{line:L2} is always the latter. So Bob has to first send a state to Alice at some step, and then perform measurements.

There are three different ways to attacks at Step \ref{line:L2} of Algorithm \ref{alg:ProtocolInner} here. The first two ways correspond to direct attacks (reading information) at Step \ref{line:L2}. Bob can first send a cheating state (for instance, $|c\>|j\>|\vec{0}\>$) back to Alice at Step \ref{line:L1} or Step \ref{line:L2}, and then perform measurements on the received state (for instance, $|c\>|j\>|d_{j\oplus y}\>$) at Step \ref{line:L2}. The last way is to do some preparation at Step \ref{line:L2} for a future attack. In this way, Bob sends a cheating state back to Alice at Step \ref{line:L2}, and then performs measurements on the received state at Step \ref{line:L1} (Note that sending a cheating state and performing measurements both at Step \ref{line:L1} is not related to the attacks at Step \ref{line:L2}.)
    Obviously, each of the above attacks can be divided into two parts: (1) sending state; and (2) performing measurements. All of the possible cases are listed in Table \ref{Table:RemovingT2}. We analyse these cases one by one.
    \begin{table}
    \center
        \begin{tabular}{c|cc}
            & Sending a cheating state & Performing measurements\\\hline
            1 & Step \ref{line:L1} & Step \ref{line:L2}\\
            2 & Step \ref{line:L2} & Step \ref{line:L2}\\
            3 & Step \ref{line:L2} & Step \ref{line:L1}
        \end{tabular}
        \caption{Three different ways related to one-round attacks at Step \ref{line:L2}.}\label{Table:RemovingT2}
    \end{table}

    Case 1. Bob's measurements at Step \ref{line:L2} will not be detected. But the part of sending a cheating state at Step \ref{line:L1} will be detected. For instance, if Bob sends state $|c\>|j\>|\vec{0}\>$ at Step \ref{line:L1}, it will be detected by procedure \ref{alg:TestACD1} with probability at least 0.5 (See Lemma \ref{lem:Attack1C}). If Bob sends any other state, it can be also detected by the results in Section \ref{Sec:PrivacyAliceAnalysis}.

    Case 2. This case cannot be detected, if there is no test corresponding to Step \ref{line:L2}. But Bob cannot read any information about some specific $d\in D$ from this kind of attacks. Indeed, from procedure \ref{alg:ProtocolLoop}, one sees that there are an even number of calls of database $U_D(y)$ on the cheating state from sending it to receiving it. So, if the cheating state is $\sum\beta_{c,j}|c\>|j\>|\vec{0}\>$, it becomes $\sum\beta'_{c,j}|c\>|j\>|\vec{0}\>$ when Bob receives it again. There is still no information about specific $d\in D$ on the data qubits. Therefore, this attack does not work.

    Case 3. Similar to Case 1, this attack will be detected at Step \ref{line:L1} after measurements.

\subsection{Proof of Lemma \ref{lem:TestStateBasis}}
    We first prove the following technical lemma.
    \begin{lemma}\label{lem:weakBasis}
        Suppose $B_m(\mu,\nu) = \{|\psi_{m,x,b}(\mu,\nu)\>:\forall x,b\}$. Then $B_m$ is an orthogonal basis of Hilbert space spanned by $\{|i\>|\mu\>,|i\>|\nu\>:i\in\{0,1\}^n\}$.
    \end{lemma}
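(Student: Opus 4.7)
The plan is to unpack the definition of $|\psi_{m,x,b}(\mu,\nu)\> = U_t(m,x,b)|0\>^{\otimes(n+k)}$ by applying the factors of $U_t$ one by one, and then to compute inner products directly. First I would apply $W\otimes I_d$ to get $\frac{1}{\sqrt{N}}\sum_i |i\>|0\>^{\otimes k}$. Applying $V(\mu,\nu)$ then writes $\mu$ into the data register when $i_0=0$ and $\nu$ when $i_0=1$, so the state becomes $\frac{1}{\sqrt{N}}\sum_i |i\>|\tau(i_0)\>$ with $\tau(0)=\mu$, $\tau(1)=\nu$. Applying $X_0^b$ flips $i_0$; defining $\tau_b(\cdot)$ by $\tau_0=\tau$ and $\tau_1(0)=\nu$, $\tau_1(1)=\mu$, and reindexing, we get $\frac{1}{\sqrt{N}}\sum_i |i\>|\tau_b(i_0)\>$. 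Then $Z(x)$ multiplies $|i\>$ by $(-1)^{x\cdot i}$, and $U_{\mathit{SWAP}(0,m)}$ permutes the address qubits. Substituting $j=\sigma_m(i)$ (where $\sigma_m$ swaps positions $0$ and $m$, so $i_0=j_m$ and $x\cdot i=\sigma_m(x)\cdot j$) yields the clean form
\begin{equation*}
|\psi_{m,x,b}(\mu,\nu)\>=\frac{1}{\sqrt{N}}\sum_j (-1)^{\sigma_m(x)\cdot j}\,|j\>|\tau_b(j_m)\>.
\end{equation*}

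From this closed form, step two is immediate: since $\tau_b(j_m)\in\{\mu,\nu\}$, every state in $B_m(\mu,\nu)$ lies in $V_{\mu,\nu}:=\spans\{|i\>|\mu\>,|i\>|\nu\>:i\in\{0,1\}^n\}$, whose dimension is $2N$. Also $|B_m(\mu,\nu)|=2^n\cdot 2=2N$ since $x$ ranges over $\{0,1\}^n$ and $b$ over $\{0,1\}$. So cardinality matches the dimension; it only remains to verify orthonormality.

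For the inner product of $|\psi_{m,x_1,b_1}\>$ and $|\psi_{m,x_2,b_2}\>$, the address-register overlap $\<j_1|j_2\>=\delta_{j_1,j_2}$ collapses the double sum to
\begin{equation*}
\<\psi_{m,x_1,b_1}|\psi_{m,x_2,b_2}\>=\frac{1}{N}\sum_j (-1)^{(\sigma_m(x_1)+\sigma_m(x_2))\cdot j}\<\tau_{b_1}(j_m)|\tau_{b_2}(j_m)\>.
\end{equation*}
If $b_1\ne b_2$, then $\{\tau_{b_1}(j_m),\tau_{b_2}(j_m)\}=\{\mu,\nu\}$ (which are orthogonal since $\mu\ne\nu$ are distinct computational-basis strings), so every term vanishes. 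If $b_1=b_2$, each overlap is $1$ and the sum is $\frac{1}{N}\sum_j (-1)^{\sigma_m(x_1+x_2)\cdot j}=\delta_{\sigma_m(x_1),\sigma_m(x_2)}=\delta_{x_1,x_2}$. Hence $B_m(\mu,\nu)$ is an orthonormal set of size $2N$ inside the $2N$-dimensional space $V_{\mu,\nu}$, and is therefore an orthonormal basis of it.

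The main obstacle, such as it is, is the bookkeeping in the first step: one has to track carefully how $X_0^b$ interacts with the data register, whose content was entangled to $i_0$ by $V(\mu,\nu)$, and then how $U_{\mathit{SWAP}(0,m)}$ reshuffles both the phase and the dependence of the data register on the address bits. Once the closed-form expression above is in hand, orthonormality is a one-line Fourier/Kronecker-delta computation and the conclusion is immediate.
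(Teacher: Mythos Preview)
Your proof is correct. The closed form you derive is right, and the inner-product computation goes through as stated.

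The paper takes a different route. Rather than computing for general $m$, it first observes (implicitly, via the outermost $U_{\mathit{SWAP}(0,m)}$ being unitary and preserving $V_{\mu,\nu}$) that it suffices to treat a single value of $m$, and then exhibits a tensor factorization $|\psi_{n-1,x,b}(\mu,\nu)\> = |\phi_{x_1,\ldots,x_{n-1}}\>\otimes|\beta_{x_0,b}\>$: the first factor ranges over the $|\pm\>^{\otimes(n-1)}$ basis of the first $n-1$ address qubits, and the second over four Bell-like states spanning the space $\spans\{|0\>|\mu\>,|1\>|\mu\>,|0\>|\nu\>,|1\>|\nu\>\}$. Orthogonality then follows from orthogonality of each tensor factor separately, with no character-sum computation needed. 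Your approach trades this structural decomposition for a direct Fourier/Kronecker-delta calculation; it is a bit more bookkeeping-heavy but is fully self-contained (you never invoke the reduction to a special $m$) and makes the role of $x$ as a character label completely explicit. Either argument is perfectly adequate for this lemma.
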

    \begin{proof}
    It is sufficient to prove that $B_{n-1}(\mu,\nu)$ is an orthogonal basis. Observe that any $|\psi_{n-1,x,b}(\mu,\nu)\>$ can be rewritten as $$|\psi_{n-1,x,b}(\mu,\nu)\>= |\phi_{x_1,\cdots,x_{n-1}}\>\otimes |\beta_{x_0,b}\>,$$
    where $|\phi_{x_1,\cdots,x_{n-1}}\> = Z^{x_{n-1}}\otimes Z^{x_1}\otimes Z^{x_2}\otimes \cdots Z^{x_{n-2}}|++\cdots +\>$, and $$|\beta_{x_0,b}\>=Z^{x_0}X^b\frac{1}{\sqrt{2}}(|0\>|\mu\>+|1\>|\nu\>).$$ Since $Z|+\>=|-\>$, all $|\phi_{x_1,\cdots,x_{n-1}}\>$ form an orthogonal basis of the first $n-1$ qubits. Meanwhile, four different states of $|\beta_{x_0,b}\>$ form an orthogonal basis of the other part. Thus,  $B_{n-1}(\mu,\nu)$ is an orthogonal basis of Hilbert space spanned by $\{|i\>|\mu\>,|i\>|\nu\>:i\in\{0,1\}^n\}$.
    \end{proof}

    Now we can prove Lemma \ref{lem:TestStateBasis}. Put $$\Psi(m,k) = \{|\psi_{m,x,b}(\mu,\nu)\>:\forall \mu<\nu\in\{0,1\}^k,\forall x\in\{0,1\}^n,\forall b\in\{0,1\}\}.$$ Obviously, for $m\neq m'$, the two sets $\Psi(m,k)$ and $\Psi(m',k)$ are disjoint to each other, i.e., $\Psi(m,k)\cap \Psi(m',k)=\emptyset$. So it is sufficient to prove that $\Psi(0,k)$ can be decomposed into the union of disjoint orthogonal bases:
    \begin{equation}\label{eq:strongBasis0k}
        \Psi(0,k) = C_1(0,k) \cup\cdots\cup C_{2^k-1}(0,k).
    \end{equation}
    Before continuing the proof, we define two concepts:
    \begin{itemize}
      \item a set $P=\{\lambda=\{\mu,\nu\}:\mu<\nu\in\{0,1\}^k\}$ is called a \textit{partition} of the set $\Lambda(k) = \{0,1\}^k$ if if for any $\lambda,\lambda'\in P$, $\lambda\cap \lambda'=\emptyset$ and $\bigcup_{\lambda\in P} \lambda= \Lambda(k).$
                 \item the set  generated by a couple $\lambda=\{\mu,\nu\}\in P$ is
          $$B(\lambda) = B_0(\mu,\nu)=\{|\psi_{0,x,b}(\mu,\nu)\>:\forall x\in\{0,1\}^n,\forall b\in\{0,1\}\}.$$
    \end{itemize}
    Since $\Lambda(k)$ has $2^k$ elements, there are always such partitions.

    We first prove that each partition corresponds to an orthogonal basis. By Lemma \ref{lem:weakBasis}, $C(\mu,\nu)$ is a orthogonal basis of the subspace spanned by $\{|\alpha\>|\mu\>,|\alpha\>|\nu\>: \forall |\alpha\>\in\hs_a\}$. This implies that the set $$B(P)=\bigcup_{\lambda\in P} B(\lambda)$$ is an orthogonal basis for any partition $P$.

    Observe that $\Psi(0,k)=\bigcup_{\lambda\in Q(k)} B(\lambda)$, where $Q(k)=\{\{\mu,\nu\}:\forall \mu<\nu\in\{0,1\}^k\}$. Therefore, Eq. \eqref{eq:strongBasis0k} can be derived by
    \begin{equation}\label{eq:strongBasisPartition}
        Q(k) = P_1\cup\cdots\cup P_{2^k-1},
    \end{equation}
    where $P_i\cap P_j=\emptyset$ for any $i\neq j$.
Thus, we prove Eq. \eqref{eq:strongBasisPartition} by induction on $k$.

    (1) For $k=1$, it is obvious, as $Q(1) = \{\{0,1\}\}$, i.e., it only contains one couple/set $\{0,1\}$.

    (2) Suppose Eq. \eqref{eq:strongBasisPartition} holds for $k=l$, i.e., $Q(l) = P_1\cup\cdots\cup P_{2^l-1}$. Then for $k=l+1$, we construct $P'_{i}$ in the following three cases:
    \begin{enumerate}
      \item $P'_1,\cdots,P'_u$, where $u=2^l$. In this case, $P'_i$ can be constructed as follows:
        \begin{equation}\label{eq:strongBasisPartition1}
            P'_i = \{\{\mu\cdot 0,\nu\cdot 0\},\{\mu\cdot 1,\nu\cdot 1\}:\forall \{\mu,\nu\}\in P_i\},
        \end{equation}
        where ``$\cdot$'' is the concatenation operator. For instance, if $\mu=01001\in\{0,1\}^5$, then $\mu\cdot 0 = 010010\in\{0,1\}^6$.
      \item $P'_{u+1},\cdots,P'_{2u}$ can be constructed as follows:
        \begin{equation}\label{eq:strongBasisPartition2}
            P'_i = \{\{\mu\cdot 0,\nu\cdot 1\},\{\mu\cdot 1,\nu\cdot 0\}:\forall \{\mu,\nu\}\in P_i\}.
        \end{equation}
      \item $P'_{v}$ with $v=2^{l+1}$ is
        \begin{equation}\label{eq:strongBasisPartition3}
            P'_v = \{\{\mu\cdot 0,\mu\cdot 1\}:\forall \mu\in\{0,1\}^l\}.
        \end{equation}
    \end{enumerate}
    After constructing such $P'_i$'s, it remains to prove that they satisfies Eq. \eqref{eq:strongBasisPartition}. First, we show that each $P'_i$ is a partition.
    \begin{itemize}
      \item Case 1. $i\leq u$: Suppose $\lambda_1=\{\mu'_1=\mu_1\cdot a_1,\nu'_1=\nu_1\cdot a_1\}\neq \lambda_2=\{\mu'_2=\mu_2\cdot a_2,\nu'_2=\nu_2\cdot a_2\}\in P'_i$, where $\mu_1,\nu_1,\mu_2,\nu_2\in\{0,1\}^l$, $a_1,a_2\in\{0,1\}$, and $\lambda_1\cap \lambda_2\neq \emptyset$. Since the join of $\lambda_1$ and $\lambda_2$ is nonempty, we have $a_1=a_2$. As a consequence, it holds that $\{\mu_1,\nu_1\}\cap\{\mu_2,\nu_2\}\neq\emptyset$. But this is not true because $P_i$ is a partition. A contradiction! So for any $\lambda_1\neq \lambda_2\in P'_i$, $\lambda_1\cap \lambda_2 = \emptyset$. Furthermore, as $|P'_i|=2|P_i|$ and $P_i$ is a partition of $\Lambda(l)$, we see that $P'_i$ is a partition of $\Lambda(l+1)$.
      \item Case 2. $u<i\leq 2u$: Similar to Case 1.
      \item Case 3. $i=v$: Obvious, as $P'_v$ can be written as $\{\{0,1\},\{2,3\},\cdots,\{2^{l+1}-2,2^{l+1}-1\}\}$.
    \end{itemize}

    Secondly, we prove that each couple $\{\mu'=\mu\cdot a,\nu'=\nu\cdot b\}$ belongs to one and only one $P'_i$, where $\mu'<\nu'$, $\mu,\ \nu\in\{0,1\}^l$ and $a,b\in\{0,1\}$. In fact, the existence of index $i=\textrm{Index}(\mu',\nu')$ for such set $P'_i$ can be verified as follows:
    \begin{enumerate}
      \item If $a = b$, then $\mu<\nu$, and $\textrm{Index}(\mu',\nu')\leq u$. By Eq. \eqref{eq:strongBasisPartition1}, we have $\textrm{Index}(\mu',\nu')=\textrm{Index}(\mu,\nu)$.
      \item If $a\neq b$ and $\mu\neq\nu$, then $\mu<\nu$, and $u<\textrm{Index}(\mu',\nu')\leq 2u$.  By Eq. \eqref{eq:strongBasisPartition2}, we have $\textrm{Index}(\mu',\nu')=\textrm{Index}(\mu,\nu)+u$.
      \item If $a\neq b$ and $\mu=\nu$, then $a=0$, $b=1$ and $\textrm{Index}(\mu',\nu')=2u+1=v$.
    \end{enumerate}

    The above two facts indicates that Eq. \eqref{eq:strongBasisPartition} holds for $k=l+1$. Consequently, by induction we complete the proof.

\subsection{Proof of Lemma \ref{lem:TestStateDis}}
    By Bayes' theorem, we obtain:
    \begin{align*}
        \Pr(m,x,b,\mu,\nu|M_v) &= \frac{\Pr(M_v|m,x,b,\mu,\nu)\Pr(m,x,b,\mu,\nu)}{\Pr(M_v)}\\ &= \frac{q\Pr(M_v|m,x,b,\mu,\nu)}{\Pr(M_v)},
    \end{align*}
    where we denote: $$q = \Pr(m,x,b,\mu,\nu) = \frac{1}{n2^{n+1}2^k(2^k-1)}.$$
    Now we only need to compute $\Pr(M_v)$ and $\Pr(M_v|m,x,b)$.
    By Lemma \ref{lem:TestStateBasis}, we have
    \begin{align*}
        \Pr(M_v) =& \sum_{l,y,a,\mu,\nu}\Pr(M_v|l,y,a,\mu,\nu)\Pr(l,y,a,\mu,\nu) = \sum q\Pr(M_v|l,y,a,\mu,\nu)\\
        =& \sum q \|M_v|\psi_{l,y,a}(\mu,\nu)\>\|^2 = \sum q \tr(M_v^\dag M_v|\psi_{l,y,a}(\mu,\nu)\>\<\psi_{l,y,a}(\mu,\nu)|)\\
        =& \sum_B q\tr(M_v^\dag M_v) = n(2^k-1) q\tr(M_v^\dag M_v),
    \end{align*}
    where $B$ ranges over all possible bases in Lemma \ref{lem:TestStateBasis}.
    Therefore,
    \begin{align*}
        \sum_{|\psi_{l,y,a}(\mu',\nu')\>\in B_0}\Pr(l,y,a,\mu,\nu|M_v)& = \sum\frac{q\Pr(M_v|l,y,a,\mu,\nu)}{\Pr(M_v)}\\ &= \frac{q\tr(M_v^\dag M_v)}{\Pr(M_v)}= \frac{1}{n(2^k-1)},
    \end{align*}
    where $B_0$ is the basis in Lemma \ref{lem:TestStateBasis} that contains $|\psi_{m,x,b}(\mu,\nu)\>$.
    Especially,
    \begin{equation*}
        \Pr(m,x,b,\mu,\nu|M_v)\leq\sum_{|\psi_{l,y,a}(\mu',\nu')\>\in B_0}\Pr(l,y,a,\mu,\nu|M_v) =  \frac{1}{n(2^k-1)}.
    \end{equation*}

\subsection{Proof of Theorem \ref{thm:Recovery}}

We first decompose $|\psi_{m,x,b}(\mu,\nu)\>$ into some other test states.
\begin{lemma}\label{lem:TestStateDec}
    For any $m\neq l\in \{0,\cdots,n-1\}$, $x\in\{0,1\}^n$, $b\in\{0,1\}$, and $\mu<\nu\in\{0,1\}^k$, we have:
    \begin{align}
        |\psi_{m,x,b}(\mu,\nu)\> =\ &(-1)^{bx_m}\frac{1}{2}(|\psi_{l,x',0}(\mu,\nu)\>+(-1)^{x_m}|\psi_{l,x',1}(\mu,\nu)\>\nonumber\\
        &+(-1)^{b}|\psi_{l,x'',0}(\mu,\nu)\>-(-1)^{b+x_m}|\psi_{l,x'',1}(\mu,\nu)\>),
    \end{align}
    where $x'=x_0x_1x_2\cdots x_{m-1}x'_{m}x_{m+1}x_{m+2}\cdots x_{l-1}x'_{l}x_{l+1}x_{l+2}\cdots x_{n-1}$ with $x'_m=x_m\oplus x_{l}$, $x'_l=0$, and $x''=x_0x_1x_2\cdots x_{m-1}x''_{m}x_{m+1}x_{m+2}\cdots x_{l-1}x''_{l}x_{l+1}x_{l+2}\cdots x_{n-1}$ with $x''_m=x_m\oplus x_{l}\oplus 1$, $x''_l=1$.
\end{lemma}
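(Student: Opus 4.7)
The statement is a concrete linear-algebra identity among specific pure states, so my plan is to verify it by direct expansion using the tensor-product structure of each $|\psi_{m,x,b}(\mu,\nu)\>$. First I would derive an explicit factored form by applying the operators in $U_t(m,x,b) = U_{\mathit{SWAP}(0,m)}Z(x)X_0^{b}V(\mu,\nu)(W\otimes I_d)$ step by step to $|0\>^{\otimes(n+k)}$. For $m \neq 0$ this yields
\[
|\psi_{m,x,b}(\mu,\nu)\> \;=\; Z^{x_m}|+\>_0 \,\otimes\, \bigotimes_{i\neq 0,m} Z^{x_i}|+\>_i \,\otimes\, B_b^{(m,d)},
\]
where $B_b = \frac{1}{\sqrt{2}}(|0\>|d(b)\> + (-1)^{x_0}|1\>|d(b\oplus 1)\>)$, $d(0) = \mu$, $d(1) = \nu$, and the last tensor factor is supported on qubit $m$ together with the data register. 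The boundary cases $m = 0$ or $l = 0$ are analogous, with the qubit indexed by $0$ absorbed into the Bell-like factor instead of appearing as a separate $|+\>$ factor.

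Next I would apply the same expansion to each of the four test states $|\psi_{l,x',0}\>, |\psi_{l,x',1}\>, |\psi_{l,x'',0}\>, |\psi_{l,x'',1}\>$ on the right-hand side. Since $x'$ and $x''$ agree with $x$ outside positions $m$ and $l$, all four states share the common product factor $\bigotimes_{i \notin \{0,m,l\}} Z^{x_i}|+\>_i$ on the qubits outside $\{0,m,l\}$, matching the corresponding factor of the left-hand side. The identity therefore reduces to a single equation in the $16$-dimensional subspace supported by qubits $0$, $m$, $l$, and the data register (the data register being effectively $2$-dimensional since it lives in $\spans\{|\mu\>,|\nu\>\}$).

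To finish the verification I would expand $B_b$ (and the analogous $A_a$ factor in the right-hand side states) in the $\{|+\>,|-\>\}$ basis on the entangled address qubit paired with the Bell basis $|\alpha_\pm\> = (|\mu\> \pm |\nu\>)/\sqrt{2}$ on the data. The key simplifications are that $A_0 + (-1)^{x_m}A_1$ and $A_0 - (-1)^{x_m}A_1$ collapse onto pure $|\alpha_+\>$ and $|\alpha_-\>$ components respectively, up to a $Z^{x_0}$ or $Z^{x_0\oplus 1}$ phase on qubit $l$, and the two address-product states carried by qubits $0$ and $m$ in the $x'$ versus $x''$ cases cleanly separate the $|\alpha_+\>$ and $|\alpha_-\>$ contributions. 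Matching coefficients in the resulting basis $\{|+\>,|-\>\}^{\otimes 3}\otimes\{|\alpha_\pm\>\}$ then closes the verification, with the overall prefactor $(-1)^{bx_m}$ arising from the phase swap between the $b$ index on the left and the structure of $B_b$ versus $A_a$ on the right.

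The main obstacle is purely bookkeeping: there are sixteen basis coefficients to match, and the boundary cases where one of $m$ or $l$ equals $0$ must be handled separately because $U_{\mathit{SWAP}(0,m)}$ becomes trivial there and the factored form changes. Organizing the check by first separating the $|\alpha_+\>$ and $|\alpha_-\>$ contributions keeps the computation tractable, since the right-hand side coefficients $(1, (-1)^{x_m}, (-1)^b, -(-1)^{b+x_m})$ produce a sum-plus-difference pattern that decouples the two Bell components of the data.
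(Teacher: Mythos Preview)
Your proposal is correct and essentially matches the paper's approach: a direct expansion of the test states, factoring out the common $Z^{x_i}|+\>$ tensor factors on qubits outside $\{0,m,l\}$ and then verifying the remaining identity on the small subspace carried by those three address qubits and the data register. The paper organizes the bookkeeping slightly differently---it first handles the special case $m=0$, $l=1$ via operator manipulations (starting from the two-qubit change of basis $|0\>|+\>=\tfrac12(|+\>|0\>+|+\>|1\>+|-\>|0\>+|-\>|1\>)$ and then pushing $U_{\mathit{SWAP}(0,1)}$, $Z(x)$, $X_0^b$ through), and afterwards remarks that general $m,l$ follow by relabeling qubits---whereas you factor the states into explicit tensor products up front and match coefficients in the Bell-like basis $\{|\alpha_\pm\>\}$; but the substance is the same direct computation.
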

\begin{proof}
      First, we observe:
    \begin{equation}\label{eq:0+}
    \begin{cases}
        |0\>|+\> = \frac{1}{2}(|+\>|0\>+|+\>|1\>+|-\>|0\>+|-\>|1\>)\\
        |1\>|+\> = \frac{1}{2}(|+\>|0\>+|+\>|1\>-|-\>|0\>-|-\>|1\>)
    \end{cases}.
    \end{equation}
    For any $x$, we can rewrite $|\psi_{0,x,0}(\mu,\nu)\> $ as
    \begin{align}
        |\psi_{0,x,0}(\mu,\nu)\> =& \frac{1}{\sqrt{2}}Z(x)(|0\>|+\>^{\otimes n-1}|\vec{0}\>+|1\>|+\>^{\otimes n-1}|\vec{1}\>)\\
        =& \frac{1}{\sqrt{2}}Z_1^{x_1}(|0\>|+\>|\varphi\>|\vec{0}\>+(-1)^{x_0}|1\>|+\>|\varphi\>|\vec{1}\>),\label{eq:psi0x0}
    \end{align}
    where $|\varphi\>=\otimes_{j=2}^{n-1}(Z^{x_{j}}|+\>)$, and $Z_j$ represents Pauli $Z$ gate on $j$-th address qubit. By Eq. \eqref{eq:0+}, it can be further decomposed into
    \begin{align*}
        |\psi_{0,x,0}(\mu,\nu)\> =\ & Z_1^{x_1}(\frac{1}{2\sqrt{2}}(|+\>|0\>+|+\>|1\>+|-\>|0\>+|-\>|1\>)|\varphi\>|\vec{0}\>\nonumber\\
        &+(-1)^{x_0}\frac{1}{2\sqrt{2}}(|+\>|0\>+|+\>|1\>-|-\>|0\>-|-\>|1\>)|\varphi\>|\vec{1}\>).
    \end{align*}
    We permmute it and get
    \begin{align}
        |\psi_{0,x,0}(\mu,\nu)\> =& \frac{1}{2\sqrt{2}}Z_1^{x_1}( |+\>|0\>|\varphi\>|\vec{0}\>+(-1)^{x_0}|+\>|1\>|\varphi\>|\vec{1}\>) \nonumber\\
        &+\frac{1}{2\sqrt{2}}Z_1^{x_1}(|+\>|1\>|\varphi\>|\vec{0}\>+(-1)^{x_0}|+\>|0\>|\varphi\>|\vec{1}\>) \nonumber\\
        &+\frac{1}{2\sqrt{2}}Z_1^{x_1}(|-\>|0\>|\varphi\>|\vec{0}\>-(-1)^{x_0}|-\>|1\>|\varphi\>|\vec{1}\>) \nonumber\\
        &+\frac{1}{2\sqrt{2}}Z_1^{x_1}(|-\>|1\>|\varphi\>|\vec{0}\>-(-1)^{x_0}|-\>|0\>|\varphi\>|\vec{1}\>).
    \end{align}
    Then apply $Z_1^{x_0}$ and get
    \begin{align}
        |\psi_{0,x,0}(\mu,\nu)\> =& \frac{1}{2\sqrt{2}}(|+\>|0\>|\varphi\>|\vec{0}\>+(-1)^{x_0+x_1}|+\>|1\>|\varphi\>|\vec{1}\>) \nonumber\\
        &+\frac{1}{2\sqrt{2}}(-1)^{x_0}((-1)^{x_0+x_1}|+\>|1\>|\varphi\>|\vec{0}\>+|+\>|0\>|\varphi\>|\vec{1}\>) \nonumber\\
        &+\frac{1}{2\sqrt{2}}(|-\>|0\>|\varphi\>|\vec{0}\>-(-1)^{x_0+x_1}|-\>|1\>|\varphi\>|\vec{1}\>) \nonumber\\
        &-\frac{1}{2\sqrt{2}}(-1)^{x_0}(-(-1)^{x_0+x_1}|-\>|1\>|\varphi\>|\vec{0}\>+|-\>|0\>|\varphi\>|\vec{1}\>).
    \end{align}
    Extracting $U_{SWAP(0,1)}$, we have:
    \begin{align}
        |\psi_{0,x,0}(\mu,\nu)\> =& \frac{1}{2}U_{SWAP(0,1)}(\frac{1}{\sqrt{2}}(|0\>|+\>|\varphi\>|\vec{0}\>+(-1)^{x_0+x_1}|1\>|+\>|\varphi\>|\vec{1}\>) \nonumber\\
        &+\frac{1}{\sqrt{2}}(-1)^{x_0}((-1)^{x_0+x_1}|1\>|+\>|\varphi\>|\vec{0}\>+|0\>|+\>|\varphi\>|\vec{1}\>) \nonumber\\
        &+\frac{1}{\sqrt{2}}(|0\>|-\>|\varphi\>|\vec{0}\>-(-1)^{x_0+x_1}|1\>|-\>|\varphi\>|\vec{1}\>) \nonumber\\
        &-\frac{1}{\sqrt{2}}(-1)^{x_0}(-(-1)^{x_0+x_1}|1\>|-\>|\varphi\>|\vec{0}\>+|0\>|-\>|\varphi\>|\vec{1}\>)).
    \end{align}
    By using
    \begin{equation}
        |\psi_{0,x,1}(\mu,\nu)\> = \frac{1}{\sqrt{2}}Z_1^{x_1}((-1)^{x_0}|1\>|+\>|\varphi\>|\vec{0}\>+|0\>|+\>|\varphi\>|\vec{1}\>),
    \end{equation}
    together with Eq. \eqref{eq:psi0x0}, we obtain:
    \begin{align}
        |\psi_{0,x,0}(\mu,\nu)\>
        =\ & \frac{1}{2}U_{SWAP(0,1)}(|\psi_{0,x',0}(\mu,\nu)\>+(-1)^{x_0}|\psi_{0,x',1}(\mu,\nu)\>\nonumber\\&+|\psi_{0,x'',0}(\mu,\nu)\>-(-1)^{x_0}|\psi_{0,x'',1}(\mu,\nu)\>)\nonumber\\
        =\ &\frac{1}{2}(|\psi_{1,x',0}(\mu,\nu)\>+(-1)^{x_0}|\psi_{1,x',1}(\mu,\nu)\>\nonumber\\&+|\psi_{1,x'',0}(\mu,\nu)\>-(-1)^{x_0}|\psi_{1,x'',1}(\mu,\nu)\>),
    \end{align}
    where $x'=x'_0x'_1x_2x_3\cdots x_{n-1}$ with $x'_0=x_0\oplus x_1$, $x'_1=0$, and $x''=x''_0x''_1x_2x_3\cdots x_{n-1}$ with $x''_0=x_0\oplus x_1\oplus 1$, $x''_1=1$.
Since $|\psi_{0,x,b}\> = (-1)^{bx_0}X_0^b|\psi_{0,x,0}\>$, we have:
    \begin{align}
        |\psi_{0,x,b}(\mu,\nu)\>
        =\ & (-1)^{bx_0}X_0^b\frac{1}{2}U_{SWAP(0,1)}\nonumber
        (|\psi_{0,x',0}(\mu,\nu)\>\\ &+(-1)^{x_0}|\psi_{0,x',1}(\mu,\nu)\>+|\psi_{0,x'',0}(\mu,\nu)\>-(-1)^{x_0}|\psi_{0,x'',1}(\mu,\nu)\>)\nonumber\\
        =\ &(-1)^{bx_0}\frac{1}{2}U_{SWAP(0,1)}X_1^b(|\psi_{0,x',0}(\mu,\nu)\>+(-1)^{x_0}|\psi_{0,x',1}(\mu,\nu)\>\nonumber\\        &+|\psi_{0,x'',0}(\mu,\nu)\>-(-1)^{x_0}|\psi_{0,x'',1}(\mu,\nu)\>)\nonumber\\
        =\ &(-1)^{bx_0}\frac{1}{2}U_{SWAP(0,1)}((-1)^{bx'_1}|\psi_{0,x',0}(\mu,\nu)\>+(-1)^{bx'_1}(-1)^{x_0}|\psi_{0,x',1}(\mu,\nu)\>\nonumber\\
        &+(-1)^{bx''_1}|\psi_{0,x'',0}(\mu,\nu)\>-(-1)^{bx''_1}(-1)^{x_0}|\psi_{0,x'',1}(\mu,\nu)\>)\nonumber\\
        =\ &(-1)^{bx_0}\frac{1}{2}U_{SWAP(0,1)}(|\psi_{0,x',0}(\mu,\nu)\>+(-1)^{x_0}|\psi_{0,x',1}(\mu,\nu)\>\nonumber\\
        &+(-1)^{b}|\psi_{0,x'',0}(\mu,\nu)\>-(-1)^{b+x_0}|\psi_{0,x'',1}(\mu,\nu)\>)\nonumber\\
        =\ &(-1)^{bx_0}\frac{1}{2}(|\psi_{1,x',0}(\mu,\nu)\>+(-1)^{x_0}|\psi_{1,x',1}(\mu,\nu)\>\nonumber\\
        &+(-1)^{b}|\psi_{1,x'',0}(\mu,\nu)\>-(-1)^{b+x_0}|\psi_{1,x'',1}(\mu,\nu)\>)\label{eq:psi0xbto1x*}
    \end{align}

    Now we briefly consider the general case where $|\psi_{l,x,b}(\mu,\nu)\>$ are used to decompose $|\psi_{m,x,b}(\mu,\nu)\>$. The only thing that we need to do is to replace the 1-st (resp. 0-th) qubit by the $l$-th (resp. $m$-th) qubit in the above proof.
\end{proof}

A different decomposition can be done when $m$ does not change.
\begin{lemma}\label{lem:TestStateDec2}
    For any $m\in \{0,\cdots,n-1\}$, $x\in\{0,1\}^n$, $b\in\{0,1\}$, and $\mu<\nu\in\{0,1\}^k$, $\omega\in\{0,1\}^k\setminus\{\mu,\nu\}$, we have:
    \begin{align}
        |\psi_{m,x,b}(\mu,\nu)\> =\ &\frac{1}{2}(|\psi_{m,x,b'}(\mu,\omega)\>+(-1)^b|\psi_{m,x',b'}(\mu,\omega)\>\nonumber\\
        &+|\psi_{m,x,b''}(\omega,\nu)\>-(-1)^b|\psi_{m,x',b''}(\omega,\nu)\>),
    \end{align}
    where
    \begin{itemize}
      \item $x'=x'_0x_1x_2\cdots x_{n-1}$ with $x'_0=x_0\oplus 1$,
      \item $b' = b\oplus b_1$ with $b_1=0$ if $\mu<\omega$,  and $b_1=1$ if $\mu>\omega$,
      \item $b'' = b\oplus b_2$ with $b_2 = 0$ if $\omega<\nu$, and $b_2 = 1$ if $\omega>\nu$,
      \item we denote $|\psi_{m,x,b'}(\mu,\omega)\>=|\psi_{m,x,b'}(\omega,\mu)\>$ if $\mu>\omega$. The same for $\nu$ and $\omega$.
    \end{itemize}
\end{lemma}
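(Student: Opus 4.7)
The plan is to reduce the stated identity to a single linear-algebraic identity on the two-qubit subspace spanned by $\{|b\>\otimes|\sigma\>:b\in\{0,1\},\sigma\in\{\mu,\nu,\omega\}\}$, and then verify that identity by a direct four-term expansion. The key observation is that all five test states appearing in the lemma share the same $m$ (so $U_{\mathit{SWAP}(0,m)}$ is common), share the values of $x_j$ for $j\neq 0$ (so $Z(x)$ acts identically on address qubits $1,\ldots,n-1$), and share the $|+\>^{\otimes n-1}$ produced by $W\otimes I_d$ on those same qubits. Hence by linearity it suffices to prove the identity after stripping all these common factors and working only with the ``core'' supported on the first address qubit and the data register.

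First I would isolate that core. After $V(\mu,\nu)(W\otimes I_d)|0\>^{\otimes(n+k)}$, the core is $\frac{1}{\sqrt{2}}(|0\>|\mu\>+|1\>|\nu\>)$ (using $\mu<\nu$), with only $Z_0^{x_0}X_0^b$ left to apply on it. Writing $\eta^{\pm}(\alpha,\beta)=\frac{1}{\sqrt{2}}(|0\>|\alpha\>\pm|1\>|\beta\>)$, a one-line calculation in the canonical case $\mu<\omega<\nu$, $b=0$, $x_0=0$ gives
\[
\frac{1}{2}\bigl(\eta^{+}(\mu,\omega)+\eta^{-}(\mu,\omega)+\eta^{+}(\omega,\nu)-\eta^{-}(\omega,\nu)\bigr)=\frac{1}{\sqrt{2}}(|0\>|\mu\>+|1\>|\nu\>),
\]
since the $|1\>|\omega\>$ terms cancel in the first pair and the $|0\>|\omega\>$ terms cancel in the second. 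The four states $\eta^{\pm}$ are precisely the cores of $|\psi_{m,\cdot,\cdot}(\cdot,\cdot)\>$ as $(b,x_0)$ ranges over $\{0,1\}^2$, so translating back by $Z_0^{x_0}X_0^b$ and reattaching the common factors recovers the four kets on the right-hand side of the lemma with $b'=b''=0$ and the stated choice $x'_0=x_0\oplus 1$.

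Finally I would extend to arbitrary $b$, $x_0$, and to the two remaining orderings $\omega<\mu$ and $\nu<\omega$. Changing $b$ or $x_0$ only applies $Z_0^{x_0}X_0^b$ uniformly to both sides of the core identity, which is precisely what produces the overall $(-1)^b$ factors in front of the $x'$-terms. When $\omega<\mu$ (respectively $\nu<\omega$), the notational convention $|\psi_{m,x,b'}(\mu,\omega)\>=|\psi_{m,x,b'}(\omega,\mu)\>$ forces the core template of that state to be built from $\frac{1}{\sqrt{2}}(|0\>|\omega\>+|1\>|\mu\>)$ rather than $\frac{1}{\sqrt{2}}(|0\>|\mu\>+|1\>|\omega\>)$; since the two differ by an $X_0$ on the first address qubit, one must flip the corresponding $b$-bit to compensate, and this is exactly what $b'=b\oplus b_1$ and $b''=b\oplus b_2$ encode. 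The main obstacle is therefore not mathematical but combinatorial: keeping track of the $\pm 1$ signs generated by $X_0^b$, $Z^{x_0}$, and the ordering-convention flip across all combinations of the three orderings of $\omega$ relative to $\mu,\nu$. Once this bookkeeping is laid out in a short table, the lemma follows from the single cancellation identity above.
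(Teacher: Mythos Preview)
Your proposal is correct and follows essentially the same route as the paper: both reduce to the single core identity $\tfrac{1}{\sqrt{2}}(|0\>|\mu\>+|1\>|\nu\>)=\tfrac{1}{2}\bigl(\eta^{+}(\mu,\omega)+\eta^{-}(\mu,\omega)+\eta^{+}(\omega,\nu)-\eta^{-}(\omega,\nu)\bigr)$ on the first address qubit and data register, then push $Z(x)X_0^b$ through using $X_0^bZ_0=(-1)^bZ_0X_0^b$ to generate the $(-1)^b$ signs and $x'_0=x_0\oplus 1$, with the ordering of $\omega$ relative to $\mu,\nu$ absorbed into the extra $X_0^{b_1},X_0^{b_2}$ flips. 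The only cosmetic difference is that the paper keeps the full $V(\mu,\nu)|+\>^{\otimes n}|\vec{0}\>$ throughout rather than explicitly stripping the common $U_{\mathit{SWAP}(0,m)}$, $Z$-on-qubits-$1,\ldots,n-1$, and $|+\>^{\otimes n-1}$ factors first.
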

\begin{proof}
    First, we observe:
    \begin{align*}
        \frac{|0\>|\mu\>+|1\>|\nu\>}{\sqrt{2}}=&\frac{1}{2}(\frac{|0\>|\mu\>+|1\>|\omega\>}{\sqrt{2}}+\frac{|0\>|\mu\>-|1\>|\omega\>}{\sqrt{2}}\\
        &+\frac{|0\>|\omega\>+|1\>|\nu\>}{\sqrt{2}}-\frac{|0\>|\omega\>-|1\>|\nu\>}{\sqrt{2}}).
    \end{align*}
    This directly leads to
    \begin{align*}
        V(\mu,\nu)|+\>^{\otimes n}|\vec{0}\> =& \frac{1}{2}(X_0^{b_1}V(\mu,\omega)+Z_0X_0^{b_1}V(\mu,\omega)\\
        &+ X_0^{b_2}V(\omega,\nu)-Z_0X_0^{b_2}V(\omega,\nu))|+\>^{\otimes n}|\vec{0}\>,
    \end{align*}
    where $b_1=0$ if $\mu<\omega$, $b_1=1$ if $\mu>\omega$, and $b_2 = 0$ if $\omega<\nu$, $b_2 = 1$ if $\omega>\nu$. Therefore,
    \begin{align*}
        Z(x)X_0^bV&(\mu,\nu)|+\>^{\otimes n}|\vec{0}\> \\
        =& \frac{1}{2}(Z(x)X_0^{b\oplus b_1}V(\mu,\omega)|+\>^{\otimes n}|\vec{0}\>+Z(x)X_0^bZ_0X_0^{b_1}V(\mu,\omega)|+\>^{\otimes n}|\vec{0}\>\\
        &+ Z(x)X_0^{b\oplus b_2}V(\omega,\nu)|+\>^{\otimes n}|\vec{0}\>-Z(x)X_0^bZ_0X_0^{b_2}V(\omega,\nu)|+\>^{\otimes n}|\vec{0}\>)\\
        =& \frac{1}{2}(Z(x)X_0^{b\oplus b_1}V(\mu,\omega)|+\>^{\otimes n}|\vec{0}\>+(-1)^bZ(x)Z_0X_0^{b\oplus b_1}V(\mu,\omega)|+\>^{\otimes n}|\vec{0}\>\\
        &+ Z(x)X_0^{b\oplus b_2}V(\omega,\nu)|+\>^{\otimes n}|\vec{0}\>-(-1)^bZ(x)Z_0X_0^{b\oplus b_2}V(\omega,\nu)|+\>^{\otimes n}|\vec{0}\>)
    \end{align*}
    and we complete the proof.
\end{proof}

%
%

    Now we are ready to prove Theorem \ref{thm:Recovery}. The control qibits $|c\>$ here can be ignored, since Bob can read $c$ by measurements without changing it. Suppose: (i) the correct test state is $|\psi_{m,x,b}(\mu,\nu)\>$; (ii) measurement operator $M_v$ is observed; and (iii) Bob sends the state $|\psi_{m',x',b'}(\mu',\nu')\>$ to Alice. Write $\Pr(|\psi_{m',x',b'}(\mu',\nu')\>\text{~passes~}|M_v)$ for  the probability that Bob success to pass the test in this case. First, we have:
    \begin{align*}
        \Pr(&|\psi_{m',x',b'}(\mu',\nu')\>\text{~passes~}|M_v)\\
        = &\sum_{m,x,b,\mu,\nu} \Pr(|\psi_{m',x',b'}(\mu',\nu')\>\text{~passes~}|m,x,b,\mu,\nu, M_v)\Pr(m,x,b,\mu,\nu| M_v)\\
        = &\sum_{m,x,b,\mu,\nu} \Pr(|\psi_{m',x',b'}(\mu',\nu')\>\text{~passes~}|m,x,b,\mu,\nu)\Pr(m,x,b,\mu,\nu| M_v),
    \end{align*}
    where the last equality is because of the following fact:
    \begin{itemize}
      \item once the original test state $|\psi_{m,x,b}(\mu,\nu)\>$ is fixed, Bob's success probability is independent of the measurement results and only dependent on what state he sends.
    \end{itemize}

    Now we compute the probability $\Pr(|\psi_{m',x',b'}(\mu',\nu')\>\text{~passes~}|m,x,b,\mu,\nu)$ that the correct test state is $|\psi_{m,x,b}(\mu,\nu)\>$, and Bob sends the state $|\psi_{m',x',b'}(\mu',\nu')\>$ to Alice:

   Case 1. $|\psi_{m',x',b'}(\mu',\nu')\>$ and $|\psi_{m,x,b}(\mu,\nu)\>$ are in the same basis of Lemma \ref{lem:TestStateBasis}. Then $\Pr(|\psi_{m',x',b'}(\mu',\nu')\>\text{~passes~}|m,x,b,\mu,\nu)\leq 1$.

    Case 2. $|\psi_{m',x',b'}(\mu',\nu')\>$ and $|\psi_{m,x,b}(\mu,\nu)\>$ are in different bases of Lemma \ref{lem:TestStateBasis}, and $m=m'$. There are two subcases:

    Subcase 2.1. Bob sends  $|\psi_{m',x',b'}(\mu',\nu')\>$  for both two test states. Then the best situation for Bob is $x'=x$, $b=b'$, and $\mu=\mu'$ without any loss of generality. (Another best situation is $x'=x$, $b=b'$, and $\nu=\nu'$.)  Then after $U_{SWAP(0,m)}$, $Z(x)$, $X_0^b$, $V(\mu,\nu)$ and $W$, the state $|\psi_{m',x',b'}(\mu',\nu')\>$ becomes
    \begin{equation*}
        \frac{1}{\sqrt{2}}(|+\>|0\>^{\otimes n+k-1}+|-\>|0\>^{\otimes n-1}|\nu\oplus\nu'\>).
    \end{equation*}
    Since $\nu\neq\nu'$ (Otherwise, it becomes Case 1), we have four different measurement outcomes:
    \begin{enumerate}
      \item $00\cdots 0$ on address and data qubits.
      \item $100\cdots 0$ on address and data qubits.
      \item $00\cdots 0$ on address qubits, and $\nu\oplus \nu'\neq 0\cdots 0$ on data qubits.
      \item $10\cdots 0$ on address qubits, and $\nu\oplus \nu'\neq 0\cdots 0$ on data qubits.
    \end{enumerate}
    Each of the four have probability 0.25. Then the situation that the two states pass the test only happens when both of the outcomes in Case 1 are observed, or both of the outcomes in Case 2 are observed. The corresponding probability is $\frac{1}{8}$.

   Case 2.2. Bob sends $|\psi_{m',x',b'}(\mu',\nu')\>$ for only one test state. Then at Step \ref{line:TTerminating}, no matter what test state Bob sends for the other one, the probability is at most 0.25 by the analysis of Case 2.1.

    Case 3.  $|\psi_{m',x',b'}(\mu',\nu')\>$ and $|\psi_{m,x,b}(\mu,\nu)\>$ are in different bases of Lemma \ref{lem:TestStateBasis}, and $m\neq m'$. Then there are also two subcases. The analysis is similar to Case 2, and the probability is at most 0.25.

Now by Lemma \ref{lem:TestStateBasis} and Lemma \ref{lem:TestStateDis}, we have:
    \begin{align*}
        \Pr(&|\psi_{m',x',b'}(\mu',\nu')\>\text{~passes~}|M_v)\\
        = &\sum_{m,x,b,\mu,\nu} \Pr(|\psi_{m',x',b'}(\mu',\nu')\>\text{~passes~}|m,x,b,\mu,\nu, M_v)\Pr(m,x,b,\mu,\nu| M_v)\\
        = &\sum_{m,x,b,\mu,\nu} \Pr(|\psi_{m',x',b'}(\mu',\nu')\>\text{~passes~}|m,x,b,\mu,\nu)\Pr(m,x,b,\mu,\nu| M_v)\\
        = & \sum_{B:|\psi'\>\in B} \Pr(|\psi_{m',x',b'}(\mu',\nu')\>\text{~passes~}|m,x,b,\mu,\nu)\Pr(m,x,b,\mu,\nu| M_v)\\
        &+\sum_{B:|\psi'\>\not\in B} \Pr(|\psi_{m',x',b'}(\mu',\nu')\>\text{~passes~}|m,x,b,\mu,\nu)\Pr(m,x,b,\mu,\nu| M_v)\\
        \leq & 1\times \frac{1}{K}+\frac{1}{4}\times \frac{K-1}{K} =\frac{1}{4}+\frac{3}{4n(2^k-1)},
    \end{align*}
    where $K=n(2^k-1)$ is the number of bases in Lemma \ref{lem:TestStateBasis}, $B:|\psi'\>\in B$ represents Case 1, and $B:|\psi'\>\not\in B$ represents Cases 2 and 3.

\subsection{Proof of Lemma \ref{lem:MultiRoundAttackImpossibility}}
    Suppose Alice send a test state $|\psi\>=\frac{1}{\sqrt{N}}\sum |i\>|\mu_i\>$ to Bob, and after Bob's action, Alice receives $|\psi'\>$. If $\mu_i\not\subseteq d$, then the states are the same, and Alice can not detect it. If $\mu_i\subseteq d$, then $$ |\<\psi'|\psi\>|^2 =(1-\frac{2}{N})^2=1-O(\frac{1}{N})$$ and it can be detected with probability at most $O(\frac{1}{N})$.

    Even if in one run of the inner protocol, Alice employs such tests many times, she still cannot detect it with a considerable probability. Suppose Bob's success probability is at least $1-\frac{a}{N}$, and Alice employs tests totally $b\sqrt{N}$ times, where $a,b$ are constants. Since there are at most $\frac{\pi}{4}\sqrt{N}$ iterations in an entire protocol, Bob successes to cheat in a run of the whole protocol with probability $(1-\frac{a}{N})^{b\sqrt{N}}\approx e^{-\frac{ab}{\sqrt{N}}}\approx 1$.

\section{Further Methods to Protect Bob's Privacy}\label{Apd:BobFurtherMethod}
Due to the limit of space, protection of Bob's privacy was only very briefly discussed in Section \ref{Sec:PrivacyBob}. Here, we continue to consider this issue.
If Alice is dishonest, one simple way for her to recover $f$ is as follows:
\begin{enumerate}
  \item For each loop $i$, Alice always employs two test rounds. Among them, one replaces the original computational round.
  \item Alice always chooses $\nu=\vec{1}$. Then $f_i(\mu)=1$ if and only if $f_i(\mu)=f_i(\nu)$.
  \item For the loops corresponding to the same control qubit, Alice chooses one fixed $\mu$, and gets $f_i(\mu)$ for all $i$. Then she can determine $f(\mu)$ by choosing the majority among these $f_i(\mu)$.
\end{enumerate}
This simple method can recover $f$ but with possible errors because a confusing qubit is added (see Definition \ref{defi:BobStrategy}). Alice can use some other methods to recover $f$ without errors. For instance, she can first get $f(\mu)$ for the same $\mu$ from the loops corresponding to three different control qubits. Since there is only one confusing qubit, two of these three values of $f(\mu)$ must be correct. So, she can get the correct value $f(\mu)$ for some $\mu$. Once she find some $\xi$ with $f(\xi)=0$, Alice may distinguish $h$ from $f$ as $h(\xi)=1\neq0=f(\xi)$, and then recover the entire $f$.

In remainder of this subsection, we present some further methods to preserve Bob's privacy.

\subsection{Adding a Second Confusing Qubit}
If Bob adds a second confusing qubit in his strategy (Definition \ref{defi:BobStrategy}), possibly only one quarter of functions $f_i$ corresponding to some control qubit may be $f$. Thus, Alice cannot get correct $f(\mu)$ by choosing the majority. The following are some sequences of the 16 functions corresponding to a control qubit:
\begin{align*}
    & f, f, a, h, a, h, f, f, g, g, b, h, b, g, g, h\\
    & g, g, a, h, a, g, g, h, f, f, b, h, b, h, f, f\\
    & f, h, g, a, h, a, g, f, g, b, h, b, g, h, f, f\\
    & h, h, f, f, a, a, h, a, a, h, f, f, h, b, h, b\\
    & h, h, f, f, g, g, h, g, g, h, f, f, h, f, h, f\\
    &\vdots
\end{align*}
In these sequences,
\begin{itemize}
  \item there are no fixed locations for $f$, and $f$ can be anywhere;
  \item $f$ can be either the minority or the majority;
  \item Moreover, it is impossible for Alice to get $f(\mu)$ by counting the number of ones or zeros for the value of $f_i(\mu)$, if we set $g=1-f$. This is because the number of ones or zeros can be any value from 4 to 12 when $g=1-f$. Since the distribution is symmetric, Alice cannot recover $f(\mu)$ by voting.
\end{itemize}
Therefore, by updating his strategy and carefully choosing the noises, Bob can prevent Alice from disclosing $f(\mu)$ through voting.

\subsection{Preventing Alice from Cheating in Computational Rounds}
Consider the case where there is only one control qubit, and the state is $|+\>\sum_j \alpha_j |j\>|0\>$. THen there is only one function $f$ to be applied.
\begin{itemize}
  \item If $f=h$, which corresponds to the identity operator $I_{a,d}$, then after the iteration $U_D(y)$, controlled $I_{a,d}$, $U_D(y)$ and controlled $\bar{G}$, the state becomes $$(|0\>\<0|\otimes I_{a,d}+|1\>\<1|\otimes \bar{G})|+\>\sum_j \alpha_j |j\>|0\>.$$
  \item If $f=\bar{h}=1-h$, which corresponds to the identity operator $-I_{a,d}$, then the state becomes
    \begin{align*}
        |+\>\sum_j \alpha_j |j\>|0\>
        \ra\ & |+\>\sum_j \alpha_j |j\>|d_{j\oplus y}\>\\
        \ra\ & \frac{1}{\sqrt{2}}(|0\>\otimes(I_{a,d}\sum_j \alpha_j |j\>|d_{j\oplus y}\>)+|1\>\otimes(-I_{a,d}\sum_j \alpha_j |j\>|d_{j\oplus y}\>))\\
        =\ &  |-\>\sum_j \alpha_j |j\>|d_{j\oplus y}\>
        \ra  |-\>\sum_j \alpha_j |j\>|0\>\\
        \ra\ & (|0\>\<0|\otimes I_{a,d}+|1\>\<1|\otimes \bar{G})|-\>\sum_j \alpha_j |j\>|0\>.
    \end{align*}
\end{itemize}
This fact means that Bob can control the final result by choosing his functions.
\begin{example}
    Suppose Bob wants to run Algorithm \ref{alg:ProtocolInner} with $T=8$ loops. Then there are three control qubits, denoted by $C_0$, $C_1$, $C_2$. Suppose Bob chooses his function as follows:
    \begin{itemize}
      \item $C_0$: $f'$, $h$, $f'$, $h$,
      \item $C_1$: $\bar{h}$, $h$,
      \item $C_2$: $\bar{h}$,
    \end{itemize}
    where $f'$ is an arbitrary function. After $8$ iterations at Step \ref{line:BMg}, the state becomes
    $$|+\>|-\>\otimes ((|0\>\<0|\otimes I_{a,d}+|1\>\<1|\otimes \bar{G})|-\>\sum_j \alpha_j |j\>|0\>|f(d_{j\oplus y})\>).$$
    Thus, if Bob undoes the operator $\bar{G}$ controlled by $C_2$, the control qubits becomes separable from the other part and the control state is $|+\>|-\>|-\>$.
\end{example}

In the above example, if Alice cheats in any computational round, the result becomes different. For instance, if Alice cheats on the computational round corresponding to $C_2$, then Bob's function $\bar{h}$ is applied on the cheating state but not the computational state. Thus, Alice has to guess what is the correct function to recover her cheating. Consequently, even if Alice knows that Bob randomly chooses $h$ (corresponding to result $|+\>$) or $\bar{h}$ (corresponding to result $|-\>$) in this step, it has probability 0.5 to be detected.
Therefore, Bob can use this method to detect Alice 's attacks with a high success probability.

\subsection{Restricting the Number of Alice's Tests}
After preventing Alice from cheating in the computational rounds, Bob can further reduce the chances that Alice can cheat. Note that there is at most one test round in each loop $i$, and this test round appears randomly with probability $2p$. Then by Chebyshev's inequality, we have:
\begin{equation*}
    \Pr(|n_t-2pT|\geq 6pT)\leq \frac{2p(1-2p)T}{(6pT)^2}=\frac{1-2p}{18pT},
\end{equation*}
where $n_t$ is the number of test rounds. In this paper, we usually set $p=0.05$, and $T$ should be $400/\sqrt{s_{\min}}$ as a second confusing qubit is added. So, if $s_{\min}=0.2$, $T$ should be 1024. By the above inequality, the probability that there are at least $0.4T$ test rounds in one run of Algorithm \ref{alg:ProtocolInner} is no more than $\frac{0.9}{0.95*1024}<0.001$.
Thus, Bob can count the number of test rounds in one run of Algorithm \ref{alg:ProtocolInner}. If it exceeds $0.4T \approx 410$, he may terminates the current run. The false positive probability is less than 0.001. This probability is tolerable, if Algorithm \ref{alg:ProtocolInner} is executed once. If the algorithm is executed $M$ times, this probability will be enlarged. For instance, if $M=100$, the total false positive probability may be approximately 0.095. Fortunately, it is still easy to deal with this false positive.
\begin{itemize}
  \item $M$ is small, say 100. For the first time the number of test rounds exceeds $0.4T$, Bob simply terminates the current run and start a new run. He announces that Alice is cheating by setting more test rounds when this situation happens twice. Then the total false positive probability is smaller than 0.005.
  \item $M$ is big, say $M>1000$. Bob can announce Alice's dishonesty when this excess happens $0.02M$. Then the total false positive probability is smaller than 0.0025.
\end{itemize}

\subsubsection{An Alternative Method}
Another way is to restrict the number of Alice's test rounds in a row. For instance, if test rounds are employed in six sequential loops, Bob terminates the algorithm. The false positive probability here is less than $1-(1-0.1^6)^{1024}<0.0011$.

\subsection{Summary}
Thus Bob can considerably reduce his privacy leakage
\begin{itemize}
  \item by adding a second confusing qubit and carefully choosing noise functions (In this way, Bob can make it impossible for Alice to recover any $f(\mu)$ by voting).
  \item by adding tests and counting the number of Alice's test rounds (In this way, Bob can further reduce the amount of information that Alice can get).
\end{itemize}

\begin{remark}
    It is worth noting that the above methods were not included in Algorithm \ref{alg:ProtocolInner}. If Bob directly use these methods, he might be treated as a dishonest data user. So, in order to make these methods work, Algorithm \ref{alg:ProtocolInner} should be modified.
\end{remark}

\section{More Discussions}

\subsection{Alice's Strategy to Detect Attack in Example \ref{exam:MRAttack1}}\label{Apd:Attackx=d}
An question was left open in Example \ref{exam:MRAttack1}: how Alice can detect an attack? Since the attack there is not very serious, here we only give an example rather than a formal protocol to deal with it.

If in one test, Alice employs $|\psi_{m,x,b}(\mu,d)\>$ (suppose $\mu<d$) as the test state with probability $O(\frac{1}{2^k})$, then she will find $f(d)\neq f(\mu)$  after receiving the returned test state. Thus,  she can flip one 0 to 1 in $d$  and obtain $d'$. She further employs $|\psi_{m,x,b}(\mu,d')\>$ as another test state. Since $f(x) = \delta(x,d)$, the second test state leads to $f(d')= f(\mu)$, which is not a result of any function indicating an inclusion relation $\subseteq$. So, the attack is detected. But such a detection does not really work in practice since its probability is $O(\frac{1}{2^k})$ and extremely low, as $d$ should be chosen randomly. This detection strategy was not included in the original protocol, since Alice changes $d$ for the second test state.

\subsection{Bob's Privacy without Noise and Tests}\label{Apd:PrivacyBob}
Bob's privacy was considered in Section \ref{Sec:PrivacyBob} with the assumption that he can adds noise and employs tests to protect himself.
Here, we further analyse Bob's privacy in the case where he is not allowed to add noise or to use any test.

Case 1. Alice is honest: All the information she gets about Bob's function $f$ is whether $f(\mu)=f(\nu)$ in the test rounds. Since she is honest, she will not use this information to compute the detailed form of $f$. So, Bob's privacy is preserved.

Case 2. Alice is semi-honest: She will employ all the legally derived information of $f(\mu)?=f(\nu)$ to compute $f$. We have showed in Section \ref{Sec:PrivacyBob} that in the best case for Alice, $k$ couples of $(\mu,\nu)$ are sufficient to recover $f$. On the other hand, in Algorithm \ref{alg:ProtocolInner}, the expected number of test states is $4(T+1)p$.
So, if the following condition is satisfied:
\begin{equation}\label{eq:BobsPrivacyCondition}
    4(T+1)p<k,
\end{equation}
then Alice cannot determine $f$ with certainty. Note that inequality (\ref{eq:BobsPrivacyCondition}) is true for a big database.
In fact, in a big database, since $N$ and $k$ may be very large, $s_{\min}$ must not be too small. This is because small $s_{\min}$ may result in too many rules mined by Bob, and these rules have a low support and thus are not important in practice \cite{AgrawalIS1993}. This problem is serious for a big database. So, Eq. \eqref{eq:BobsPrivacyCondition} is satisfied very likely.
For example, if $k=80$, $s_{\min} = 0.2$ and $p=0.05$, we have that $T=256$ ($>100/\sqrt{s_{\min}}$)  and then $4(T+1)p=51.4<80=k$.

One thing worth to mention is that the above analysis is just ideal. The pair $(\mu,\nu)$ is generated randomly. So, such random $k$ pairs may provide repetitive information, and the right side of Eq. \eqref{eq:BobsPrivacyCondition} is then much larger in practice.

Case 3. Alice is dishonest: She may generate $(\mu,\nu)$ by her strategy without randomness. Even further, she can make each round as a test round. Therefore, Alice can get explicitly $f$ more likely, as $k$ rounds are sufficient.

\end{document}